\numberwithin{equation}{section}
\newtheoremstyle{mythm}
  {}
  {}
  {\itshape}
  {}
  {\bfseries}
  {.}
  {.5em}
  {\thmname{#1}~\thmnumber{#2}\ifthenelse{\equal{\thmnote{#3}}{}}{}{~(\thmnote{#3})}}
\newtheoremstyle{mydefn}
  {}
  {}
  {\upshape}
  {}
  {\bfseries}
  {.}
  {.5em}
  {\thmname{#1}~\thmnumber{#2}\ifthenelse{\equal{\thmnote{#3}}{}}{}{~(\thmnote{#3})}}
\theoremstyle{mythm}
\newtheorem{theo}{Theorem}
\newtheorem{lem}[theo]{Lemma}
\newtheorem{prop}[theo]{Proposition}
\newtheorem{cor}[theo]{Corollary}
\newtheorem{fact}[theo]{Fact}
\theoremstyle{mydefn}
\theoremstyle{remark}
\newcommand{\ceil}[1]{\left\lceil#1\right\rceil}
\newcommand{\floor}[1]{\left\lfloor#1\right\rfloor}
\renewcommand{\phi}{\varphi}
\newcommand{\dom}{\operatorname{dom}}
\newcommand{\tup}{\operatorname{tup}}
\newcommand{\Exp}{\operatorname{E}}
\newcommand{\Var}{\operatorname{V}}
\newcommand{\Prob}{\operatorname{Pr}}
\newcommand{\density}{\operatorname{\delta}}
\newcommand{\maxdensity}{\operatorname{\overline{\delta}}}
\newenvironment{example}{\bigskip\noindent\textit{Example}:\ }{\bigskip}
\begin{document}
\title{
{\bf Size bounds and query plans for relational joins}\footnote{A
  preliminary version of this paper appeared under the same title in
  the Proceedings of 49th IEEE Symposium on Foundations of Computer
  Science (FOCS), pp. 739-748, 2008.}}

\author{ Albert Atserias\footnote{First author partially supported by
    CYCIT TIN2010-20967-C04-05 (TASSAT).} \\ Universitat Polit\`ecnica
  de Catalunya \\ Barcelona, Spain \and Martin Grohe \\ Humboldt
  Universit\"at zu Berlin \\ Berlin, Germany \and D\'aniel Marx\footnote{Third author supported by the European Research Council (ERC)  grant 280152.}
  \\ Computer and Automation Research Institute,\\ Hungarian Academy of Sciences
(MTA SZTAKI),\\ Budapest, Hungary. } \maketitle

\begin{abstract}
  Relational joins are at the core of relational algebra, which in turn is the
  core of the standard database query language SQL. As their evaluation is
  expensive and very often dominated by the output size, it is an important
  task for database query optimisers to compute estimates on the size of joins
  and to find good execution plans for sequences of joins. We study these
  problems from a theoretical perspective, both in the worst-case model, and
  in an average-case model where the database is chosen according to a known
  probability distribution. In the former case, our first key observation is
  that the worst-case size of a query is characterised by the fractional edge
  cover number of its underlying hypergraph, a combinatorial parameter
  previously known to provide an upper bound.  We complete the picture by
  proving a matching lower bound, and by showing that there exist queries for
  which the join-project plan suggested by the fractional edge cover approach
  may be substantially better than any join plan that does not use
  intermediate projections.
  On the other hand, we show that in the
  average-case model, every join-project plan can be turned into a
  plan containing no projections in such a way that the expected time
  to evaluate the plan increases only by a constant factor
  independent of the
  size of the database. Not surprisingly, the key combinatorial
  parameter in this context is the maximum density of the underlying 
  hypergraph. We show how to make effective use of this parameter to
  eliminate the projections.
\end{abstract}





\section{Introduction}
\label{sec:intro}

The join operation is one of the core operations of
relational algebra, which in turn is the core of the
standard database query language SQL. The two key components
of a database system executing SQL-queries are the query
optimiser and the execution engine. The optimiser translates
the query into several possible execution plans, which are
basically terms of the relational algebra (also called
operator trees) arranging the operations that have to be
carried out in a tree-like order. Using statistical
information about the data, the optimiser estimates the
execution cost of the different plans and passes the best one
on to the execution engine, which then executes the plan and
computes the result of the query. See~\cite{chau98} for a
survey of query optimisation techniques.

Among the relational algebra operations, joins are usually
the most costly, simply because a join of two relations,
just like a Cartesian product of two sets, may be much
larger than the relations. Therefore, query optimisers pay
particular attention to the execution of joins, especially
to the execution order of sequences of joins, and to
estimating the size of joins. In this paper, we address the
very fundamental questions of how to estimate the size of a
sequence of joins and how to execute the sequence best from
a theoretical point of view. While these questions have been
intensely studied in practice, and numerous heuristics and
efficiently solvable special cases are known (see, e.g.,
\cite{chau98,gra93,garullwid99}), the very basic theoretical
results we present here and their consequences apparently
have not been noticed so far. Our key starting observation is that
the size of a sequence 
of joins is tightly linked to two combinatorial 
parameters of the underlying database schema,
the \emph{fractional edge cover number}, and 
the \emph{maximum density}.

To make this precise, we need to get a bit more technical: A
\emph{join query} $Q$ is an expression of the form
\begin{equation}
R_1(a_{11},\ldots,a_{1r_1})\bowtie\cdots\bowtie R_m(a_{m1},\ldots,a_{mr_m}),
\label{eqn:joinquery}
\end{equation}
where the $R_i$ are \emph{relation names} with \emph{attributes}
$a_{i1},\ldots,a_{ir_i}$. Let $A$ be the set of all attributes
occurring in $Q$ and $n = |A|$.  A \emph{database instance} $D$ for
$Q$ consists of relations $R_i(D)$ of arity $r_i$. It is common to
think of the relation $R_i(D)$ as a table whose columns are labelled
by the attributes $a_{i1},\ldots,a_{ir_i}$ and whose rows are the
tuples in the relation. The \emph{answer}, or \emph{set of
  solutions}, of the query $Q$ in $D$ is the $n$-ary relation $Q(D)$
with attributes $A$ consisting of all tuples $t$ whose projection on
the attributes of $R_i$ belongs to the relation $R_i(D)$, for all
$i$. Hence we are considering \emph{natural joins} here (all of our
results can easily be transferred to \emph{equi-joins}, but not to
general \emph{$\theta$-joins}).  Now the most basic question is how
large $Q(D)$ can get in terms of the size of the database $|D|$, or
more generally, in terms of the sizes of the relations $R_i$. We
address this question both in the worst case and the average case, and
also subject to various constraints imposed on $D$.

\begin{example}
At this point a simple example would probably help to understand what
we are after. Let $R(a,b)$, $S(b,c)$ and $T(c,a)$ be three relations
on the attributes $a$, $b$ and $c$.  Consider the join query
\[
Q(a,b,c) := R(a,b) \Join S(b,c) \Join T(c,a).
\] 
The answer of $Q$ is precisely the set of triples $(u,v,w)$ such that
$(u,v) \in R$, $(v,w) \in S$ and $(w,u) \in T$. How large can the
answer size of $Q$ get as a function of $|R|$, $|S|$ and $|T|$? First
note that a trivial upper bound is $|R| \cdot |S| \cdot |T|$.  However
one quickly notices that an improved bound can be derived from the
fact that the relations in $Q$ have overlapping sets of
attributes. Indeed, since any solution for any pair of relations in
$Q$ determines the solution for the third, the answer size of $Q$ is
bounded by $\min\{ |R| \cdot |S|, |S| \cdot |T|, |T| \cdot |R|
\}$. Now, is this the best general upper bound we can get as a
function of $|R|$, $|S|$ and $|T|$? As it turns out, it is
not. Although not obvious, it will follow from the results in this
paper that the optimal upper bound in this case is $\sqrt{|R| \cdot
  |S| \cdot |T|}$: the answer size of $Q$ is always bounded by this
quantity, and for certain choices of the relations $R$, $S$, $T$, this
upper bound is achieved.
\end{example}

Besides estimating the answer size of join queries, we also study how
to exploit this information to actually compute the query.  An
\emph{execution plan} for a join query describes how to carry out the
evaluation of the query by simple operations of the relational algebra
such as joins of two relations or projections. The obvious execution
plans for a join query break up the sequence of joins into pairwise
joins and arrange these in a tree-like fashion. We call such execution
plans \emph{join plans}. As described in \cite{chau98}, most practical
query engines simply arrange the joins in some linear (and not even a
tree-like) order and then evaluate them in this order.  However, it is
also possible to use other operations, in particular projections, in
an execution plan for a join query. We call execution plans that use
joins and projections \emph{join-project plans}. It is one of our main
results that, even though projections are not necessary to evaluate
join queries, their use may speed up the evaluation of a query
super-polynomially.

\subsection*{Fractional covers, worst-case size, and join-project plans}
Recall that an \emph{edge cover} of a hypergraph $H$ is a set $C$ of
edges of $H$ such that each vertex is contained in at least one edge
in $C$, and the \emph{edge cover number} $\rho(H)$ of $H$ is the
minimum size among all edge covers of $H$. A \emph{fractional edge
  cover} of $H$ is a feasible solution for the linear programming
relaxation of the natural integer linear program describing edge
covers, and the \emph{fractional edge cover number} $\rho^*(H)$ of $H$
is the cost of an optimal solution. With a join query $Q$ of the form
(\ref{eqn:joinquery}) we can associate a hypergraph $H(Q)$ whose
vertex set is the set of all attributes of $Q$ and whose edges are the
attribute sets of the relations $R_i$. The \emph{(fractional) edge
  cover number} of $Q$ is defined by $\rho(Q)=\rho(H(Q))$ and
$\rho^*(Q)=\rho^*(H(Q))$.  Note that in the example of the previous
paragraph, the hypergraph $H(Q)$ is a triangle. Therefore in that case
$\rho(Q) = 2$ while it can be seen that $\rho^*(Q) = 3/2$.

An often observed fact about edge covers is that, for every given
database $D$, the size of $Q(D)$ is bounded by $|D|^{\rho(Q)}$, where
$|D|$ is the total number of tuples in $D$.  Much less obvious is the
fact that the size of $Q(D)$ can actually be bounded by
$|D|^{\rho^*(Q)}$, as proved by the second and third author
\cite{gromar06} in the context (and the language) of constraint
satisfaction problems. This is a consequence to Shearer's
Lemma~\cite{MR859293}, which is a combinatorial consequence of the
submodularity of the entropy function, and is closely related to a
result due to Friedgut and Kahn~\cite{frikah98} on the number of
copies of a hypergraph in another.  Our first and most basic
observation is that the fractional edge cover number $\rho^*(Q)$ also
provides a lower bound to the worst-case answer size: we show that for
every $Q$, there exist arbitrarily large databases $D$ for which the
size of $Q(D)$ is at least $(|D|/|Q|)^{\rho^*(Q)}$.  The proof is a
simple application of linear programming duality.
%
%
Another result from \cite{gromar06} implies that for every join query
there is a join-project plan, which can easily be obtained from the query and
certainly be computed in polynomial time, that computes $Q(D)$ in time
$O(|Q|^2\cdot|D|^{\rho^*(Q)+1})$. Our lower bound shows that this is optimal up
to a polynomial factor (of $|Q|^{2+\rho^*(Q)}\cdot|D|$, to be precise). In
particular, we get the following equivalences giving an exact combinatorial
characterisation of all classes of join queries that have polynomial size
answers and can be evaluated in polynomial time.

\begin{theo} \label{theo:intro}
  Let $\mathcal Q$ be a class of join queries. Then the following statements
  are equivalent:
  \vspace{-0.0ex}
  \begin{enumerate} \itemsep=0pt\parsep=0pt
  \item Queries in $\mathcal Q$ have answers of polynomial size.
  \item Queries in $\mathcal Q$ can be evaluated in polynomial time.
  \item Queries in $\mathcal Q$ can be evaluated in
  polynomial time by an explicit join-project plan.
  \item $\mathcal Q$ has bounded fractional edge cover number.
  \end{enumerate}
\end{theo}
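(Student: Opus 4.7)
The plan is to prove the four-way equivalence by cycling through the implications $(4) \Rightarrow (3) \Rightarrow (2) \Rightarrow (1) \Rightarrow (4)$. The two easy links are $(3) \Rightarrow (2)$ and $(2) \Rightarrow (1)$: an explicit polynomial-time join-project plan is, by definition, a polynomial-time evaluation procedure, and any algorithm running in polynomial time produces output of polynomial size.

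For $(4) \Rightarrow (3)$, I will invoke the result recalled just above from \cite{gromar06}: from $Q$ one can compute in polynomial time an explicit join-project plan that evaluates $Q(D)$ in time $O(|Q|^2 \cdot |D|^{\rho^*(Q)+1})$. If $\mathcal Q$ has bounded fractional edge cover number, say $\rho^*(Q) \le c$ for all $Q \in \mathcal Q$, then the runtime of this plan is bounded by the fixed polynomial $|Q|^2 \cdot |D|^{c+1}$, which is exactly what (3) demands.

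For the key direction $(1) \Rightarrow (4)$, I will argue by contrapositive using the LP-duality lower bound announced above, namely that for every $Q$ there exist arbitrarily large databases $D$ with $|Q(D)| \ge (|D|/|Q|)^{\rho^*(Q)}$. If $\mathcal Q$ has unbounded fractional edge cover number, then for every candidate polynomial bound $p$ of degree $d$, I will pick $Q \in \mathcal Q$ with $\rho^*(Q)$ much larger than $d$, and then a witness database $D$ with $|D|$ large compared to $|Q|$ (e.g.\ $|D| \ge |Q|^2$), so that $(|D|/|Q|)^{\rho^*(Q)}$ eventually dominates $p(|Q|+|D|)$ and contradicts (1).

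The bulk of the work has in fact already been absorbed by the two ingredients highlighted in the preceding discussion: the new LP-duality lower bound that matches the Shearer-style upper bound, and the constructive join-project plan of \cite{gromar06}. Given these, the theorem is a matter of assembling the implications, and the only quantitative care needed is in the $(1) \Rightarrow (4)$ step, where the main obstacle is to choose $|D|$ large enough relative to $|Q|$ so that the factor $|Q|^{-\rho^*(Q)}$ in the lower bound does not swamp the $|D|^{\rho^*(Q)}$ growth; balancing these two is what forces $\rho^*$ to be bounded on $\mathcal Q$.
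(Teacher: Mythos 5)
Your proposal is correct and follows essentially the same route as the paper: the trivial implications $(3)\Rightarrow(2)\Rightarrow(1)$, the join-project plan of \cite{gromar06} (Theorem~\ref{theo:alg}) for $(4)\Rightarrow(3)$, and the LP-duality lower bound of Lemma~\ref{lem:lower} for $(1)\Rightarrow(4)$. The only cosmetic difference is that the paper proves $(1)\Leftrightarrow(4)$ directly (using Lemma~\ref{lem:gm} for the converse and letting $|D|\to\infty$ with $Q$ fixed to force $\rho^*(Q)\le c$), whereas you close the cycle through $(3)$ and $(2)$ and phrase $(1)\Rightarrow(4)$ contrapositively; both versions of the quantitative step are sound.
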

Note that it is not even obvious that the first two
statements are equivalent, that is, that for every class of
queries with polynomial size answers there is a polynomial
time evaluation algorithm (the converse, of course, is
trivial).

Hence with regard to worst-case complexity, join-project plans are
optimal (up to a polynomial factor) for the evaluation of join
queries. Our next result is that join plans are not: We prove that
there are arbitrarily large join queries $Q$ and database instances
$D$ such that our generic join-project plan computes $Q(D)$ in at most
cubic time, whereas any join plan requires time $|D|^{\Omega(\log
  |Q|)}$ to compute $Q(D)$. We also observe that this bound is tight,
i.e., the ratio of the exponents between the best join plan and the
best join-project plan is at most logarithmic in $|Q|$. Hence
incorporating projections into a query plan may lead to a
superpolynomial speed-up even if the projections are completely
irrelevant for the query answer.

\subsection*{Maximum density, average-case size, and join plans}
Consider the model $\mathcal D(N,(p_R))$ of random databases where the tuples
in each relation $R$ are chosen randomly and independently with probability
$p_R = p_R(N)$ from a domain of size $N$. This is the analogue of the
Erd\H os-R\'enyi model of random graphs adapted to our context. It is easy to
see that, for $D$ from $\mathcal D(N,(p_R))$, the expected size of the query
answer $Q(D)$ is $N^n\cdot\prod_R p_R$, where $n$ is the number of attributes
and the product ranges over all relation names $R$ in $Q$. The question is
whether $|Q(D)|$ will be concentrated around the expected value.
 This is governed by the \emph{maximum density}
$\maxdensity(Q,(p_R))$ of the query, a combinatorial parameter depending on
the hypergraph of the query and the probabilities $p_R$.  An application of the
second moment method shows that if $\maxdensity=\log N-\omega(1)$, then
$|Q(D)|$ is concentrated around its expected value, and if $\maxdensity=\log
N+\omega(1)$, then $|Q(D)|=0$ almost surely.  We observe that the maximum
density $\maxdensity$ can be computed in polynomial time using max-flow min-cut
techniques. 

In view of the results about the worst-case, it is a
natural question whether join-project plans are more powerful than join plans
in the average case setting as well.  It turns out that this is not the case:
We show that every join-project plan $\phi$ for $Q$ can be turned into a join
plan $\phi'$ for which the expected execution time increases only by a
constant factor independent of the database. This may be viewed as our main
technical result.  The transformation of
$\phi'$ into $\phi$ depends on a careful balance between delaying certain
joins in order to reduce the number of attributes considered in each subquery
occurring in the plan and keeping as many joins as possible in order to
increase the density of the subquery. The choice of which subqueries to delay
and which to keep is governed by a certain submodular function related to the
density of the subqueries.


\subsection*{Size and integrity constraints}
So far, we considered worst-case bounds which make no assumptions on
the database, and average-case bounds which assume a known
distribution on the database. However, practical query optimisers
usually exploit additional information about the databases when
computing their size estimates. We consider the simplest such setting
where the sizes of the relations are known (called histograms in the
database literature), and we want to get a (worst case) estimate on
the size of $Q(D)$ subject to the constraint that the relations in $D$
have the given sizes.

By suitably modifying the
objective function of the linear program for edge covers, we obtain
results analogous to those obtained for the unconstrained setting. A
notable difference between the two results is that here the gap
between upper and lower bound becomes $2^{-n}$, where $n$ is the
number of attributes, instead of $|Q|^{-\rho^*}$. We give an example
showing that the gap between upper and lower bound is essentially
tight. However, this is not an inadequacy of our approach through
fractional edge covers, but due to the inherent complexity of the
problem: by a reduction from the maximum independent-set problem on
graphs, we show that, unless $\textup{NP}=\textup{ZPP}$, there is no
polynomial time algorithm that approximates the worst case answer size
$|Q(D)|$ for given $Q$ and relation sizes $N_R$ by a
better-than-exponential factor.  


Besides the actual sizes of the relations, one could consider other
pieces of information that are relevant for estimating the answer size
of a query, such as functional dependencies or other integrity
constraints that the databases may be specified to satisfy. For
example, if an attribute or a set of attributes plays the role of a
key in a relation, then the size of that relation is bounded by the
size of its projection on the key-attributes, and therefore it
suffices to analyse the contribution of those attributes to the
maximum answer size of the query. In the preliminary version of this
paper we announced some partial results in this direction for the case
of simple functional dependencies. Since then, the problem of
analysing the answer size subject to general functional dependencies
has been addressed in its own right in the more recent works
\cite{GottlobLeeValiant2009} and \cite{ValiantValiant2010}.

\subsection*{Organization}
In Section~\ref{sec:preliminaries} we introduce notation and the basic
definitions. In Section~\ref{sec:worst} we state and prove the bounds
in the worst-case model. Lemmas~\ref{lem:gm} and~\ref{lem:lower} state
the upper bound and the lower bound,
respectively. Theorem~\ref{theo:alg} states the fact that, for queries
of bounded fractional edge cover number, join-project plans can
evaluate the query in polynomial time, and Theorem~\ref{theo:jp}
states that, in contrast, join-only plans cannot. In
Section~\ref{sec:constraints} we incorporate size-constraints into the
analysis. Theorem~\ref{theo:size-bounds} states the upper and lower
bounds for this case, and Theorem~\ref{theo:approxhardness} states
that approximating the maximum output-size better than what
Theorem~\ref{theo:size-bounds} gives is NP-hard. In
Section~\ref{sec:average} we study the average-case model. In
Theorems~\ref{thm:avgcase} and~\ref{thm:avgcase2} we estimate the
output-size as a function of the maximum density of the query. In
Theorem~\ref{thm:removeproject} we show how to exploit the
average-case model to remove projections from any join-project plan
without affecting the run-time by more than a constant factor, on
average.

\section{Preliminaries}
\label{sec:preliminaries}

For integers $m\le n$, by
$[m,n]$ we denote the set $\{m,m+1,\ldots,n\}$ and by $[n]$ we denote $[1,n]$. 
All our logarithms are base $2$.

Our terminology is similar to that used in
\cite{abihulvia95}: An \emph{attribute} is a symbol $a$ with an associated
\emph{domain} $\dom(a)$. If not specified otherwise, we assume $\dom(a)$ to be
an arbitrary countably infinite set, say, $\mathbb N$. Sometimes, we will
impose restrictions on the size of the domains.
A \emph{relation name} is a symbol $R$ with an associated finite set
of attributes $A$. For a set $A=\{a_1,\ldots,a_n\}$ of attributes, we
write $R(A)$ or $R(a_1,\ldots,a_n)$ to denote that $A$ is the set of
attributes of $R$. The \emph{arity} of $R(A)$ is $|A|$. A
\emph{schema} is a finite set of relation names. If $\sigma =
\{R(A_1),\ldots,R(A_m)\}$, we write $A_\sigma$ for $\bigcup_i A_i$.

For a set $A$ of attributes, an \emph{$A$-tuple} is a mapping $t$ that
associates an element $t(a)$ from $\dom(a)$ with each $a\in A$. Occasionally,
we denote $A$-tuples in the form $t=(t_a : a\in A)$, with the obvious meaning
that $t$ is the $A$-tuple with $t(a)=t_a$. The set of all $A$-tuples is
denoted by $\tup(A)$. An \emph{$A$-relation} is a set of $A$-tuples. 
The \emph{active domain} of an $A$-relation $R$ is the set $\{t(a) :
t\in R, a\in A\}$.  The \emph{projection} of an $A$-tuple $t$ to a subset
$B\subseteq A$ is the restriction $\pi_B(t)$ of $t$ to $B$, and the
\emph{projection} of an $A$-relation $R$ is the set $\pi_B(R)=\{\pi_B(t) :
t\in R\}$.

A \emph{database instance} $D$ of schema $\sigma$, or a
\emph{$\sigma$-instance}, consists of an $A$-relation $R(D)$ for
every relation name $R$ in $\sigma$ with set of attributes $A$. 
The \emph{active domain} of $D$
is the union of active domains of all its relations.
The \emph{size} of a $\sigma$-instance $D$ is 
$|D|:=\sum_{R\in\sigma}|R(D)|$.

A \emph{join query} is an expression
\[
Q:=R_1(A_1)\bowtie\cdots\bowtie R_m(A_m),
\]
where $R_i$ is a relation name with attributes $A_i$.  The \emph{schema} of
$Q$ is the set $\{R_1,\ldots,R_m\}$, and the \emph{set of attributes} of $Q$
is $\bigcup_{i} A_i$. We often denote the set of attributes of a join query
$Q$ by $A_Q$, and we write $\tup(Q)$ instead of
$\tup(A_Q)$. 
The
\emph{size} of $Q$ is $|Q|:=\sum_{i} |A_i|$. We write $H(Q)$ for the
(multi-)hypergraph that has vertex-set $A_Q$ and edge-(multi-)set
$\{A_1,\ldots,A_m\}$.
If $D$ is an $\{R_1,\ldots,R_m\}$-instance, the \emph{answer} of $Q$ on $D$ 
is the $A_Q$-relation
\[
Q(D)=\big\{ t\in \tup(A_Q) : \pi_{A_i}(t)\in R_i(D)\text{
  for every } i \in [m] \big\}.
\]
A \emph{join plan} is a
term built from relation names and binary join operators. For example,
$(R_1\bowtie R_2)\bowtie(R_3\bowtie R_4)$ and $((R_1\bowtie R_2)\bowtie
R_3)\bowtie (R_1\bowtie R_4)$ are two join plans corresponding to the same
join query $R_1\bowtie R_2\bowtie R_3\bowtie R_4$. A \emph{join-project plan}
is a term built from relation names, binary join operators, and unary project
operators. For example, $(\pi_A(R_1)\bowtie R_2)\bowtie \pi_B(R_1)$ is a
join-project plan.  Join-project plans have a natural representation as
labelled binary trees, where the leaves
are labelled by relation names, the
unary nodes are labelled by projections $\pi_A$, and the binary nodes by
joins.  Evaluating a join plan or join-project plan $\phi$ in a database
instance $D$ means substituting the relation names by the actual relations
from $D$ and carrying out the operations in the expression. We denote the
resulting relation by $\phi(D)$.  A join(-project) plan $\phi$ is a plan
\emph{for} a query $Q$ if $\phi(D)=Q(D)$ for every database $D$. The
\emph{subplans} of a join(-project) plan are defined in the obvious way. For
example, the subplans of $(R_1\bowtie R_2)\bowtie\pi_A(R_3\bowtie R_4)$ are
$R_1$, $R_2$, $R_3$, $R_4$, $R_1\bowtie R_2$, $R_3\bowtie R_4$,
$\pi_A(R_3\bowtie R_4)$, $(R_1\bowtie R_2)\bowtie\pi_A(R_3\bowtie R_4)$. If
$\phi$ is a join project plan, then we often use $A_\phi$ to denote the set of
attributes of the query computed by $\phi$ (this only includes ``free''
attributes and not those projected away by some projection in $\phi$), and we
write $\tup(\phi)$ instead of $\tup(A_\phi)$.

\section{Worst-case model}
\label{sec:worst}

In this section, we study the worst-case model in which we make no
assumptions at all on the database. First we discuss the estimates on
the answer-size of join queries, and then we address the question of
query plans for such queries.

\subsection{Size bounds}\label{sec:wc-bounds}
Let $Q$ be a join query with schema $\sigma$. For every $R \in
\sigma$, let $A_R$ be the set of attributes of $R$, so that $A_\sigma
= \bigcup_R A_R$.  The fractional edge covers are precisely the
feasible solutions $(x_R : R \in \sigma)$ for the following linear
program $L_Q$, and the fractional edge cover number $\rho^*(Q)$ is the
cost of an optimal solution.
\begin{equation}
\begin{array}{llll}
  L_Q:\quad&\text{minimise } & \sum_R x_R & \\
  &\text{subject to}
  & \sum_{R\;:\;a \in A_R} x_R \ge 1
  & \text{for all } a \in A_\sigma,\\
  && x_R \ge 0 & \text{for all } R \in \sigma.
\end{array}
\label{eqn:lp}
\end{equation}
By standard arguments, there always is an optimal fractional edge
cover whose values are rational and of bit-length polynomial in $|Q|$.
As observed in \cite{gromar06}, fractional edge covers can be used to
give an upper bound on the size of a query.  
\begin{lem}[\cite{gromar06}]\label{lem:gm}
  Let $Q$ be a join query with schema $\sigma$ and 
  let $D$ be a $\sigma$-instance. Then
  for every fractional edge cover $(x_R : R \in \sigma)$ of $Q$ 
  we have
  \begin{equation*}
    |Q(D)| \le \prod_{R\in\sigma}|R(D)|^{x_R}=2^{\sum_{R\in\sigma}x_R\log|R_D|}.
  \end{equation*}
\end{lem}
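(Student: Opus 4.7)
The plan is to prove the inequality via an information-theoretic argument, applying Shearer's entropy lemma to the uniform distribution on $Q(D)$.

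First I would set up the probabilistic framework. Put the uniform distribution on $Q(D)$ (assuming $Q(D)\neq\emptyset$; otherwise the bound is trivial) and let $\mathbf{t}=(\mathbf{t}_a:a\in A_\sigma)$ be a random tuple drawn from $Q(D)$. Writing $H$ for Shannon entropy (base $2$), uniformity gives
\[
H(\mathbf{t})=\log |Q(D)|.
\]
For each $R\in\sigma$ with attribute set $A_R$, the projection $\pi_{A_R}(\mathbf{t})$ is supported on $R(D)$ by the definition of $Q(D)$, hence
\[
H(\pi_{A_R}(\mathbf{t}))\le \log |R(D)|.
\]

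Next I would invoke Shearer's Lemma in its fractional form: if $\{B_R\}_{R\in\sigma}$ is a family of subsets of a ground set $A$ and $(x_R)_{R\in\sigma}$ are nonnegative weights such that $\sum_{R:a\in B_R} x_R \ge 1$ for every $a\in A$, then for any jointly distributed random variables $(X_a:a\in A)$,
\[
H((X_a)_{a\in A})\;\le\;\sum_{R\in\sigma} x_R\cdot H((X_a)_{a\in B_R}).
\]
Applying this with $A=A_\sigma$, $B_R=A_R$, $X_a=\mathbf{t}_a$, and with $(x_R)$ the given fractional edge cover, which satisfies exactly the covering constraints from the LP~(\ref{eqn:lp}), yields
\[
H(\mathbf{t})\;\le\;\sum_{R\in\sigma} x_R\cdot H(\pi_{A_R}(\mathbf{t}))\;\le\;\sum_{R\in\sigma} x_R\log|R(D)|.
\]
Exponentiating both sides (base $2$) gives the desired bound
\[
|Q(D)|\;\le\;\prod_{R\in\sigma}|R(D)|^{x_R}.
\]

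There is not really a hard step here once the right tool is chosen: the bulk of the content is recognizing that the defining inequalities of a fractional edge cover are exactly the hypotheses of the fractional Shearer inequality, so that the covering LP translates directly into an entropy inequality. The only minor subtlety is the trivial case $Q(D)=\emptyset$, where both sides can be handled by convention ($0^0:=1$ where needed, or by noting the inequality holds vacuously), and ensuring the argument goes through when some $|R(D)|=0$, in which case $Q(D)$ is automatically empty.
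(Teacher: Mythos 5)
Your proof is correct and takes essentially the same route as the paper: both put the uniform distribution on $Q(D)$, bound each marginal entropy by $\log|R(D)|$, and invoke Shearer's Lemma. The only difference is packaging: you cite the fractional form of Shearer's inequality directly, whereas the paper states only the integer-multiplicity version and recovers the fractional case by approximating $x_R$ by rationals $p_R/q$ and replicating each attribute set $p_R$ times.
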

Note that the fractional edge cover in the statement of the lemma is not
necessarily one of minimum cost. For the reader's
convenience, we give a proof of this lemma, which is actually a
simplification of the proof in~\cite{gromar06}.

The proof of Lemma~\ref{lem:gm} is based on a combinatorial lemma known as
Shearer's lemma. The lemma appeared first in \cite{MR859293}, where it is attributed to
Shearer. The \emph{entropy} of
a random variable
$X$ with range $U$ is 
\[
h[X]:=- \sum_{x \in U}
  \Prob[ X=x ]\log \Prob[ X = x ]
\]
Shearer's lemma gives an upper bound on the entropy of a distribution on a product space in
terms of the entropies of its marginal distributions. 

\begin{lem}[Shearer's Lemma]
  Let $X=(X_i\mid i\in I)$ be a random variable, and let $A_j$, for $j\in J$, be
  (not necessarily distinct) subsets of the index set $I$ such that each $i\in
  I$ appears in at least $k$ of the sets $A_j$. For every $B\subseteq I$, let
  $X_B=(X_i\mid i\in B)$. Then
  \[
  \sum_{j=1}^m h[X_{A_j}]\ge k\cdot h[X].
  \]
\end{lem}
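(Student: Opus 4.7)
The plan is to derive Shearer's Lemma from two standard properties of Shannon entropy: the chain rule and the ``conditioning reduces entropy'' inequality. Recall the chain rule $h[X,Y] = h[X] + h[Y \mid X]$, which iteratively gives, for any fixed ordering $i_1, \ldots, i_n$ of a set $B \subseteq I$, the identity $h[X_B] = \sum_{t} h[X_{i_t} \mid X_{B \cap \{i_1, \ldots, i_{t-1}\}}]$; and the standard inequality $h[U \mid V] \ge h[U \mid V, W]$ (that adding more conditioning only decreases entropy).

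The first step is to fix an arbitrary total order $i_1 < i_2 < \cdots < i_n$ on $I$ and, for every $j \in J$, apply the chain rule along this order to write
\begin{equation*}
h[X_{A_j}] \;=\; \sum_{t \,:\, i_t \in A_j} h\bigl[X_{i_t} \,\big|\, X_{A_j \cap \{i_1, \ldots, i_{t-1}\}}\bigr].
\end{equation*}
The second step is to enlarge the conditioning set on the right: since $A_j \cap \{i_1, \ldots, i_{t-1}\} \subseteq \{i_1, \ldots, i_{t-1}\}$, the inequality $h[U \mid V] \ge h[U \mid V, W]$ gives
\begin{equation*}
h\bigl[X_{i_t} \,\big|\, X_{A_j \cap \{i_1, \ldots, i_{t-1}\}}\bigr] \;\ge\; h\bigl[X_{i_t} \,\big|\, X_{\{i_1, \ldots, i_{t-1}\}}\bigr].
\end{equation*}
The crucial point here is that the right-hand side no longer depends on $j$, only on the index $i_t$ and the fixed ordering.

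The third step is to sum over all $j \in J$ and swap the order of summation. For each index $t$, the term $h[X_{i_t} \mid X_{\{i_1, \ldots, i_{t-1}\}}]$ is contributed once for every $j$ with $i_t \in A_j$, and by hypothesis there are at least $k$ such $j$'s. Therefore
\begin{equation*}
\sum_{j \in J} h[X_{A_j}] \;\ge\; k \sum_{t=1}^{n} h\bigl[X_{i_t} \,\big|\, X_{\{i_1, \ldots, i_{t-1}\}}\bigr] \;=\; k \cdot h[X],
\end{equation*}
where the last equality is the chain rule applied to $X = X_I$ itself along the same ordering. This completes the proof. I expect no real obstacle: the argument is essentially bookkeeping once one notices that replacing the $A_j$-restricted conditioning set by the full $\{i_1, \ldots, i_{t-1}\}$ decouples the summands from $j$ and exposes the covering hypothesis.
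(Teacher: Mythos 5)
Your proof is correct and is the standard chain-rule argument for Shearer's Lemma; the paper does not give its own proof of this statement but defers to the cited survey of Radhakrishnan, where essentially this same argument appears, so you have reproduced the intended proof. The only implicit step worth making explicit is the nonnegativity of conditional entropy for discrete random variables, which is what lets you pass from ``each $i_t$ occurs in at least $k$ of the $A_j$'' to the factor $k$ in the final inequality.
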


A simple proof of the lemma can be found in \cite{rha01}.

Now we are ready to prove Lemma~\ref{lem:gm}:

\begin{proof}[Proof of Lemma~\ref{lem:gm}]
  Let $A_R$ be the set of attributes of $R \in \sigma$ so that
  $A_\sigma = \bigcup_R A_R$. Without loss of generality we may assume
  that the fractional edge cover $x_R$ only takes rational values,
  because the rationals are dense in the reals.  Let $p_R$ and $q$ be
  nonnegative integers such that $x_R = p_R/q$. Let $m = \sum_R p_R$,
  and let $A_1,\ldots,A_m$ be a sequence of subsets of $A_\sigma$ that
  contains precisely $p_R$ copies of the set $A_R$, for all
  $R\in\sigma$. Then every attribute $a \in A_\sigma$ is contained in
  at least $q$ of the sets $A_i$, because
  \[
  |\{i\in[m] : a\in A_i\}\big| = \sum_{R : a \in A_R} p_R
  \;=\; q \cdot \sum_{R : a \in A_R} x_R \;\geq\; q.
  \]
  Let $X=(X_a\mid a\in A_\sigma)$ be uniformly distributed on $Q(D)$,
  which we assume to be non-empty as otherwise the claim is
  obvious. That is, for every tuple $t\in Q(D)$ we have $\Prob[ X=t ]
  =1/|Q(D)|$, and for all other $A$-tuples we have $\Prob[ X=t ] =0$.
  Then $h[X]=\log|Q(D)|$. We apply Shearer's Lemma to the random
  variable $X$ and the sets $A_R$, for $R\in\sigma$. (Thus we have
  $I=A_\sigma$ and $J=\sigma$.)  Note that for every $R\in\sigma$ the
  marginal distribution of $X$ on $A_R$ is $0$ on all tuples not in
  $R({D})$. Hence the entropy of $X_{A_R}$ is bounded by the entropy
  of the uniform distribution on $R(D)$, that is,
  $h[X_{A_R}]\le\log|R(D)|$.  Thus by Shearer's Lemma, we have
  \[
  \sum_{R\in\sigma} p_R\cdot\log|R(D)|\ge
\sum_{R\in\sigma} p_Rh[X_{A_R}]=
\sum_{i=1}^m h[X_{A_i}]\ge q\cdot
  h[X]=q\cdot\log|Q(D)|.
  \]
  It follows that 
  \[
  |Q(D)|\le 2^{\sum_{R\in\sigma} (p_R/q)\cdot\log|R({
      D})|}=\prod_{R\in \sigma} {|R(D)|}^{x_R}.
  \]
\end{proof}

The next lemma shows that the upper bound of the previous lemma is tight:

\begin{lem} \label{lem:lower}
  Let $Q$ be a join query with schema  $\sigma$, and let
  $(x_R : R \in \sigma)$ be an optimal fractional edge cover of $Q$.  
  Then for every
  $N_0\in\mathbb N$ there is a $\sigma$-instance $D$ such that $|D|\ge
  N_0$ and
  \[
  |Q(D)|\ge \prod_{R \in \sigma} |R(D)|^{x_R}.
  \]
  Furthermore, we can choose $D$ in such a way that $|R(D)|=|R'(D)|$ for all
  $R,R'\in\sigma$ with $x_R,x_{R'}>0$.
\end{lem}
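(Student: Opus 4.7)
The plan is to construct a tight instance using the LP dual of $L_Q$. Let $L_Q^*$ be the dual: maximise $\sum_{a \in A_\sigma} y_a$ subject to $\sum_{a \in A_R} y_a \le 1$ for every $R \in \sigma$ and $y_a \ge 0$. This is the fractional matching LP of $H(Q)$. By LP duality there is a rational optimal $(y_a)_{a \in A_\sigma}$ with $\sum_a y_a = \sum_R x_R = \rho^*(Q)$. Complementary slackness between the two optima yields the two identities I will use crucially: (i) $\sum_{a \in A_R} y_a = 1$ whenever $x_R > 0$, and (ii) $\sum_{R : a \in A_R} x_R = 1$ whenever $y_a > 0$.

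Given $N_0$, I will pick $N$ as a sufficiently large perfect $q$-th power, where $q$ is a common denominator of the rationals $y_a$, so that $N^{y_a}$ is a positive integer for every $a$ (for attributes with $y_a = 0$ this is the trivial size $1$, which is fine). The instance $D$ is then defined as a Cartesian product construction: fix $\dom(a)$ to be a set of size $N^{y_a}$, and let $R(D) := \prod_{a \in A_R}\dom(a)$ for every $R \in \sigma$. Taking $N$ large makes $|D|\ge N_0$.

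Because each $R(D)$ is the full Cartesian product over $A_R$, every $A_\sigma$-tuple projects into $R(D)$ for every $R$, so $Q(D) = \prod_{a \in A_\sigma} \dom(a)$ and $|Q(D)| = N^{\sum_a y_a}$. On the other hand $|R(D)| = N^{\sum_{a \in A_R} y_a}$, and exchanging the order of summation gives
\begin{equation*}
\prod_{R \in \sigma} |R(D)|^{x_R} \;=\; N^{\sum_{R} x_R \sum_{a \in A_R} y_a} \;=\; N^{\sum_{a} y_a \sum_{R\,:\,a \in A_R} x_R}.
\end{equation*}
The inner sum $\sum_{R\,:\,a \in A_R} x_R$ is at least $1$ by primal feasibility, and it equals $1$ whenever $y_a > 0$ by slackness~(ii); the attributes with $y_a = 0$ contribute nothing. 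Hence $\sum_a y_a \sum_{R\ni a} x_R = \sum_a y_a$, and $\prod_R |R(D)|^{x_R} = N^{\sum_a y_a} = |Q(D)|$, proving the main inequality with equality. For the ``furthermore'' clause, slackness~(i) gives $|R(D)| = N^{\sum_{a \in A_R} y_a} = N$ whenever $x_R > 0$, so all such relations have the same size $N$.

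The only real technical point is the integrality of the domain sizes $N^{y_a}$, which is handled by clearing denominators via the choice $N = M^q$ with $M$ large. Beyond that, the argument is essentially a one-line pairing of the primal and dual optimal solutions via complementary slackness; no combinatorial construction beyond the Cartesian product is needed. I do not expect any obstacle beyond being careful that attributes with $y_a = 0$ (equivalently, domains of size $1$) are correctly handled and that the relations $R$ with $x_R = 0$ are allowed to have arbitrary (nonzero) size, since they contribute a factor $|R(D)|^{x_R} = 1$ to the bound regardless.
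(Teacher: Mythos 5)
Your proof is correct and follows essentially the same route as the paper: pass to the rational optimal dual solution $(y_a)$, build the Cartesian-product instance with domain sizes $N^{y_a}$ for $N$ a suitable $q$-th power, and use complementary slackness for the ``furthermore'' clause. The only (harmless) variation is that you finish the main inequality by exchanging summations and invoking slackness on the attribute constraints to get equality, whereas the paper gets $N^{\sum_a y_a}=N^{\sum_R x_R}\ge\prod_R|R(D)|^{x_R}$ directly from strong duality and $|R(D)|\le N$.
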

%
\begin{proof}
  Let $A_R$ be the set of attributes of $R \in \sigma$ so that
  $A_\sigma = \bigcup_R A_R$.  Recall $(x_R : R \in \sigma)$ is an
  optimal solution for the linear program (\ref{eqn:lp}).  By
  LP-duality, there is a solution $(y_a : a \in A_\sigma)$ for the
  dual linear program
  \begin{equation}  \label{eqn:dual}
  \begin{array}{lll}
    \text{maximise} & \sum_{a} y_a & \\
    \text{subject to} &
    \sum_{a\in A_R} y_a \le 1 & \text{for all } R \in
    \sigma, \\
    & y_a \ge 0 & \text{for all } a\in A_\sigma
  \end{array}
  \end{equation}
  such that $\sum_{a} y_a=\sum_{R} x_R$. There even
  exists such a solution with rational values.
  
  We take an optimal solution $(y_a : a\in A_\sigma)$ with
  $y_a=p_a/q$, where $q \ge 1$ and $p_a \ge 0$ are integers. Let $N_0
  \in \mathbb N$, and let $N = N_0^q$. We define a $\sigma$-instance
  $D$ by letting
  \[
  R(D):=\big\{t\in \tup(A_R) : t(a) \in [N^{p_a/q}] \text{ for
  all } a \in A_R \big\}
  \]
  for all $R\in\sigma$. 
  Here we assume that $\dom(a)=\mathbb N$ for all attributes
  $a$. As there is at least one $a$ with $y_a>0$ and hence
  $p_a\ge 1$, we have $|D|\ge N^{1/q}=N_0$. Observe that
  \begin{equation*}
  |R(D)|=\prod_{a\in A_R}N^{p_a/q}=N^{\sum_{a \in A_R} y_a}\le N
  \label{eqn:size}
  \end{equation*}
  for all $R\in\sigma$. Furthermore, $Q(D)$ is the set of all tuples
  $t \in \tup(A_\sigma)$ with $t(a) \in [N^{p_a/q}]$ for every $a \in
  A_\sigma$.  Hence
  \[
  |Q(D)|=\prod_{a\in{A}}N^{p_a/q}=N^{\sum_{a\in A_\sigma} y_a}=
  N^{\sum_{R\in\sigma} x_R} = \prod_{R \in \sigma} N^{x_R}\ge \prod_{R
    \in \sigma} |R(D)|^{x_R},
  \]
as required. To see that $|R(D)|$ is the same for every relation $R$ with $x_R>0$, we argue as follows.
  By complementary slackness of linear programming we have
  \begin{equation*}
    \label{eq:slack}
    \sum_{a\in A_R} y_a=1\quad\text{for all $R\in\sigma$ with $x_R>0$}.
  \end{equation*}
  Thus $|R(D)|=N$ for all $R\in\sigma$ with $x_R>0$ and
  \[
  |Q(D)| = \prod_{R \in \sigma} N^{x_R}= \prod_{R \in \sigma} |R(D)|^{x_R}.
  \]
\end{proof}

Now we show how Lemmas~\ref{lem:gm} and~\ref{lem:lower} give the
equivalence between statements (1) and (4) of
Theorem~\ref{theo:intro}. Assume (1) and let $c > 0$ be a constant
such that $|Q(D)| \leq |D|^c$ for every $Q \in \mathcal{Q}$ and every
instance $D$. For a fixed join query $Q \in \mathcal{Q}$, if $(x^*_R :
R \in \sigma)$ denotes the optimal fractional edge cover of $Q$,
Lemma~\ref{lem:lower} states that there exist arbitrarily large
instances $D$ such that $|R(D)| = |D|/|\sigma|$ for every $R \in
\sigma$ and
\begin{equation*}
|Q(D)| \geq \prod_{R \in \sigma} |R(D)|^{x^*_R}
\geq (|D|/|Q|)^{\sum_{R \in \sigma} x^*_R} = (|D|/|Q|)^{\rho^*(Q)}.
\label{eq:theotherhalf}
\end{equation*}
In paricular, there exist arbitrarily large instances $D$ for which
$(|D|/|Q|)^{\rho^*(Q)} \leq |D|^c$. It follows that $\rho^*(Q) \leq c$
and hence (4) in Theorem~\ref{theo:intro}.  The converse is even more
direct. Assume (4) and let $c > 0$ be a constant such that $\rho^*(Q)
\leq c$ for every $Q \in \mathcal{Q}$. For a fixed join query $Q \in
\mathcal{Q}$, if $(x^*_R : R \in \sigma)$ denotes the optimal
fractional edge cover of $Q$, Lemma~\ref{lem:gm} states that for every
instance $D$ we have
\begin{equation*}
|Q(D)| \leq \prod_{R \in \sigma} |R(D)|^{x^*_R}
\leq |D|^{\sum_{R \in \sigma} x^*_R} = |D|^{\rho^*(Q)}.
\label{eq:onehalf}
\end{equation*}
It follows that $|Q(D)| \leq |D|^c$ for every $D$ and
hence (1) in Theorem~\ref{theo:intro}.


\subsection{Execution plans}
\label{sec:wc-alg}
It was proved in \cite{gromar06} that there is an algorithm for evaluating a
join query $Q$ in a database $D$ that runs in time
$O\big(|Q|^2\cdot|D|^{\rho^*(Q)+1}\big)$. An analysis of the proof shows that
the algorithms can actually be cast as the evaluation of an explicit (and
simple) join-project plan. For the reader's convenience, we give a proof of this fact here. Combined with the bounds obtained in the previous
section, this yields Theorem~\ref{theo:intro}.

We define the \emph{size} of a $k$-ary relation $R$ to be the number
$||R||:=|R|\cdot k$. The bounds stated in the following fact depend on
the machine model; the statement we give is based on standard random
access machines with a uniform cost measure. Other models may require
additional logarithmic factors.

\begin{fact}\label{fact:jp}
The following hold:
  \begin{enumerate} \itemsep=0pt
  \item The join $R\bowtie S$ of two relations $R$ and $S$ can be computed in
    time $O(||R||+||S||+||R\bowtie S||)$.
  \item The projection $\pi_B(R)$ of an $A$-relation $R$ to a subset
    $B\subseteq A$ can be computed in time $O(||R||)$.
  \end{enumerate}
\end{fact}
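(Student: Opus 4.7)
The plan is to implement both operations with standard textbook algorithms and to argue, in the uniform-cost RAM model the paper adopts, that each contributes only a constant number of operations per input tuple or per output tuple.

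For (1) I would use a hash join. Let $C := A_R \cap A_S$ be the set of common attributes. First I scan $R$ and, for each tuple $r$, compute the key $\pi_C(r)$ and insert $r$ into a hash table bucket indexed by this key; this phase costs $O(||R||)$ since each tuple is touched once and each operation on it reads or writes $O(|A_R|)$ attributes. Then I scan $S$: for each $s$, I look up the bucket for $\pi_C(s)$ and, for every $r$ in that bucket, emit the joined tuple determined by $r$ and $s$. The probing work is $O(||S||)$, while emitting each output tuple costs $O(|A_R \cup A_S|)$, which is exactly what the $||R \bowtie S||$ term pays for. Summing gives the stated $O(||R|| + ||S|| + ||R \bowtie S||)$.

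For (2) I would scan $R$ once, computing $\pi_B(t)$ for each tuple $t$ at cost $O(|B|)$ per tuple, and inserting each projected tuple into a hash table so that duplicates are discarded (since $\pi_B(R)$ is a \emph{set} of $B$-tuples). The total work is $O(||R||)$.

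The one place a careful reader could object is the use of hashing, since the fact is phrased as a worst-case bound. Two alternatives resolve this and I would mention both: (a) use universal hashing, in which case the bounds hold in expectation and with high probability, or (b) replace the hash tables by a radix sort of the tuples viewed as short sequences of active-domain elements, followed by a linear scan that groups tuples sharing the same key; in the uniform-cost RAM this yields a deterministic $O(||R|| + ||S||)$ sort and a deterministic version of the fact. The paper already disclaims that other machine models may introduce logarithmic factors, so either route is consistent with the statement.
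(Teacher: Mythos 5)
The paper does not actually prove this fact: it states the bounds and defers entirely to the reference \cite{flufrigro02}, with the explicit caveat that the claim is tied to a uniform-cost RAM and that other machine models may incur logarithmic factors. Your argument is therefore a self-contained substitute rather than a variant of the paper's proof, and it is essentially correct. The hash-join accounting is right: because the join condition of a natural join is exact agreement on the common attributes $C$, every tuple $r$ found in the bucket for $\pi_C(s)$ (absent collisions) yields an output tuple, so the per-bucket iteration is genuinely charged to $\|R\bowtie S\|$; and you correctly identify that worst-case determinism requires replacing hashing by sorting, which is in fact what the cited reference does (a lexicographic sort of both relations on $C$ followed by a merge). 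The one point I would tighten is your alternative (b): radix sort runs in linear time only when the key values are polynomially bounded in the input size, whereas here $\dom(a)=\mathbb N$, so a deterministic linear-time sort first needs the active domain renamed to $[\,\|R\|+\|S\|\,]$, and that renaming is itself a sorting/hashing problem. This is precisely the model-dependent issue the paper's disclaimer is absorbing, so your proof is consistent with the fact as stated, but you should either assume polynomially bounded domain values, accept an extra logarithmic factor for the sort, or fall back on randomized hashing with the bound holding in expectation.
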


For details and a proof of the fact, we refer the reader to
\cite{flufrigro02}.  The following theorem gives the promised
join-project plan:

\begin{theo}\label{theo:alg}
  For every join query $Q$, there is a join-project plan
  for $Q$ that can be evaluated in time
  $O\big(|Q|^2\cdot|D|^{\rho^*(Q)+1}\big)$
  on every given instance $D$. Moreover, there is
  a polynomial-time algorithm that, given $Q$, computes 
  the join-project plan. 
\end{theo}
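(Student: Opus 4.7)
My plan is to combine the size bound of Lemma~\ref{lem:gm} with the cost accounting of Fact~\ref{fact:jp} to analyse an explicit join-project plan obtained by unrolling the recursive algorithm implicit in~\cite{gromar06}. The strategy has three ingredients: solving the LP, constructing the plan, and bounding the runtime.

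First, I would solve $L_Q$ in time polynomial in $|Q|$ using any polynomial-time LP algorithm (e.g.\ the ellipsoid method) to obtain an optimal fractional edge cover $(x^*_R)_{R\in\sigma}$ with $\sum_R x^*_R = \rho^*(Q)$ and rational coordinates of polynomial bit-length. This gives the ``guide'' that the plan will follow.

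Second, I would construct the join-project plan $\phi$ as a binary tree whose leaves are the relation names $R_1,\ldots,R_m$, together with a projection node inserted at each internal node $v$ that discards every attribute $a$ such that every relation of $\sigma$ containing $a$ already appears as a leaf of the subtree rooted at $v$. This projection rule preserves correctness, because once all occurrences of $a$ have been joined, the $a$-coordinate is no longer needed for matching higher up in the plan. The pairing order of leaves is dictated by the support of $(x^*_R)$: each internal subplan $\phi_v$ then inherits a fractional edge cover of its own attribute set by restricting $(x^*_R)$ to the relations in its subtree (possibly supplemented with a bounded amount of slack to cover attributes whose support crosses the subtree boundary), and this inherited cover has total weight at most $\rho^*(Q)$.

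Third, I would run the analysis. Applying Lemma~\ref{lem:gm} subplan-by-subplan with the inherited fractional cover gives $|\phi_v(D)| \le |D|^{\rho^*(Q)}$ at every internal node of $\phi$. The plan has $O(|Q|)$ nodes; by Fact~\ref{fact:jp} each individual join or projection runs in time $O\bigl(|Q|\cdot(|\phi_L(D)| + |\phi_R(D)| + |\phi_v(D)|)\bigr)$, where the $|Q|$ factor accounts for tuple arity. An extra factor of $|D|$ absorbs the pre-projection intermediate size of a join whose raw output may temporarily blow up before the projection at $v$ is applied. Summing over the $O(|Q|)$ nodes yields the claimed bound $O\bigl(|Q|^2\cdot|D|^{\rho^*(Q)+1}\bigr)$.

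The main obstacle is the second step: making the construction concrete enough to exhibit, for every internal subplan, a fractional edge cover of its attribute set of weight at most $\rho^*(Q)$, while simultaneously proving that the inserted projections preserve the semantics of $Q$. This boils down to a careful bookkeeping argument showing that the global cover $(x^*_R)$ can be distributed among the subplans and that projections are only ever applied to attributes whose occurrences have all been resolved within the current subtree; the rest is routine application of Lemma~\ref{lem:gm} and Fact~\ref{fact:jp}.
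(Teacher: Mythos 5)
There is a genuine gap, and it sits exactly where you locate ``the main obstacle'': a subplan of a relation-tree does \emph{not} inherit a fractional edge cover of its own attribute set from $(x^*_R)$, and no ``bounded amount of slack'' can repair this. If the subtree rooted at $v$ contains the relations $S\subseteq\sigma$, then after your projection rule the attributes of $\phi_v$ are precisely those attributes of $S$ that also occur in some relation outside $S$; for such a boundary attribute $a$ the covering constraint $\sum_{R\,:\,a\in A_R} x^*_R\ge 1$ draws on relations outside the subtree, so the restriction of $x^*$ to $S$ may cover $a$ with total weight arbitrarily close to $0$. Worse, Theorem~\ref{theo:jp} shows the failure is unavoidable: for the queries constructed there (which have $\rho^*\le 2$), \emph{every} binary tree over the relations has a subtree containing between $(m+2)/2$ and $m+1$ of the $2m$ relations, and the witnessing attributes $a_{s_1},\dots,a_{s_\ell}$ each occur in relations outside that subtree, so your projection rule retains them and the subplan still has size at least $N^{(m+2)/2}=|D|^{\Omega(\log|Q|)}$. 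Hence no pairing order and no ``project away resolved attributes'' rule can achieve $O(|Q|^2|D|^{\rho^*(Q)+1})$. Two further problems: discarding an attribute once all its occurrences are joined changes the semantics, since $Q(D)$ is a relation over \emph{all} attributes; and the raw output of a join of two subplans is not bounded by $|D|$ times its projected size, so the ``extra factor of $|D|$'' is not justified for a general tree of pairwise joins.

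The idea you are missing is that the plan should grow attribute-by-attribute rather than relation-by-relation. Fix an ordering $a_1,\dots,a_n$ of the attributes, let $B_i=\{a_1,\dots,a_i\}$, and take $\phi_1=\pi_{B_1}(R_1)\bowtie\cdots\bowtie\pi_{B_1}(R_m)$ and $\phi_{i+1}=\phi_i\bowtie\pi_{B_{i+1}}(R_1)\bowtie\cdots\bowtie\pi_{B_{i+1}}(R_m)$. Every stage is the answer of a truncated query $Q_i$ whose schema consists of \emph{all} the $R_j$, each projected to $B_i\cap A_j$; the same weights $x^*_R$ form a fractional edge cover of $Q_i$, so Lemma~\ref{lem:gm} gives $|\phi_i(D)|\le|D|^{\rho^*(Q)}$ at every stage, and each intermediate join on the way from $\phi_i$ to $\phi_{i+1}$ introduces only the single new attribute $a_{i+1}$, hence is contained in $\phi_i(D)\times U$ and has size at most $|D|^{\rho^*(Q)+1}$. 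Note also that this plan does not depend on the fractional cover at all --- the LP enters only the analysis --- so your LP-solving step, while harmless, is not needed for the claim that the plan is computable in polynomial time.
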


\begin{proof}
  Let $Q=R_1(A_1)\bowtie\cdots\bowtie R_m(A_m)$ be a join query and
  $D$ an instance for $Q$.  Suppose that the attributes of
  $Q$ are $\{a_1,\ldots,a_n\}$. For
  $i\in[n]$, let $B_i:=\{a_1,\ldots,a_i\}$. Furthermore, let
  \begin{align*}
   \phi_1&:=\big(\cdots(\pi_{B_1}(R_1)\bowtie\pi_{B_1}(R_2))
     \bowtie\cdots\bowtie\pi_{B_1}(R_m)\big),\\
    \phi_{i+1}&:=\Big(\cdots\big((\phi_{i}\bowtie\pi_{B_{i+1}}(R_1))
      \bowtie\pi_{B_{i+1}}(R_2)\big)\bowtie\cdots\bowtie
      \pi_{B_{i+1}}(R_m)\Big)&\text{ for all }i\ge 1.
  \end{align*}
  It is easy to see that for every $i\in[n]$ it holds that
  $\phi_i(D)=\pi_{B_i}(Q(D))$ and hence $\phi_n(D)=Q(D)$. Hence to compute
  $Q(D)$, we can evaluate the join-project plan $\phi_n$.

  To estimate the cost of the evaluating the plan, we need to establish the
  following claim:
  \begin{quote}
    \emph{For every $i\in[n]$ we have
    $|\phi_i(D)|\le|D|^{\rho^*(Q)}$.}
  \end{quote} 
  To see this, we consider the join query
  \[
  Q_i:={R_1}^i\bowtie\cdots\bowtie {R_m}^i,
  \]
  where ${R_j}^i$ is a relation name with attributes $B_i\cap
  A_j$. The
  crucial observation is that $\rho^*(Q_i)\le\rho^*(Q)$,
  because if $(x_R : R \in \sigma)$ is
  fractional edge cover of $Q$, then letting $x_{R^i} = x_R$
  for every $R \in \sigma$ we get a fractional 
  edge cover of $Q_i$ of the same
  cost. If we let $D_i$ be the database instance with
  ${R_j}^i(D_i):=\pi_{B_i}(R_j)$ for all $j\in[m]$, then
  we get
  \[
  \phi_i(D)=Q_i(D_i)\le|D_i|^{\rho^*(Q_i)}\le|D|^{{\rho^*}(Q)}.
  \]
  This proves the claim.

  We further observe that all intermediate results in the computation of
  $\phi_{i+1}(D)$ from $\phi_i(D)$ are contained in
  \[
  \phi_i(D)\times U,
  \]
  where $U$ is the active domain of $D$.
  Hence their size is bounded by
  $|\phi_i(D)|\cdot|D|\le|D|^{{\rho^*}(Q)+1}$,
  and by Fact~\ref{fact:jp} they can be computed in time
  $O(|D|^{{\rho^*}(Q)+1})$. Overall, we have to compute $n\cdot m$
  projections, each requiring time $O(D)$, and $n\cdot m$ joins, each
  requiring time $O(|D|^{{\rho^*}(Q)+1})$. This yields the desired running
  time.
\end{proof}



We shall prove next that join plans perform significantly worse than
join-project plans. Note that to evaluate a join plan one has to
evaluate all its subplans. Hence for every subplan $\psi$ of $\phi$ and every
instance $D$, the size $|\psi(D)|$ is a lower bound for the time required to
evaluate $\phi$ in $D$.

\begin{theo}\label{theo:jp}
  For every $m,N\in\mathbb N$ there are a join query $Q$ and
  an instance $D$ with $|Q|\ge m$ and $|D|\ge N$, and:
  \begin{enumerate} \itemsep=0pt
  \item $\rho^*(Q)\le2$ and hence $|Q(D)|\le |D|^2$ (actually, 
    $|Q(D)|\le|D|$).
  \item Every join plan $\phi$ for $Q$ has a subplan $\psi$ such that
    $|\psi(D)|\ge|D|^{\frac{1}{5}{\log|Q|}}$.
  \end{enumerate}
\end{theo}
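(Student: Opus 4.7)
The plan is to exhibit a parametric family of query-instance pairs $(Q, D)$ realising the claimed gap, building up from the underlying hypergraph of $Q$. First, I would construct a hypergraph $H = H_Q$ with the two conflicting properties that $\rho^*(H) \leq 2$ but every balanced subset $S$ of edges of $H$---meaning $|E(H)|/3 \leq |S| \leq 2|E(H)|/3$---induces a sub-hypergraph with $\rho^*(H[S]) \geq (1/5)\log|Q|$. The conflict is resolved by including in $H$ a small number of ``cover'' hyperedges whose union witnesses $\rho^*(H) \leq 2$, together with many ``constraint'' hyperedges arranged as a combinatorial expander, so that any large subset of them already spans a large portion of the vertex set and has large fractional edge cover number. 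A concrete candidate is a bipartite construction: two cover relations $U_1, U_2$ on two disjoint attribute groups, together with constraint relations realising the edges of a $d$-regular bipartite graph of small constant degree $d$, in which case $|V(S)|\ge|S|/d$ for any edge subset $S$.

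Second, I would realise $Q$ and $D$ via a product-type instance akin to that in the proof of Lemma~\ref{lem:lower}: every attribute gets a common domain $[M]$, the constraint relations are full Cartesian grids $[M] \times [M]$, and the cover relations receive a carefully chosen small number of tuples. Tuning $M$ in terms of $N$ and $m$, one can simultaneously ensure $|D| \geq N$, $|Q(D)| \leq |D|$ (because the cover relations bound the global answer size), and $|Q_S(D)| \geq M^{|V(S)|}$ for every subset $S$ of relations (up to size factors contributed by cover relations that lie in $S$). By the hypergraph property above, this translates into $|Q_S(D)| \geq |D|^{(1/5)\log|Q|}$ whenever $S$ is balanced.

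Third, given an arbitrary join plan $\phi$ for $Q$, I would invoke the standard balanced-separator property of binary trees---every binary tree on $m$ leaves has an internal node whose subtree contains between $m/3$ and $2m/3$ of the leaves---to locate a subplan $\psi$ whose leaf set $S$ is balanced. Combining the previous steps then yields $|\psi(D)| = |Q_S(D)| \geq |D|^{(1/5)\log|Q|}$, as required. The main obstacle lies in the hypergraph construction itself: the naive attempt of adding two universal cover relations to an expander fails when the balanced subplan contains both covers, since then $\rho^*(H[S])$ collapses to $2$ and the intermediate size is bounded by $|U_1(D)|\cdot|U_2(D)|\le|D|$. Resolving this requires a subtler design, for instance arranging each cover relation to cover only one side of a bipartition so that a sub-query containing a single cover still has uncovered attributes forcing large $\rho^*$, together with an augmented balanced-separator argument that iterates within the subtree to find a balanced sub-subplan avoiding at least one of the covers; the detailed combinatorial design of $H$ ensuring all of these properties simultaneously is the crux of the argument.
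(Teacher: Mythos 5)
Your overall architecture --- a balanced-separator argument on the binary tree of the join plan, combined with a query whose balanced sub-queries are forced to have large answers on a single fixed instance --- is indeed the strategy of the paper's proof. But the construction of the query and the instance, which is where all the mathematical content lies, is exactly what you defer, and the direction you sketch cannot be made to work. The fatal objection to any design of the form ``a constant number of cover relations $U_1,U_2$ witnessing $\rho^*\le 2$ plus many constraint relations'' is the join plan $\bigl(\cdots((U_1\bowtie U_2)\bowtie R_1)\bowtie\cdots\bigr)$: after the first join all attributes are instantiated, so \emph{every} subplan with more than one leaf is contained in $U_1(D)\bowtie U_2(D)$ and has size at most $|U_1(D)|\cdot|U_2(D)|$, while all remaining subplans are single relations. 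Your proposed repair --- iterating the separator inside the subtree to find a balanced sub-subplan avoiding a cover --- fails on exactly this plan, since no subplan with two or more leaves avoids either cover. (This is no accident: Proposition~\ref{prop:lower} shows that a small \emph{integral} edge cover always yields a good join plan.) The correct moral, embodied in the paper's construction, is that the certificate of $\rho^*(Q)\le 2$ must be far from integral: the paper takes $2m$ relations indexed by $[2m]$ and one attribute $a_s$ for each $m$-subset $s\subseteq[2m]$, placed in $R_i$ iff $i\in s$, so that the uniform weights $x_{R_i}=1/m$ give $\rho^*\le 2$ while the integral edge cover number is $m+1$ --- the extremal integrality-gap instance for edge cover. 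Then any $\ell\le m+1$ of the relations, say $R_1,\dots,R_\ell$, have pairwise ``private'' attributes $a_{s_i}$ with $s_i=\{i\}\cup[m+2,2m]$, and the instance in which each $R_i(D)$ consists of the tuples with at most one coordinate different from $1$ makes every such subquery answer contain at least $N^\ell\ge N^{(m+2)/2}$ tuples while $|Q(D)|$ stays linear in $|D|$.

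A second, independent gap: you infer $|Q_S(D)|\ge|D|^{(1/5)\log|Q|}$ from $\rho^*(H[S])\ge(1/5)\log|Q|$ by appealing to a product instance \`a la Lemma~\ref{lem:lower}. That lemma builds a \emph{different} instance for each query, from an optimal dual solution of that query's own linear program; it does not supply one instance $D$ that simultaneously keeps $|Q(D)|$ small and realises the fractional-cover lower bound for every balanced subquery, and your grid-plus-sparse-covers instance visibly fails to do so once a cover relation enters the subquery. One must exhibit the tuples of $\psi(D)$ directly, as the paper does via the private attributes; the detour through $\rho^*$ of sub-hypergraphs is both unproved in your write-up and unnecessary. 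As it stands, the ``detailed combinatorial design of $H$'' you leave open is precisely the content of the theorem, so the proposal is not yet a proof.
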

\begin{proof}
  Let $n=\binom{2m}{m}$. For every $s \subseteq
  [2m]$ with $|s|=m$, let $a_s$ be an attribute with domain $\mathbb N$.
  For every $i\in[2m]$, let $R_i$ be a
  relation name having as attributes all $a_s$ such that $i
  \in s$. Let $A_i$ be the set of attributes of $R_i$
  and $A = \bigcup_{i \in [2m]} A_i$.
  The arity of $R_i$ is 
  \[
  |A_i| = \binom{2m-1}{m-1}=\frac{m}{2m}\cdot\binom{2m}{m}=\frac{n}{2}.
  \]
  Let $Q:=R_1\bowtie\cdots\bowtie R_{2m}$.
  Then $|Q|=2m\cdot n/2=m\cdot n$.
  Furthermore, $\rho^*(Q)\le 2$. To see this, let
  $x_{R_i} = 1/m$ for every $i \in [2m]$. This forms a
  fractional edge cover of $Q$, because for every
  $s \subseteq [2m]$ with $|s|=m$,
  the attribute $a_s$ appears in the $m$ atoms
  $R_i$ with $i \in s$.
  
  Next, we define an instance $D$ by letting
  $R_i(D)$ be the set of all $A_i$-tuples that have an arbitrary
  value from $[N]$ in one coordinate and $1$ in all other coordinates.
  Formally,
  $$
  R_i(D) := \bigcup_{a \in A_i} \bigcap_{b \in
  A_i\setminus a} \{ t \in \tup(A_i) : t(a) \in [N], t(b)
  = 1 \}.
  $$
  Observe that
  $|R_i(D)|=(N-1)n/2+1$ for all $i\in[2m]$ and thus 
  \[
  |D|=(N-1)mn+2m \geq N.
  \]
  Furthermore, $Q(D)$ is the set of all $A$-tuples that have an
  arbitrary value from $[N]$ in one attribute and $1$ in all other
  coordinates (it is not possible that two attributes have value
  different from 1, as every two attributes appear together in
  some relation).  Hence $|Q(D)|=(N-1)n+1 \le |D|$. This completes the
  proof of (1).

  To prove (2), we shall use the following
  simple (and well-known) combinatorial lemma:

  \begin{lem}
    Let $T$ be a binary tree whose leaves are coloured with $2m$ colours, for
    some $m\ge 1$. Then there exists a node $t$ of $T$ such that at least
    $(m+2)/2$ and at most $m+1$ of the colours appear at leaves that are
    descendants of\/ $t$.
  \end{lem}
  
  \begin{proof}
    For every node $t$ of $T$, let $c(t)$ be the number of colours that appear
    at descendants of $T$. The \emph{height} of a node $t$ is the length of the
    longest path from $t$ to a leaf.

    Let $t$ be a node of minimum height such that $c(t)\ge m+2$, and let
    $u_1,u_2$ be the children of $t$. (Note that $t$ cannot be a leaf because
    $c(t)\ge 2$.) Then $c(u_i)\le m+1$ for $i=1,2$. Furthermore,
    $c(u_1)+c(u_2)\ge c(t)$, hence $c(u_i)\ge{(m+2)/2}$ for at least one $i$.
  \end{proof}

  Continuing the proof of the theorem, we let $\phi$ be a join plan for $Q$.
  We view the term $\phi$ as a binary tree $T$ whose leaves are labelled by
  atoms $R_i$. 
  We view the atoms as colours. Applying the lemma, we find a
  node $t$ of $T$ such that at least $(m+2)/2$ and at most $m+1$
  of the colours appear at leaves that are descendants of $t$. Every inner
  node of the tree corresponds to a subplan of $\phi$. We let $\psi$ be
  the subplan corresponding to $t$. Then at least $(m+2)/2$ and at most
  $m+1$ atoms $R_i$ appear in $\psi$. By symmetry, we may assume
  without loss of generality that the atoms of $\psi$ are $R_1,\ldots,R_\ell$
  for some $\ell\in\big[\ceil{(m+2)/2},m+1\big]$. Hence $\psi$ is a plan
  for the join query
  \[
  R_1\bowtie\cdots\bowtie R_\ell.
  \]
  Let $B :=\bigcup_{i=1}^\ell A_i$ be the set of all attributes
  occurring in $\psi$. For $i\in[m+1]$, let $s_i=\{i\}\cup[m+2,2m]$. Then for
  all $i,j\in[\ell]$ we have $a_{s_i} \in A_j$
  if and only if $i=j$. Hence all
  tuples $t\in \tup(B)$ with $t(a_{s_i})\in[N]$
  for all $i\in[\ell]$ and $t(b)=1$ for all
  $b\in B\setminus\{a_{s_1},\ldots,a_{s_{\ell}}\}$ are contained 
  in $\psi(D)$. As there are $N^\ell$ such tuples, it follows that
  \[
  |\psi(Q)|\ge N^\ell\ge N^{(m+2)/2}.
  \]
  Statement (2) of the lemma follows, because
  \[
  \log|Q|=\log m+\log n\le \log m+\log 2^{2m}=\log m + 2m
  \le 5\cdot (m+2)/2,
  \]
  provided $m$ is large enough, which we may assume
  without loss of generality.
\end{proof}

Statement (2) of the theorem implies that any evaluation algorithm for
the query $Q$ based on evaluating join plans, which may even depend on
the database instance, has a running time at least
$|D|^{\Omega(\log|Q|)}$. This is to be put in contrast with the
running time $O(|Q|^2 \cdot |D|^3)$ from Theorem~\ref{theo:alg}.  It is
a natural question to ask if the difference can be even worse, i.e.,
more than logarithmic in the exponent.

Using the well-known fact that the integrality gap of the linear
program for edge covers is logarithmic in the number of vertices of
the hypergraph (that is, attributes of the join query), we prove below
that for every query $Q$ there is a join plan $\phi$ that can be
evaluated in time $O(|Q|\cdot |D|^{2\rho^*(Q)\cdot\log|Q|)})$, hence
the lower bound is tight up to a small constant factor.
\begin{prop}\label{prop:lower}
  For every join query $Q$, there is a join plan for $Q$ that can be
  evaluated in time $O(|Q|\cdot|D|^{2\rho^*(Q)\cdot\log|Q|)})$
  on every given instance $D$.
\end{prop}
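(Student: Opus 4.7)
The plan is to exploit the $O(\log n)$ integrality gap of the set-cover LP in order to turn the fractional edge cover into an integral one, and then execute a left-deep join plan built around that cover.

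First, I would invoke the classical result (Lov\'asz, Chv\'atal) that the LP-relaxation of set cover---which is precisely $L_Q$ of equation~(\ref{eqn:lp})---has integrality gap at most $H_n \le 1 + \ln n$, where $n$ is the number of elements to cover. Taking $n = |A_Q| \le |Q|$ and applying the standard greedy rounding in polynomial time, this yields an integer edge cover $C = \{R_{i_1}, \ldots, R_{i_k}\}$ of $Q$ satisfying
\[
k \;\le\; H_{|A_Q|} \cdot \rho^*(Q) \;\le\; (1 + \ln|Q|)\cdot \rho^*(Q) \;\le\; 2 \rho^*(Q)\log|Q|,
\]
where the last inequality holds for all $|Q| \ge 2$ (the case $|Q| = 1$ is trivial). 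Listing the relations outside $C$ as $R_{j_1}, \ldots, R_{j_{m-k}}$, I would take $\phi$ to be the left-deep plan
\[
\phi \;:=\; \Big(\cdots\big((R_{i_1} \bowtie R_{i_2}) \bowtie R_{i_3}\big) \bowtie \cdots \bowtie R_{i_k}\Big) \bowtie R_{j_1} \bowtie \cdots \bowtie R_{j_{m-k}}.
\]

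The key claim is that every subplan $\psi$ of $\phi$ evaluates on $D$ to a relation of cardinality at most $|D|^k$. Single-atom subplans trivially have size at most $|D|$. A subplan of the form $R_{i_1}\bowtie\cdots\bowtie R_{i_l}$ with $l \le k$ has, by Lemma~\ref{lem:gm} applied with the all-ones fractional edge cover of this subquery, size at most $\prod_{p\le l}|R_{i_p}(D)| \le |D|^l \le |D|^k$. For subplans that extend the full $C$-join by some of the $R_{j_p}$, note that $C$ is an edge cover of $Q$ and therefore the $C$-join already has all attributes of $Q$; adding further relations only filters tuples, so the size cannot grow beyond $|D|^k$.

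Finally, $\phi$ performs at most $m-1 \le |Q|$ binary joins, each of which costs, by Fact~\ref{fact:jp}, time linear in the sizes of its two inputs and its output. Every such relation has cardinality at most $|D|^k\le |D|^{2\rho^*(Q)\log|Q|}$ and arity at most $|A_Q| \le |Q|$, which together yield the desired overall running time $O(|Q|\cdot |D|^{2\rho^*(Q)\log|Q|})$ after absorbing the polynomial $|Q|$-factors into the constants. The only non-routine input is the $O(\log n)$ integrality gap of set cover; everything else is a direct combination of Lemma~\ref{lem:gm} with the observation that joining against an already-complete relation cannot grow it, so I do not anticipate any real obstacle.
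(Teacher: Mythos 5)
Your proposal is correct and follows essentially the same route as the paper: round the fractional edge cover to an integral one via the $H_n \le 2\log n$ integrality gap of the covering LP, join the cover relations first (so every intermediate result has size at most $|D|^{k}\le|D|^{2\rho^*(Q)\log|Q|}$), and then join the remaining relations, which only filter because the cover already spans all attributes. You in fact supply the subplan-size accounting that the paper's one-line proof leaves implicit; the only loose point is the bookkeeping of the leading $|Q|$ factor at the end, which is no less careful than the paper's own treatment.
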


\begin{proof}
  Let $Q$ be a join query with schema $\sigma$. For every $R \in
  \sigma$ let $A_R$ be the set of attributes of $R$ so that $A_\sigma
  = \bigcup_{R \in \sigma} A_R$.  An \emph{edge cover} of $Q$ is a
  subset $\gamma\subseteq\sigma$ such that $A_\sigma \subseteq
  \bigcup_{R \in \gamma} A_R$.  The \emph{edge cover number} $\rho(Q)$
  of $Q$ is the minimum size of an edge cover for $Q$. Observe that
  edge covers correspond to $\{0,1\}$-valued fractional edge covers
  and that the edge cover number is precisely the cost of the optimal
  integral fractional edge cover. It is well known that the
  integrality gap for the linear program defining fractional edge
  covers is $H_n$, where $n=|A_\sigma|$ and $H_n$ is the $n$th
  harmonic number (see, for example, \cite{vaz01}, Chapter 13). It is
  known that $H_n\le 2\log n$. Now the join plan consists in first
  joining the relations that form an edge cover of size
  $2\rho^*(Q)\cdot\log|Q|$ in arbitrary order, and then joining the
  result with the rest of relations in arbitrary order.
\end{proof}

Furthermore, the proof of Proposition~\ref{prop:lower} shows that, for
every join query $Q$, there is a join plan that can be evaluated in
time $O(|Q|\cdot |D|^{\rho(Q)})$, where $\rho(Q)$ denotes the edge
cover number of $Q$. However, note that not only $|D|^{\rho(Q)}$ is
potentially superpolynomial over $|D|^{\rho^*(Q)}$, but also finding
this plan is in general NP-hard.  Compare this with the fact that the
join-project plan given by \cite{gromar06}
can be found efficiently (see Theorem~\ref{theo:alg}).

\section{Size constraints}
\label{sec:constraints}


To estimate the size of joins, practical query optimisers use
statistical information about the database instance such as the sizes
of the relations, the sizes of some of their projections, or
histograms.  In this section we consider the simplest such setting
where the size of the relations is known, and we prove a (worst-case)
estimate on the size of $Q(D)$ subject to the constraint that the
relations in $D$ have the given sizes.

\subsection{Size bounds under size constraints}

Let $Q$ be a join query with schema $\sigma$. For every $R \in
\sigma$, let $A_R$ be the set of attributes of $R$ so that $A_\sigma =
\bigcup_R A_R$.  For every $R \in \sigma$, let $N_R$ be a natural
number, and let $L_Q(N_R : R \in \sigma)$ be the following linear
program:
\begin{equation}
\begin{array}{lll}
  \text{minimise} & \sum_{R} x_R \cdot \log N_R  & \\
  \text{subject to} &
  \sum_{R : a \in A_R} x_R \ge 1 & \text{for all }a\in A_\sigma,\\
  & x_R\ge 0 & \text{for all } R\in \sigma.
\end{array}
\label{eqn:lp2}
\end{equation}
Note that the only difference with $L_Q$ as defined
in (\ref{eqn:lp}) is the objective function.
This implies that every feasible solution of 
$L_Q(N_R : R \in \sigma)$ is also a fractional
edge cover of $Q$.

\begin{theo}\label{theo:size-bounds}
  Let $Q$ be a join query with schema $\sigma$ and 
  let $N_R\in\mathbb N$ for all $R\in\sigma$. Let $n$ be the number of
  attributes of $Q$, and let
  $(x_R : R\in\sigma)$ be an optimal solution of the linear program
  $L_Q(N_R : R\in\sigma)$. 
  \begin{enumerate} \itemsep=0pt
  \item For every $\sigma$-instance $D$ with $|R(D)|=N_R$ for all
  $R$ it holds that
  $|Q(D)|\le\prod_{R}N_R^{x_R}$
  \item There is a $\sigma$-instance $D$ such that $|R(D)|=N_R$ for all
  $R\in\sigma$ and
    $|Q(D)|\ge 2^{-n} \prod_{R}N_R^{x_R}$.
  \end{enumerate}
\end{theo}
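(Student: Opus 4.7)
The first part follows immediately from Lemma~\ref{lem:gm}: the constraints of $L_Q(N_R : R \in \sigma)$ coincide with those of $L_Q$ (only the objective differs), so any feasible $(x_R)$ is in particular a fractional edge cover of $Q$, and substituting $|R(D)| = N_R$ into the conclusion of Lemma~\ref{lem:gm} yields $|Q(D)| \le \prod_R N_R^{x_R}$.

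For the second part, the plan is to mirror the construction of Lemma~\ref{lem:lower}, but now using the dual of the \emph{weighted} linear program~(\ref{eqn:lp2}). First I would write that dual down:
\[
\max \sum_{a \in A_\sigma} y_a \quad\text{s.t.}\quad \sum_{a \in A_R} y_a \le \log N_R \text{ for all } R \in \sigma,\ y_a \ge 0.
\]
Both the primal and the dual are feasible and bounded (the trivial cover $x_R\equiv 1$ and $y_a\equiv 0$ are witnesses), so by strong duality there is an optimal dual solution $(y^*_a : a \in A_\sigma)$ with $\sum_a y^*_a = \sum_R x_R \log N_R$.

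Next, I would build the instance by setting $n_a := 2^{\lfloor y^*_a \rfloor}$, a positive integer since $y^*_a \ge 0$, and taking a preliminary instance $D_0$ with $R(D_0) := \{\, t \in \tup(A_R) : t(a) \in [n_a] \text{ for all } a \in A_R\,\}$ for every $R \in \sigma$. Dual feasibility gives $|R(D_0)| = \prod_{a \in A_R} n_a \le 2^{\sum_{a \in A_R} y^*_a} \le N_R$, while $Q(D_0)$ is the full product $\prod_{a \in A_\sigma}[n_a]$, so using $n_a \ge 2^{y^*_a - 1}$ in each of the $n$ coordinates,
\[
|Q(D_0)| = \prod_{a \in A_\sigma} n_a \ge 2^{\sum_a y^*_a - n} = 2^{-n}\prod_R N_R^{x_R}.
\]
To finish, I would pad each $R(D_0)$ with $N_R - |R(D_0)|$ arbitrary fresh $A_R$-tuples (possible since $\dom(a) = \mathbb N$ is infinite) to obtain $D$ with $|R(D)| = N_R$ exactly; since enlarging any relation can only enlarge the join, $|Q(D)| \ge |Q(D_0)| \ge 2^{-n}\prod_R N_R^{x_R}$.

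The one point worth flagging—and really the source of the $2^{-n}$ slack advertised in the introduction—is the rounding step $y^*_a \mapsto \lfloor y^*_a \rfloor$, which costs a factor of at most $2$ per attribute and hence exactly $2^{-n}$ overall. Unlike in Lemma~\ref{lem:lower}, the $\log N_R$ here are generically irrational so $y^*_a$ need not be rational; but the construction only uses floors of real numbers, so no rationality trick (such as scaling by $N_0^q$ to clear denominators) is needed. I do not anticipate any real obstacle beyond getting this accounting clean.
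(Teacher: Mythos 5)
Your proposal is correct and follows essentially the same route as the paper: part (1) via Lemma~\ref{lem:gm}, and part (2) via the dual of the weighted program, a rounded product instance, and padding the relations up to size $N_R$. The only cosmetic difference is the rounding convention (you take $2^{\lfloor y^*_a\rfloor}$ where the paper takes $\lfloor 2^{y_a}\rfloor$); both lose at most a factor of $2$ per attribute, so the accounting is identical.
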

\begin{proof}
  Statement (1) is an immediate consequence of Lemma~\ref{lem:gm}.
  To prove (2), we exploit LP duality again.
  The LP-dual of $L_Q(N_R :  R\in\sigma)$ is the following
  linear program $D_Q(N_R : R\in\sigma)$:
  $$
  \begin{array}{lll}
    \text{maximise}& \sum_{a}y_a &\\
    \text{subject to}
    & \sum_{a\in A_R}y_a\le\log N_R &\text{for all }R\in\sigma,\\
    &y_a\ge 0&\text{for all }a\in A_\sigma.
  \end{array}
  $$
  Let $(y_a : a\in A_\sigma)$ be an optimal solution for the
  dual. Then $\sum_{a\in A_\sigma} y_a=\sum_{R\in\sigma} x_R\cdot\log
  N_R$.

  For all $a\in A_\sigma$, let $y'_a=\log\floor{2^{y_a}}\le y_a$.  We
  set
  \[
  R':=\Big\{t \in \tup(A_R) : t(a)\in\big[2^{y'_a}\big]\text{ for all
  }a\in A_R \Big\}.
  \]
  Then
  \[
  |R'|=\prod_{a\in A_R}2^{y_a'}=\prod_{a\in A_R}\floor{2^{y_a}}
  \le 2^{\sum_{a\in A_R}y_a}\le
  2^{\log N_R}=N_R.
  \]
  We arbitrarily add tuples to $R'$ to obtain a relation $R(D)$ of
  size exactly $N_R$. In the resulting instance $D$, we have
  \[
  |Q(D)|\ge \prod_{a\in A_\sigma} 2^{y'_a}\ge\prod_{a\in A_\sigma}
  \frac{2^{y_a}}{2} = 2^{-n} \cdot 2^{\sum_{a\in A_\sigma} y_a}=
  2^{-n} \cdot 2^{\sum_{R\in\sigma}x_R\cdot\log N_R}=2^{-n} \cdot
  \prod_{R\in\sigma}N_R^{x_R}.
  \]
\end{proof}

Even though usually the query is much smaller than the database instance and
 hence we may argue that a constant factor that only depends on the size of the
 query is negligible, the exponential factor in the lower bound of
 Theorem~\ref{theo:size-bounds}(2) is unpleasant. In the following, we shall
 prove that the lower bound cannot be improved substantially. 
In the next example we show that we cannot replace the lower bound of
Theorem~\ref{theo:size-bounds}(2) by
$2^{-(1-\epsilon)n}\prod_{R}N_R^{x_R}$ for
any $\epsilon>0$.  This seems to indicate that maybe the approach to
estimating the size of joins through fractional edge covers is no longer
appropriate in the setting where the size of the relations is fixed. However,
we shall then see that, in some sense, there is no better approach. In
Theorem~\ref{theo:approxhardness}, we shall prove that there is no polynomial
time algorithm that, given a query $Q$ and relation sizes $N_R$, for
$R\in\sigma$, approximates the worst case size of the query answer to a
factor better than $2^{n^{1-\epsilon}}$.

\begin{example}\label{exa:sizelower}
 We give an example where $\prod_{R\in\sigma}N_R^{x_R}$ is roughly
 $2^n$ but $|Q(D)|$ is at most $2^{\epsilon n}$, where
 $n$ is the number of attributes of $Q$. Thus the factor
 $2^{-n}$ in Theorem~\ref{theo:size-bounds}(2) cannot be replaced with
 anything greater than $2^{-(1-\epsilon)n}$.

  Let $n\in\mathbb N$ be an integer, $0<\epsilon<1$ a fixed
  constant, and $A=\{a_1,\ldots,a_n\}$ a set of
  attributes with domain $\mathbb N$. Let $r:=\lfloor \epsilon
  n/\log n\rfloor$. We assume that $n$ is sufficiently large that
  $2^r>n$ holds. 
 For every $B\in\binom{[n]}{r}$,
    let $R_B$ be an $r$-ary relation with attributes $B$. Furthermore,
    for every $a\in A$, let $R_a$ be a unary relation with
    the only attribute $a$.
    Let $Q$ be the
    join of all these relations and
    let $\sigma$ be the resulting schema.
    
      For every $B\in\binom{[n]}{r}$, let $N_{R_B}=2^{r}-1$ and for 
      every $a\in A$, let $N_{R_a}=2$. Consider the linear program
      $L_Q(N_R : R\in\sigma)$. We obtain an optimal solution for this linear
      program by letting
      $x_{R_B}:=n/\left({r\binom{n}{r}}\right)$ and 
      $x_{R_a}:=0$. To see that this is an optimum solution, observe that
       $y_a:=\log (2^r-1)/r$ is a feasible solution of the dual LP with
       the same cost.

      We prove next that $\prod_{R}N_R^{x_R}=2^n(1-o(1))$:
      \[
      \prod_{R\in\sigma}N_R^{x_R}=
      \left({(2^{r}-1)^{n/\left({r\binom{n}{r}}\right)}}\right)^{\binom{n}{r}} \ge
      (2^{r}-1)^{n/r}\ge \left({2^r(1-1/n)}\right)^{n/r}= 
      2^n\left({1-1/n}\right)^{n/r}=2^n(1-o(1)).
      \]
      The second inequality follows from $2^r>n$ and the last equality follows
      from the fact if $n$ tends to infinity, then $(1-1/n)^n$ goes to $1/e$
      and $r$ goes to infinity as well.

      To complete the example, we prove that $|Q(D)|\le 2^{\epsilon
        n}$ for every instance $D$ respecting the constraints
      $N_R$. Let $D$ be a $\sigma$-instance with $|R(D)|=N_R$ for
      every $R\in\sigma$. From $N_{R_a}=2$ it follows that in $Q(D)$
      each attribute has at most two values, hence we can assume
      without loss of generality that $Q(D)\subseteq \{0,1\}^n$. Thus
      each tuple in $t\in Q(D)$ can be viewed as a subset $A_t=\{a\in
      A : t(a)=1\}$ of $A$. For every $B\in \binom{[n]}{r}$, it holds
      $\pi_B(Q(D))\le N_{R_B}=2^r-1$, hence the Vapnik-Chervonenkis
      dimension of $Q(D)$ is less than $r$. Thus by Sauer's Lemma, we
      have
      \[
      |Q(D)|\le n^r \le  n^{\epsilon n/\log n} =2^{\epsilon n},
      \]
      as claimed.
\end{example}

\subsection{Hardness of better approximation}

 There is a gap of $2^n$ between the upper and lower bounds of
 Theorem~\ref{theo:size-bounds}, which means that both  bounds approximate 
 the maximum
 size of $|Q(D)|$ within a factor of $2^n$. However, if $|Q(D)|$ is
 $2^{O(n)}$, then such an approximation is useless. We show that it is
 not possible to find a better approximation in polynomial time: the gap
 between an upper and a lower bound cannot be reduced to 
 $2^{O(n^{1-\epsilon})}$  (under  standard
 complexity-theoretic assumptions). 

 For the following statement, recall that $\textup{ZPP}$ is the class
 of decision problems that can be solved by a probabilistic
 polynomial-time algorithm with \emph{zero-error}. What this means is
 that, on any input, the algorithm outputs the correct answer or
 ``don't know'', but the probability over the random choices of the
 algorithm that the answer is ``don't know'' is bounded by $1/2$.
 Obviously $\textup{P} \subseteq \textup{ZPP} \subseteq \textup{NP}$,
 and the assumption that $\textup{ZPP} \not= \textup{NP}$ is almost as
 believable as $\textup{P} \not= \textup{NP}$ (see
 \cite{PapadimitriouBook}).

\begin{theo}\label{theo:approxhardness}
For a given query $Q$ with schema $\sigma$ and a given set of size 
constraints $(N_R : R\in \sigma)$, denote by $M$ the maximum 
of $|Q(D)|$ over databases
satisfying $|R(D)| = N_R$ for every $R\in\sigma$.
If for some $\epsilon>0$, there is a polynomial-time algorithm that,
given a query $Q$ with $n$ attributes and size constraints $N_R$, 
computes two values $M_L$ and $M_U$ with
$M_L \le M \le M_U$ and
$M_U \le M_L 2^{n^{1-\epsilon}}$,
then $\textup{ZPP} = \textup{NP}$.
\end{theo}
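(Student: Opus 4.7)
The plan is to reduce from the gap version of the maximum independent set problem. By H{\aa}stad's inapproximability theorem, for every $\delta > 0$, no polynomial-time algorithm can distinguish graphs $G$ on $n$ vertices with $\alpha(G) \geq n^{1-\delta}$ from those with $\alpha(G) \leq n^{\delta}$, unless $\textup{NP} = \textup{ZPP}$. Given $G=(V,E)$ with $|V|=n$, I construct the join query $Q_G$ with attribute set $\{a_v : v \in V\}$, a unary relation name $R_v$ on $\{a_v\}$ for every vertex, and a binary relation name $R_e$ on $\{a_u,a_v\}$ for every edge $e=\{u,v\}$. The size constraints are $N_{R_v} := 2$ and $N_{R_e} := 3$. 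Crucially, the number of attributes of $Q_G$ is exactly $n$.

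The core of the argument is the sandwich
\[
2^{\alpha(G)} \;\leq\; M \;\leq\; 2\,n^{\alpha(G)}.
\]
For the lower bound I exhibit the explicit database $D^*$ with $R_v(D^*)=\{0,1\}$ and $R_e(D^*)=\{(0,0),(0,1),(1,0)\}$; then $Q_G(D^*)$ is precisely the set of $\{0,1\}$-indicator vectors of independent sets of $G$, so $|Q_G(D^*)| \geq 2^{\alpha(G)}$, since every subset of a maximum independent set is independent. For the upper bound, let $D$ be any database meeting the size constraints. After relabeling, one may assume $R_v(D)=\{0,1\}$ for every $v$, and to maximize $|Q_G(D)|$ the adversary takes each $R_e(D)$ to be a $3$-element subset of $\{0,1\}^2$, thus excluding a unique pair $\chi_e \in \{0,1\}^2$. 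I view $\mathcal{F} := Q_G(D) \subseteq \{0,1\}^V$ as a family of subsets of $V$ and claim its VC-dimension is at most $\alpha(G)$: if a set $T \subseteq V$ contains both endpoints of an edge $\{u,v\}\in E$, then the assignment $\tau \in \{0,1\}^T$ with $(\tau_u,\tau_v)=\chi_{\{u,v\}}$ cannot be the restriction of any element of $\mathcal{F}$, so $T$ is not shattered. By the Sauer-Shelah lemma, $|\mathcal{F}| \leq \sum_{i=0}^{\alpha(G)}\binom{n}{i} \leq 2\,n^{\alpha(G)}$.

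To conclude, suppose a polynomial-time algorithm outputs $M_L, M_U$ with $M_L \leq M \leq M_U$ and $M_U \leq M_L\cdot 2^{n^{1-\epsilon}}$. Set $\delta := \epsilon/2$ (assuming $\epsilon<1$, otherwise the statement is trivial) and run the algorithm on $(Q_G,(N_R))$. In the Yes case ($\alpha(G) \geq n^{1-\delta}$), the lower bound forces $\log_2 M_L \geq n^{1-\delta} - n^{1-\epsilon}$; in the No case ($\alpha(G) \leq n^{\delta}$), the upper bound forces $\log_2 M_L \leq n^{\delta}\log n + 1$. Since $\delta<\epsilon$ and $\delta<1/2$, for $n$ large enough one has $n^{1-\delta}-n^{1-\epsilon} > n^{\delta}\log n + 1$, so comparing $\log_2 M_L$ against the threshold $n^{\delta}\log n + 1$ decides the gap problem in polynomial time, yielding $\textup{NP}=\textup{ZPP}$. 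The principal difficulty lies in the upper bound on $M$, which must hold against every adversarial choice of $3$-element relations $R_e(D)$; the VC-dimension and Sauer-Shelah argument is the crucial ingredient that makes this bound tight and uniform in the adversarial choice.
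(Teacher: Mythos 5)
Your reduction is correct, and its skeleton---encode a graph $G$ as a join query whose primal graph is $G$, then invoke H{\aa}stad's gap for maximum independent set---is the same as the paper's. Where you genuinely differ is in the combinatorial core relating $M$ to $\alpha(G)$. The paper uses only binary edge relations with $N_{R_e}=2$ and proves the exact identity $M=2^{\alpha(G)}$ (Lemma~\ref{lem:indpendentset}); the upper-bound direction there is a hands-on argument that groups attributes into classes forced to agree across all answer tuples and shows that a system of class representatives is independent in $G$. You instead take $N_{R_e}=3$ together with unary relations of size $2$ (which is what forces each attribute to two values, since the ternary edge relations alone no longer do), settle for the two-sided sandwich $2^{\alpha(G)}\le M\le 2n^{\alpha(G)}$, and obtain the upper bound from Sauer--Shelah: every edge's projection misses some pattern in $\{0,1\}^2$, so no shattered set contains an edge, hence the VC-dimension of $Q_G(D)$ is at most $\alpha(G)$. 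This is precisely the device the paper deploys in Example~\ref{exa:sizelower} for a different purpose, so in effect you have imported that example's technique into the hardness proof. The cost is the multiplicative slack between $2^{\alpha(G)}$ and $2n^{\alpha(G)}$, which forces the slightly more delicate threshold comparison $n^{1-\delta}-n^{1-\epsilon}>n^{\delta}\log n+1$ (fine for $\delta=\epsilon/2$ and $\epsilon<1$, and the case $\epsilon\ge 1$ indeed reduces to $\epsilon<1$ since the approximation guarantee only gets stronger); the benefit is that you replace the bespoke exact lemma with a black-box application of Sauer--Shelah. Both arguments are sound.
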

For the proof of Theorem~\ref{theo:approxhardness}, we establish a
connection between the query size and the maximum independent set
problem (Lemma~\ref{lem:indpendentset}).  Then we get our
inapproximability result by reduction from the following result by
H{\aa}stad:

\begin{theo}[\cite{MR1687331}]\label{th:hastad}
If for some $\epsilon_0>0$ there is a polynomial-time algorithm that,
given an $n$-vertex graph $G$, can distinguish between the 
cases $\alpha(G)\le n^{\epsilon_0}$ and
$\alpha(G) \ge n^{1-\epsilon_0}$,
then $\textup{ZPP}=\textup{NP}$.
\end{theo}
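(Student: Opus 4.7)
The plan is to derive Theorem~\ref{theo:approxhardness} by reduction from the inapproximability of the independence number stated in Theorem~\ref{th:hastad}. Assume, toward a contradiction, that a polynomial-time algorithm produces $M_L \le M \le M_U$ with $M_U \le M_L \cdot 2^{n^{1-\epsilon}}$ for some fixed $\epsilon > 0$. Setting $\epsilon_0 := \epsilon/2$, I will use it to distinguish, for any graph $G$ on $N$ vertices, the cases $\alpha(G) \le N^{\epsilon_0}$ and $\alpha(G) \ge N^{1-\epsilon_0}$, which by Theorem~\ref{th:hastad} suffices to conclude $\textup{ZPP}=\textup{NP}$.

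The reduction maps a graph $G=(V,E)$ with $|V|=N$ to a query $Q_G$ on attributes $\{a_v : v \in V\}$ together with size constraints, as follows. The schema contains a unary relation $R_v(a_v)$ with $N_{R_v} = 2$ for every vertex $v$ and a binary relation $R_{uv}(a_u, a_v)$ with $N_{R_{uv}} = 3$ for every edge $\{u,v\} \in E$. The heart of the argument is the sandwich
\[ 2^{\alpha(G)} \;\le\; M \;\le\; \sum_{i=0}^{\alpha(G)} \binom{N}{i} \;\le\; 2\cdot N^{\alpha(G)}, \]
which I view as the content of Lemma~\ref{lem:indpendentset}. The lower bound is by exhibition: taking $R_v(D) := \{0,1\}$ and $R_{uv}(D) := \{(0,0),(0,1),(1,0)\}$ makes $Q_G(D)$ equal to the set of indicator vectors of independent sets in $G$, and all $2^{\alpha(G)}$ subsets of a maximum independent set are independent. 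For the upper bound I would adapt the argument of Example~\ref{exa:sizelower}: for any admissible $D$, after renaming the two elements of each $R_v(D)$ as $0$ and $1$, the answer $Q_G(D)$ embeds into $\{0,1\}^N$, and for every edge $\{u,v\}\in E$ the projection $\pi_{\{a_u,a_v\}}(Q_G(D))$ has at most $|R_{uv}(D)|=3 < 4$ elements, so $\{u,v\}$ is not shattered. Hence every shattered set is an independent set of $G$, the VC-dimension of $Q_G(D)$ is at most $\alpha(G)$, and Sauer's lemma yields the claimed upper bound.

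Given the sandwich, the gap analysis is routine. Here $n = N$, so if $\alpha(G) \ge N^{1-\epsilon_0}$ one has $\log M_U \ge \log M \ge N^{1-\epsilon_0}$, while if $\alpha(G) \le N^{\epsilon_0}$ one has $\log M_L \le \log M \le N^{\epsilon_0}\log N + 1$ and hence $\log M_U \le N^{\epsilon_0}\log N + 1 + N^{1-\epsilon}$. For $\epsilon_0 = \epsilon/2$ the latter is asymptotically dwarfed by $N^{1-\epsilon_0} = N^{1-\epsilon/2}$, so comparing $\log M_U$ to the threshold $N^{1-\epsilon_0}$ correctly separates the two cases for all sufficiently large $N$, while finitely many small instances can be decided by brute force.

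The main obstacle I anticipate is the Sauer-lemma upper bound: the adversary is free to populate $R_{uv}(D)$ with any $3$-element subset of $\dom(a_u)\times\dom(a_v)$ and need not respect the tidy $\{0,1\}$-encoding used for the lower bound. The VC argument circumvents this by working with whichever two values each $R_v(D)$ happens to contain; once those are renamed to $0$ and $1$, the purely numerical constraint $|R_{uv}(D)|=3$ already forces the projection onto any edge to miss at least one $\{0,1\}^2$-pattern, which is exactly what caps the VC-dimension by $\alpha(G)$.
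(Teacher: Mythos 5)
There is a fundamental mismatch here: the statement you were asked to prove is Theorem~\ref{th:hastad} itself --- H{\aa}stad's inapproximability theorem for the independence number --- but your proposal does not prove it, it \emph{uses} it. You open by announcing a reduction ``from the inapproximability of the independence number stated in Theorem~\ref{th:hastad}'' and invoke that theorem as a black box to conclude $\textup{ZPP}=\textup{NP}$; with respect to the assigned statement this is circular, and the actual content of Theorem~\ref{th:hastad} is nowhere addressed. The paper does not prove it either: it is an external result quoted with a citation to \cite{MR1687331}, whose proof rests on the PCP theorem and H{\aa}stad's tight hardness-of-approximation machinery (with the zero-error randomized reduction giving the $\textup{ZPP}$ rather than $\textup{P}$ conclusion). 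None of the combinatorics in your write-up --- the query construction, the VC-dimension bound, Sauer's lemma --- bears on that.

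For what it is worth, as a proof of Theorem~\ref{theo:approxhardness} (the statement you actually argue) your reduction is essentially sound and differs mildly from the paper's: the paper's Lemma~\ref{lem:indpendentset} uses only binary edge relations with $N_R=2$ and pins down the maximum of $|Q(D)|$ \emph{exactly} as $2^{\alpha(G)}$ by a direct structural argument, whereas you add unary relations, set the edge constraints to $3$, and settle for the two-sided estimate $2^{\alpha(G)} \le M \le 2N^{\alpha(G)}$ via Sauer's lemma; the looser upper bound still suffices for the gap analysis with $\epsilon_0=\epsilon/2$. But this does not repair the gap: the theorem in question remains unproved by your proposal, and indeed is not the kind of statement one can prove in a page.
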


Following is the announced connection between worst-case query-size
subject to relation-size constraints and maximum independent sets:

\begin{lem}\label{lem:indpendentset}
 Let $Q$ be a join query with schema $\sigma$ and let $N_R:=2$ for all
 $R\in\sigma$. Let $G$ be the primal graph of $Q$ and let $\alpha(G)$ be
 the size of the maximum independent set in $G$. The maximum  of
 $|Q(D)|$, taken over database instances satisfying $|R(D)|=N_R$ for
 every $R\in \sigma$, is exactly $2^{\alpha(G)}$.
\end{lem}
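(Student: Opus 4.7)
My plan is to prove the two inequalities separately: the lower bound via an explicit construction guided by a maximum independent set, and the upper bound by reducing to a system of XOR constraints on bits that encode which tuple of each $R(D)$ is selected.

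For the lower bound, fix a maximum independent set $I \subseteq A_\sigma$ with $|I| = \alpha(G)$, and let $C := A_\sigma \setminus I$. Since $A_R$ is a clique in $G$ and $I$ is independent, $|A_R \cap I| \leq 1$ for every $R \in \sigma$. Assign each attribute the domain $\{0, 1\}$. If $A_R \cap I = \{a\}$, let $R(D) := \{t_0, t_a\}$ where $t_0 \equiv 0$ on $A_R$ and $t_a$ has value $1$ at $a$ and $0$ elsewhere; if $A_R \cap I = \emptyset$, let $R(D)$ consist of the all-zero tuple on $A_R$ together with an arbitrary second tuple. Then $|R(D)| = 2$ for every $R$, and a direct verification shows that $Q(D)$ is precisely the set of tuples $t \in \tup(A_\sigma)$ with $t|_C \equiv 0$ and $t|_I$ arbitrary, yielding $|Q(D)| = 2^{|I|} = 2^{\alpha(G)}$.

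For the upper bound, fix any $\sigma$-instance $D$ with $|R(D)| = 2$, and assume $Q(D) \neq \emptyset$ (the empty case is trivial). For each attribute $a$, let $V_a := \pi_a(Q(D))$; because $a$ belongs to some relation of size $2$, $|V_a| \leq 2$. Let $A^* := \{a : |V_a| = 2\}$ and $A^\circ := A_\sigma \setminus A^*$. Attributes in $A^\circ$ are constant across $Q(D)$, so $\pi_{A^*}$ is injective on $Q(D)$ and $|Q(D)| = |\pi_{A^*}(Q(D))|$. Identifying $V_a$ with $\{0, 1\}$ for $a \in A^*$, the key observation is that for every relation $R$, its two tuples must differ on every attribute of $A_R \cap A^*$ (otherwise that attribute would be fixed across $Q(D)$, contradicting $|V_a| = 2$), so they form a complementary pair $s_R, \bar{s}_R$ on $A_R \cap A^*$.

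Now encode each $t \in Q(D)$ by bits $(b_R(t))_{R \in \sigma}$ where $b_R(t) = 0$ or $1$ according to which of the two tuples of $R(D)$ equals $\pi_{A_R}(t)$; injectivity of this map follows from $\bigcup_R A_R = A_\sigma$. When two relations $R_1, R_2$ share an attribute $a \in A^*$, the identity $s_{R_1}(a) \oplus b_{R_1} = t(a) = s_{R_2}(a) \oplus b_{R_2}$ yields the XOR constraint $b_{R_1} \oplus b_{R_2} = s_{R_1}(a) \oplus s_{R_2}(a)$, whose solutions within a connected component of $G^* := G[A^*]$ form an affine subspace of dimension one. Hence the number of valid bit vectors, and therefore $|Q(D)|$, is at most $2^{c(G^*)}$, where $c(G^*)$ is the number of connected components of $G^*$. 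Picking one vertex per component gives an independent set in $G^*$, so $c(G^*) \leq \alpha(G^*) \leq \alpha(G)$, and the two inequalities combine to give $|Q(D)| \leq 2^{\alpha(G)}$. The main obstacle will be justifying the complementary-pair structure carefully and dealing with relations having $A_R \cap A^* = \emptyset$, whose bit is forced by the constant values on $A^\circ$ and therefore contributes no extra degree of freedom beyond what is counted by $c(G^*)$.
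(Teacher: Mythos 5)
Your proof is correct. The lower bound is essentially identical to the paper's: both build the instance from a maximum independent set $I$ by giving each relation the all-zero tuple plus (when $A_R\cap I=\{a\}$) the tuple that is $1$ only at $a$, and both note that padding the remaining relations with an arbitrary second tuple cannot destroy the $2^{|I|}$ solutions. (Your word ``precisely'' is a slight overstatement -- the arbitrary padding tuples could create extra solutions -- but this only helps the lower bound.) The upper bound, however, is a genuinely different argument. The paper first normalises so that $Q(D)\subseteq\{0,1\}^n$ \emph{and the all-zero tuple lies in $Q(D)$}; this normalisation forces any two two-valued attributes sharing a relation to be \emph{equal} (not merely correlated) on all of $Q(D)$, and the proof then greedily extracts attributes $a_1,\dots,a_t$ whose ``always-equal'' classes $S_{a_i}$ cover $S$, showing directly that $\{a_1,\dots,a_t\}$ is independent and that these $t$ bits determine everything. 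You skip the all-zero normalisation and instead allow the two tuples of each relation to be an arbitrary complementary pair on $A_R\cap A^*$, recording for each solution which tuple of each relation it uses and observing that shared $A^*$-attributes impose affine constraints over $\mathbb{F}_2$ on these selection bits; counting then reduces to counting connected components of $G[A^*]$. Your route is slightly more machinery (and leaves one small step implicit: that the number of components of the ``relations sharing an $A^*$-attribute'' graph equals the number of components of $G[A^*]$, which is routine since each $A_R\cap A^*$ is a clique of $G[A^*]$ and every $A^*$-vertex lies in some relation), but it buys a cleaner structural picture -- the solution set is an affine subspace of dimension $c(G[A^*])$ once the constant attributes are stripped -- whereas the paper's greedy argument is more elementary and avoids any linear-algebraic language. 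Both correctly conclude with an independent set of the required size, yours by taking one vertex per component, the paper's by its explicit sequence $a_1,\dots,a_t$.
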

\begin{proof}
  Let $A_R$ be the attributes of $R \in \sigma$. For this proof we
  write $A$ instead of $A_\sigma$.  First we give a database $D$ with
  $|Q(D)|\ge 2^{\alpha(G)}$. Let $I\subseteq A$ be an independent set
  of size $\alpha(G)$. Since $I$ is independent, $|A_R \cap I|$ is
  either 0 or 1 for every $R\in \sigma$. If $|A_R \cap I|=0$, then we
  define $R(D)$ to contain a tuple that is 0 on every attribute. If
  $A_R \cap I=\{a\}$, then we define $R(D)$ to contain a tuple that is
  0 on every attribute and a tuple that is 1 on $a$ and 0 on every
  attribute in $A_R \setminus \{a\}$. We claim that
\[
Q(D)=\{t\in\tup(A) : t(a)\in \{0,1\} \text{ for all } a\in I,
t(a)=0 \text{ for all } a\in A\setminus I \}.
\]
Clearly, the value of an attribute in $I$ is either 0 or 1, and every
attribute in $A\setminus I$ is forced to 0. Furthermore, any combination of 0 
and 1 on the attributes of $I$ is
allowed as long as all the other attributes are 0. Thus
$|Q(D)|=2^{\alpha(G)}$. Note that a relation $R$ with $|A_R \cap
I|=0$ contains only one tuple in the definition above. To satisfy 
the requirement $|R(D)|=N_R=2$, we
can add an arbitrary tuple to each such relation $R$; this cannot
decrease $|Q(D)|$.

Next we show that if $|R(D)|=2$ for every relation $R\in \sigma$, 
then $|Q(D)|\le 2^{\alpha(G)}$. Since $|R(D)|=2$ for every relation, 
every attribute in $A$ can have at most two values in $Q(D)$; without 
loss of generality it can be assumed that $Q(D)\subseteq
\{0,1\}^{|A|}$. Furthermore, it can be assumed (by a mapping of
the domain of the attributes) that the all-0 tuple is in $Q(D)$.

Let $S$ be the set of those attributes that have two values  in $Q(D)$, i.e.,
\[
S=\{a\in A : |\pi_{\{a\}}(Q(D))|=2\}.
\]
For every $a\in S$, let $S_a$ be the set of those attributes that are
the same as $a$ in every tuple of $Q(D)$, i.e.,
\[
S_a=\{b\in S : t(a)=t(b) \text{ for every } t\in Q(D) \}.
\]

We define a sequence $a_1$, $a_2$, $\dots$ of attributes by letting
$a_i$ be an arbitrary attribute in $S\setminus \bigcup_{j<i}
S_{a_j}$. Let $a_t$ be the last element in this sequence, which means
that $\bigcup_{i=1}^{t}S_{a_i}=S$. We claim that $a_1$, $\dots$, $a_t$
are independent in $G$, implying $t\le \alpha(G)$. Assume that $a_i$
and $a_j$ ($i < j$) are adjacent in $G$; this means that there is an
$R\in \sigma$ with $a_i,a_j\in A_R$. By assumption, the all-0 tuple is
in $R(D)$. As $a_i,a_j\in S$, there has to be a $t_1\in R(D)$ with
$t_1(a_i)=1$ and a $t_2\in R(D)$ with $t_2(a_j)=1$. Since $|R(D)|=2$
and the all-0 tuple is in $R(D)$, we have $t_1=t_2$. But this means
that $a_i$ and $a_j$ have the same value in both tuples in $R(D)$,
implying $a_j\in S_{a_i}$. However, this contradicts the way the
sequence was defined.

Now it is easy to see that $|Q(D)|\le 2^t\le 2^{\alpha(G)}$: by
setting the value of $a_1,\ldots,a_t$, the value of every
attribute in $S$ is uniquely determined and the attributes in $A\setminus
S$ are the same in every tuple of $Q(D)$. 
\end{proof}

\begin{proof}[Proof of Theorem~\ref{theo:approxhardness}]
  We show that if such $M_L$ and $M_U$ could be determined in
  polynomial time, then we would be able to distinguish between the
  two cases of Theorem~\ref{th:hastad}.  Given an $n$-vertex graph
  $G = (V,E)$, we construct a query $Q$ with attributes $V$ and
  schema $\sigma=E$. For each edge $uv\in E$, there is a relation $R_{uv}$
  with attributes $\{u,v\}$. We set $N_R=2$ for every relation
  $R\in\sigma$. Observe that the primal graph of $Q$ is $G$. Thus by
  Lemma~\ref{lem:indpendentset}, $M=2^{\alpha(G)}$.

Set $\epsilon_0:=\epsilon/2$. 
In case (1) of Theorem~\ref{th:hastad}, $\alpha(G)\le n^{\epsilon_0}$, hence
$M_L\le M \le 2^{n^{\epsilon_0}}$ and 
\[
M_U\le M_L 2^{n^{1-\epsilon}} \le 2^{n^{\epsilon_0}+n^{1-\epsilon}} < 
2^{n^{1-\epsilon_0}}
\]
(if $n$ is sufficiently large). On the other hand, in case (2) we have
$\alpha(G)\ge n^{1-\epsilon_0}$, which implies $M_U\ge M = 2^{\alpha(G)}
\geq 2^{n^{1-\epsilon_0}}$. Thus we can distinguish between the two cases
by comparing $M_U$ with $2^{n^{1-\epsilon_0}}$.
\end{proof}


\section{Average-case model}
\label{sec:average}

In this section we assume that the database is randomly generated
according to the following model.  Let $\sigma$ be a schema and let
$A_R$ be the set of attributes of $R \in \sigma$. For every $R \in
\sigma$, let $p_R : \mathbb N \rightarrow (0,1)$ be a function of $N$,
and let $p(N) = (p_R(N) : R \in \sigma)$.  We denote by ${\cal
  D}(N,p(N))$ the probability space on $\sigma$-instances with domain
$[N]$ defined by placing each tuple $t \in [N]^{A_R}$ in $R(D)$ with
probability $p_R(N)$, independently for each tuple $t$ and each $R \in
\sigma$.  Typical probability functions of interest are $p_R(N) =
1/2$, $p_R(N) = C \cdot N^{1-|A_R|}$, or $p_R(N) = N^{1-|A_R|}\log N$.
When $p_R(N) = 1/2$ for every $R \in \sigma$, we are dealing with the
uniform distribution over $\sigma$-instances with domain $[N]$.

\subsection{Size bounds and concentration}
\label{sec:avg-bounds}
Let $Q$ be a join query with schema $\sigma$,
let $n$ be the number of attributes of $Q$,
and let $m$ be the number of relation names in $\sigma$.
Let $X$ denote the size of the query answer $Q(D)$
when $D$ is taken from ${\cal D}(N,p(N))$. The 
expectation of $X$ is, trivially,
\begin{equation}
\Exp[X] = N^n \prod_{R \in \sigma} p_R(N).
\label{eqn:expectation}
\end{equation}
We want to determine under what circumstances 
$|Q(D)|$ is concentrated around this value. For
this we need to compute the variance of $X$,
which depends on a parameter of $Q$ defined
next.

For every $R \in \sigma$, let $w_R$ be a positive real
weight, and let $w = (w_R : R \in \sigma)$. 
The \emph{density} of $Q$ with respect to $w$ is defined as
$\density(Q,w) = \frac{1}{n} \sum_{R \in \sigma} w_R$.
Note that if $w_R = 1$ for every $R$, then
the density is $m/n$.
For every $B \subseteq A_\sigma$, let $Q[B]$ denote
the subquery induced by $B$; that is, $Q[B]$ is the subquery
formed by all the atoms $R \in \sigma$ that have all attributes
in $B$. 
The \emph{maximum density} of $Q$ with respect to $w$ is
$\maxdensity(Q,w) = \max \{ \density(Q[B],w) : B \subseteq A_\sigma, B
\not= \emptyset \}$.

In applications to random instances,
we typically fix $w_R(N)$ to $\log_2(1/p_R(N))$
and write $\density(Q[B])$ and $\maxdensity(Q)$ instead
of $\density(Q[B],w)$ and $\maxdensity(Q,w)$.
For this choice of weights, a crucial distinction is made
according to whether
$\maxdensity$ is larger or smaller than $\log_2(N)$.
In the first case, there exists subquery $Q[B]$ 
whose expected number of solutions is smaller than $1$,
and, therefore, by Markov's inequality, $Q$ itself has no solutions 
at all with probability bounded away from $0$.
In the second case, every subquery has at least
one solution in expectation, and we can bound
the variance of $X$ as a function of $\maxdensity$. 
Since this will be of use later on, we derive it in detail. 
\begin{prop}\label{prop:variance}
If $\maxdensity \leq \log_2(N)$, then
\begin{equation}
\Var[ X ] \leq \Exp[X]^2 \cdot (2^n-1)2^{\maxdensity-\log_2 N}.
\label{eqn:variance}
\end{equation}
\end{prop}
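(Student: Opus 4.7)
The plan is to compute the variance directly by writing $X$ as a sum of indicators and carefully accounting for positive correlations between the indicators based on the shared structure of the underlying tuples. For each $t \in [N]^{A_\sigma}$, let $X_t$ be the indicator of the event ``$t \in Q(D)$''; then $X = \sum_t X_t$ and $\Var[X] = \sum_{t,t'} \mathrm{Cov}(X_t, X_{t'})$. The event $X_t$ is the conjunction over $R \in \sigma$ of the independent events $E_{t,R} := [\pi_{A_R}(t) \in R(D)]$, each of probability $p_R$, so the analysis reduces to understanding when $E_{t,R}$ and $E_{t',R}$ are identical events.

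The key observation is that $E_{t,R} = E_{t',R}$ holds exactly when $\pi_{A_R}(t) = \pi_{A_R}(t')$, i.e., when $t$ and $t'$ agree on all attributes in $A_R$. Writing $B = B(t,t') = \{a \in A_\sigma : t(a) = t'(a)\}$ for the agreement set, this condition is $A_R \subseteq B$, i.e., $R$ is an atom of the induced subquery $Q[B]$. Hence
\[
\Exp[X_t X_{t'}] = \prod_{R \in Q[B]} p_R \cdot \prod_{R \notin Q[B]} p_R^2 = \prod_{R \in \sigma} p_R^2 \cdot \prod_{R \in Q[B]} p_R^{-1},
\]
and therefore $\mathrm{Cov}(X_t, X_{t'}) = \prod_R p_R^2 \bigl(\prod_{R \in Q[B]} p_R^{-1} - 1\bigr)$. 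This vanishes when $Q[B]$ is empty (in particular when $B = \emptyset$), so only non-empty $B$ with $Q[B] \ne \emptyset$ contribute.

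Next I group pairs by their agreement set. The number of ordered pairs $(t, t')$ with $B(t,t') = B$ is at most $N^n \cdot N^{n-|B|} = N^{2n - |B|}$. Using $\Exp[X]^2 = N^{2n} \prod_R p_R^2$ from (\ref{eqn:expectation}) and bounding the parenthesized factor by $\prod_{R \in Q[B]} p_R^{-1}$, I obtain
\[
\frac{\Var[X]}{\Exp[X]^2} \;\le\; \sum_{\emptyset \ne B \subseteq A_\sigma} N^{-|B|} \prod_{R \in Q[B]} p_R^{-1} \;=\; \sum_{\emptyset \ne B \subseteq A_\sigma} 2^{-|B| \log_2 N} \cdot 2^{\sum_{R \in Q[B]} w_R},
\]
where $w_R = \log_2(1/p_R)$. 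To finish I need to turn $\sum_{R \in Q[B]} w_R$ into something involving $\maxdensity$: since $A_{Q[B]} \subseteq B$ is non-empty (when $Q[B] \ne \emptyset$), the definition of $\maxdensity$ gives $\sum_{R \in Q[B]} w_R = |A_{Q[B]}| \cdot \density(Q[B], w) \le |A_{Q[B]}| \cdot \maxdensity \le |B| \cdot \maxdensity$.

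Substituting yields $\Var[X]/\Exp[X]^2 \le \sum_{\emptyset \ne B} 2^{|B|(\maxdensity - \log_2 N)}$. Under the hypothesis $\maxdensity \le \log_2 N$ the exponent $\maxdensity - \log_2 N$ is non-positive, so for every $|B| \ge 1$ we have $2^{|B|(\maxdensity - \log_2 N)} \le 2^{\maxdensity - \log_2 N}$, and summing over the $2^n - 1$ non-empty subsets of $A_\sigma$ gives the desired bound (\ref{eqn:variance}). The main subtlety I expect is the bookkeeping in the second step—making sure the right factor of $p_R^{-1}$ appears exactly for the relations in $Q[B]$—and the minor inequality $|A_{Q[B]}| \le |B|$ that lets the maximum density be brought into play.
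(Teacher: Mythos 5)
Your proposal is correct and follows essentially the same route as the paper's proof: both partition the pairs of tuples by their exact agreement set $B$, observe that the shared indicator events are precisely those of the relations in $Q[B]$, bound the number of such pairs by $N^{2n-|B|}$, and use $\sum_{R\in\sigma_B} w_R \le |B|\maxdensity$ together with $\maxdensity \le \log_2 N$ to bound each of the $2^n-1$ nonempty-$B$ contributions by $\Exp[X]^2\, 2^{\maxdensity-\log_2 N}$. Working with covariances rather than with $\Exp[X^2]-\Exp[X]^2$ is only a cosmetic difference.
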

\begin{proof}
For this proof we write $A$ instead of $A_\sigma$.
For every $R \in \sigma$, let $A_R$ be the set of attributes of $R$
and for every $t \in [N]^{A_R}$, let $X(R,t)$ be the indicator
for the event $t \in R(D)$. These are mutually independent random
variables and the expectation of $X(R,t)$ is $p_R(N)$. For every $t \in [N]^A$, 
let $X(t)$ be the indicator for the event $t \in Q(D)$.
Note that $X(t) = \prod_{R \in \sigma} X(R,t_R)$, where $t_R$ denotes 
the projection of $t$ to the attributes of $R$. Also
$X = \sum_t X(t)$. Towards proving (\ref{eqn:variance}),
let us bound
\begin{align}
\Exp\left[{X^2}\right] = \sum_{s,t} \Exp\big[{X(s) X(t)}\big].
\end{align}
For every fixed $B \subseteq A$, let $F_B$ be
the set of pairs $(s,t) \in [N]^A \times [N]^A$ 
such that $s(a)=t(a)$ for every $a \in B$ and
$s(a) \not= t(a)$ for every $a \in A-B$. Clearly,
$(F_B)_{B \subseteq A}$ is a partition of $[N]^A \times [N]^A$
and therefore
\begin{equation}
\sum_{s,t} \Exp\big[X(s) X(t)\big] =
\sum_{B \subseteq A} \sum_{(s,t) \in F_B}
\Exp\big[X(s) X(t)\big].
\end{equation}
Now fix some $B \subseteq A$ and $(s,t) \in F_B$,
and let $\sigma_B$ be the relations appearing
in $Q[B]$. Observe that since $s$ and $t$
agree on $B$ we have $t_R = s_R$ for
every $R \in \sigma_B$ and therefore $X(R,s_R) X(R,t_R) =
X(R,s_R)$ for every such $R$. Hence:
\begin{equation}
X(s) X(t) = \prod_{R \in \sigma} X(R,s_R)
\prod_{R \in \sigma} X(R,t_R) =
\prod_{R \in \sigma-\sigma_B} X(R,s_R) X(R,t_R)
\prod_{R \in \sigma_B} X(R,s_R).
\end{equation}
All variables in the right-hand side product
are mutually independent because either they involve different
relations or different tuples. Therefore,
\begin{equation}
\Exp\big[X(s) X(t)\big] = \prod_{R \in \sigma-\sigma_B} p_R^2
\prod_{R \in \sigma_B} p_R = \prod_{R \in \sigma} p_R^2
\prod_{R \in \sigma_B} p_R^{-1}.
\end{equation}
The number of pairs $(s,t)$ in $F_B$ is bounded 
by $N^{2|A|-|B|}$. Therefore,
\begin{equation}
\sum_{(s,t) \in F_B} \Exp\big[X(s) X(t)\big] \leq
N^{2|A|-|B|}\prod_{R\in\sigma} p_R^2 
\prod_{R\in\sigma_B}p_R^{-1} =
\Exp\big[X\big]^2 \cdot N^{-|B|} \prod_{R \in \sigma_B} p_R^{-1}.
\label{eqn:prod}
\end{equation}
For $B = \emptyset$, the second factor in the right-hand side
of (\ref{eqn:prod}) is $1$ and
we get $\Exp\big[X\big]^2$. For
$B \not= \emptyset$, we have
\begin{equation}
N^{-|B|} \prod_{R\in\sigma_B} p_R^{-1} =
N^{-|B|} 2^{|B|\density(Q[B])} \leq \left({N^{-1} 2^{\maxdensity}}
\right)^{|B|} \leq 2^{\maxdensity-\log_2 N}
\end{equation}
where the first inequality holds 
because $\density(Q[B]) \leq \maxdensity$,
and the second inequality holds because $|B| > 0$
and $\maxdensity \leq \log_2(N)$.
Putting it all together we get
\begin{equation}
\Exp\big[X^2\big] = \sum_{B \subseteq A} \sum_{(s,t) \in F_B}
\Exp\big[X(s) X(t)\big] \leq \Exp\big[X\big]^2 + (2^n-1)\Exp\big[X\big]^2 
2^{\maxdensity-\log_2 N}.
\end{equation}
Since $\Var\big[X\big] = \Exp\big[X^2\big] - \Exp\big[X\big]^2$,
this proves (\ref{eqn:variance}).
\end{proof}

In the following, if $X$ is a random variable
defined on the probability space ${\cal D}(N,p(N))$,
the expression
``$X \sim x$ almost surely'' means that
for every $\epsilon > 0$ and $\delta > 0$, there
exists $N_0$ such that,
for every $N \geq N_0$, we have
$\Prob[ |X - x| \leq \epsilon x ]
\geq 1-\delta$.
With all this notation, we obtain the following
threshold behaviour as an immediate consequence
to Markov's and Chebyshev's inequalities:

\begin{theo} \label{thm:avgcase}
Let $Q$ be a join query with schema $\sigma$ and $n$
attributes.
For every $R \in \sigma$, let $p_R : \mathbb N \to (0,1)$,
$p(N) = (p_R(N) : R \in \sigma)$, 
and $\maxdensity(N) = \maxdensity(Q,w_R(N))$ for
$w_R(N) = \log_2(1/p_R(N))$.
Let $D$ be drawn from ${\cal D}(N,p(N))$ and let $X$ denote
the size of $Q(D)$.
\begin{enumerate} \itemsep=0pt
\item If $\maxdensity(N) = \log N - \omega(1)$,
then $X \sim N^n \prod_{R \in \sigma} p_R(N)$ almost surely.
\item If $\maxdensity(N) = \log N + \omega(1)$, then $X = 0$ almost
surely.
\end{enumerate}
\end{theo}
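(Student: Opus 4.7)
The plan is to derive both parts directly from the two most basic concentration inequalities applied to the random variable $X=|Q(D)|$: Markov's inequality handles the ``no solutions'' regime~(2), while Chebyshev's inequality combined with the variance bound of Proposition~\ref{prop:variance} handles the concentration regime~(1). In both cases the key quantity is the sign and magnitude of $\maxdensity(N)-\log N$.

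For part~(2), I would first pick, for each $N$, a nonempty subset $B=B(N)\subseteq A_\sigma$ that witnesses the maximum density, so that $\density(Q[B],w(N))=\maxdensity(N)$. Writing $\sigma_B$ for the schema of $Q[B]$ and using $w_R(N)=\log_2(1/p_R(N))$, I get $\prod_{R\in\sigma_B}p_R(N)=2^{-|B|\maxdensity(N)}$, and hence by the linearity-of-expectation formula~(\ref{eqn:expectation}) applied to the subquery $Q[B]$,
\[
\Exp\big[|Q[B](D)|\big] \;=\; N^{|B|}\prod_{R\in\sigma_B}p_R(N) \;=\; 2^{|B|(\log N-\maxdensity(N))} \;\le\; 2^{\log N-\maxdensity(N)} \;=\; 2^{-\omega(1)},
\]
where the inequality uses $|B|\ge1$ and $\log N-\maxdensity(N)<0$. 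Next I observe that whenever $R\in\sigma_B$, the set $A_R$ is by definition contained in $B$, so if $t\in Q(D)$ then $\pi_B(t)$ is an $A_B$-tuple whose further projection to each such $A_R$ is $\pi_{A_R}(t)\in R(D)$; this shows $\{X\ge1\}\subseteq\{|Q[B](D)|\ge1\}$. Markov's inequality then yields $\Prob[X\ge1]\le\Exp[|Q[B](D)|]=o(1)$, which is the statement that $X=0$ almost surely.

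For part~(1), pick $N$ large enough that $\maxdensity(N)\le\log N$ (which holds eventually since $\maxdensity(N)-\log N\to-\infty$). Then Proposition~\ref{prop:variance} applies and gives
\[
\frac{\Var[X]}{\Exp[X]^2} \;\le\; (2^n-1)\,2^{\maxdensity(N)-\log N}.
\]
Since $n$ is constant and the exponent is $-\omega(1)$, the right-hand side tends to $0$. For any fixed $\epsilon>0$, Chebyshev's inequality then gives
\[
\Prob\big[\,|X-\Exp[X]|\ge\epsilon\Exp[X]\,\big] \;\le\; \frac{\Var[X]}{\epsilon^{2}\Exp[X]^{2}} \;=\; o(1),
\]
which together with $\Exp[X]=N^n\prod_{R\in\sigma}p_R(N)$ from~(\ref{eqn:expectation}) is exactly the asserted asymptotic $X\sim N^n\prod_R p_R(N)$ almost surely.

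There is no real obstacle to carrying this out: Proposition~\ref{prop:variance} has already done the main computational work, and the only subtlety is the projection step in part~(2), namely verifying $\{X\ge1\}\subseteq\{|Q[B](D)|\ge1\}$, which reduces to the elementary fact that $A_R\subseteq B$ for every $R\in\sigma_B$. The rest is a direct invocation of Markov and Chebyshev.
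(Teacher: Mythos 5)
Your proposal is correct and follows essentially the same route as the paper's own proof: a witnessing subquery $Q[B]$ of maximum density plus Markov's inequality for part~(2), and Chebyshev's inequality applied to the variance bound of Proposition~\ref{prop:variance} for part~(1). The only differences are that you spell out the projection argument $\{X\ge 1\}\subseteq\{|Q[B](D)|\ge 1\}$ and the fact that $\maxdensity(N)\le\log N$ holds eventually, both of which the paper leaves implicit.
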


\begin{proof}
For this proof we write $A$ instead of $A_\sigma$.
We start with (2). Suppose that $\maxdensity(N) = \log N +
\omega(1)$ and fix a large $N$.
Let $B \subseteq A$, $B \not= \emptyset$, be
such that $\maxdensity(Q,w(N)) = \density(Q[B],w(N))$.
Let $Q_B = Q[B]$, let $\sigma_B$ be the
schema of $Q_B$, and let $M_B = |Q_B(D)|$.
The expectation of $M_B$ is
\[
N^{|B|} \prod_{R \in \sigma_B} p_{R}(N)=2^{|B|(\log N-|B|^{-1}\sum_{R \in \sigma_B} \log(1/p_R(N)))}
=2^{|B|(\log N-\delta(Q_B))}.
\]
Since $\density(Q_B) = \maxdensity(Q)$ and
$|B| > 0$, the hypothesis $\maxdensity(N) = \log N
+ \omega(1)$ implies that this quantity approaches
$0$ as $N$ grows. By Markov's inequality, $M_B=0$
almost surely, and therefore $M=0$ almost surely
because every solution to $Q$ gives a solution to $Q_B$.

For (1) we use the bound on the variance from (\ref{prop:variance}).
Fix $\epsilon > 0$ and $\delta > 0$.
By Chebyshev's inequality we have
\[
\Prob[ | X - \Exp[X] | \geq \epsilon \Exp[X] ] \leq
\frac{\Var[X]}{\epsilon^2 \Exp[X]^2} \leq \frac{\Exp[X]^2 (2^n-1) 2^{\maxdensity-
\log_2 N}}{\epsilon^2 \Exp[X]^2} \leq \frac{2^n-1}{\epsilon^2}\cdot 2^{\maxdensity-
\log_2 N}.
\]
Under the hypothesis $\maxdensity(N) = \log N - \omega(1)$, the right-hand side
is bounded by $\delta$ for large enough $N$. Since $\Exp[X] = N^n \prod_{R \in \sigma} p_R(N)$,
the result follows.
\end{proof}

In certain applications, the concentration defined
by ``$X \sim x$ almost surely'' is not enough. For example, it may 
sometimes
be necessary to conclude that $\Prob[|X - x| \leq \epsilon x]
\geq 1-N^{-d}$ for every $\epsilon > 0$ and $d > 0$
in order to apply a union bound
that involves a number of cases that grows polynomially with $N$.
Accordingly, for a random variable $X$, the expression ``$X \sim x$
polynomially almost surely'' means that for every $\epsilon > 0$
and $d > 0$, there exists $N_0$ such that for
every $N \geq N_0$ we have $\Prob\big[|X-x| \leq \epsilon x]
\geq 1-N^{-d}$. 

It turns out that such a strong concentration can also
be guaranteed at the expense of a wider  \emph{threshold width}
in Theorem~\ref{thm:avgcase}: instead of $\log_2(N) - \omega(1)$
vs $\log_2(N) + \omega(1)$, we require $\log_2(N) - \omega(\log\log(N))$ 
vs $\log_2(N) + \omega(1)$. This does not follow from
Chebyshev's inequality, and for the proof we use the polynomial concentration 
inequality from
\cite{KimVu}.

\begin{theo} \label{thm:avgcase2}
Let $Q$ be a join query with schema $\sigma$ and $n$
attributes.
For every $R \in \sigma$, let $p_R : \mathbb N \to (0,1)$,
$p(N) = (p_R(N) : R \in \sigma)$, 
and $\maxdensity(N) = \maxdensity(Q,w_R(N))$ for
$w_R(N) = \log_2(1/p_R(N))$.
Let $D$ be drawn from ${\cal D}(N,p(N))$ and let $X$ denote
the size of $Q(D)$.
\begin{enumerate} \itemsep=0pt
\item If $\maxdensity(N) = \log N - \omega(\log\log N)$,
then $X \sim N^n \prod_{R \in \sigma} p_R(N)$ polynomially almost surely.
\item If $\maxdensity(N) = \log N + \omega(1)$, then $X = 0$ almost
surely.
\end{enumerate}
\end{theo}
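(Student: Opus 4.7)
The plan for proving Theorem~\ref{thm:avgcase2} is as follows.

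Part (2) is identical to the corresponding statement in Theorem~\ref{thm:avgcase}: the Markov-based argument already gives $X=0$ almost surely, and polynomial concentration is not needed on the vanishing side.

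For part (1), I would invoke the polynomial concentration inequality of Kim and Vu~\cite{KimVu} applied to $X$, viewed (as in the proof of Proposition~\ref{prop:variance}) as the multilinear polynomial $X=\sum_t \prod_{R\in\sigma} X(R,t_R)$ of degree $m=|\sigma|$ in the mutually independent Bernoulli indicators $X(R,t)$, one per pair $(R,t)$ with $R\in\sigma$ and $t\in[N]^{A_R}$. The total number of underlying variables is polynomial in $N$, so the $N^{m-1}$ factor appearing in the Kim--Vu tail bound is polynomial in $N$ and can be overwhelmed by a polynomially large deviation parameter.

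The main work is in bounding $E_i[X]$, the largest expectation of a partial derivative of $X$ with respect to any $i$ of its variables, for each $i=1,\dots,m$. The key observation is that if two variables in the chosen set correspond to the same relation $R$ but distinct tuples, then the derivative vanishes, because every monomial of $X$ contains at most one variable per relation. So a nonzero $i$-fold derivative is parametrized by a choice of $i$ distinct relations $S_0\subseteq\sigma$ together with a consistent assignment of tuples, one per $R\in S_0$, to the attributes $B=\bigcup_{R\in S_0}A_R$. A computation analogous to the one in the proof of Proposition~\ref{prop:variance} then gives
\[
\frac{E_i[X]}{\Exp[X]} \leq N^{-|B|}\prod_{R\in\sigma_B} p_R(N)^{-1} = 2^{|B|(\density(Q[B])-\log N)} \leq 2^{|B|(\maxdensity-\log N)},
\]
where $\sigma_B=\{R\in\sigma:A_R\subseteq B\}\supseteq S_0$. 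Under the hypothesis $\maxdensity(N)=\log N-\omega(\log\log N)$ and $|B|\geq 1$, this yields $E'[X]/\Exp[X]\leq(\log N)^{-\omega(1)}$.

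To finish I would apply Kim--Vu with parameter $\lambda=d\log N$ for a sufficiently large constant $d$: the deviation scale $c_m(\Exp[X]\cdot E'[X])^{1/2}(d\log N)^m$ becomes $\Exp[X]\cdot O((\log N)^{m-\omega(1)/2})=o(\Exp[X])$, while the tail bound is at most $c'_m N^{m-1-d}$, which can be made smaller than $N^{-c}$ for any desired $c>0$ by choosing $d$ large enough. Since $\Exp[X]=N^n\prod_R p_R(N)$, this is exactly polynomial almost-sure concentration around the mean. The single delicate step is the verification that $E'[X]/\Exp[X]$ shrinks faster than any polylogarithmic factor; this is precisely what forces the threshold to widen from $\omega(1)$ to $\omega(\log\log N)$, since the degree $m$ of the polynomial contributes a $(\log N)^m$ factor in the Kim--Vu bound that must be absorbed by the suppression in $E'/\Exp$.
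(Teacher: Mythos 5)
Your proposal is correct and follows essentially the same route as the paper: part (2) via Markov as in Theorem~\ref{thm:avgcase}, and part (1) via the Kim--Vu polynomial concentration inequality, with the same key step of observing that a nonvanishing $i$-fold partial derivative forces distinct relations with consistent tuples and then bounding $E_Y/E_0 \le N^{-|B|}\prod_{R\in\sigma_B}p_R^{-1} \le 2^{\maxdensity-\log N} = (\log N)^{-\omega(1)}$. The only (immaterial) difference is your direct choice $\lambda = d\log N$ where the paper routes through its Corollary~\ref{cor:kimvu} with $\lambda = (\epsilon/a_k)^{1/k}(\log n)^2$.
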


For the proof, we will use the polynomial
concentration method from \cite{KimVu}.
Let $H = (V,E)$ be a hypergraph with $n = |V|$
and $k = \max_{e \in E} |e|$. For every $e \in E$,
let $w(e)$ be a positive weight.
Let $\{ X_u : u \in V \}$ be a collection of mutually
independent random variables, where each $X_u$ is
an indicator random variable with expected
value $p_u$. 
Here $0 \leq p_u \leq 1$ for every $u \in V$. Let
$M$ be the following polynomial: $M = \sum_{e \in E} w(e) 
\prod_{u \in e} X_u$. For every $Y \subseteq V$, let $M_Y$ be 
the partial derivative of $M$ with respect to $\{ X_u : u
\in Y \}$, that is, $M_Y=\sum_{e\in E:\;Y\subseteq e}w(e)\prod_{u\in
  e\setminus Y}X_u$.
Let $E_Y = \Exp[M_Y]$, and for every $i \in \{0,\ldots,k\}$,
let $E_i = \max \{ E_Y : Y
\subseteq V, |Y| = i \}$. Note that $E_0 = \Exp[M]$.
Let $E' = \max \{ E_i : 1 \leq i \leq k\}$ and
$E = \max \{ E_0,E' \}$.

\begin{theo}[Theorem 7.8.1 in \cite{AlonSpencerBook}]
\label{thm:concentration}
For every $\lambda > 1$, it holds 
\[
\Prob\left[{|M-\Exp[M]| > a_k (EE')^{1/2} \lambda^k}
\right] < d_k e^{-\lambda} n^{k-1},
\]
where $a_k = 8^k \sqrt{k!}$ and $d_k = 2e^2$.
\end{theo}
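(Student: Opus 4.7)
The plan is to prove the inequality by strong induction on the degree $k$ of the polynomial $M$, following the Kim--Vu martingale approach. For the base case $k=1$, the polynomial $M$ is an affine function of the independent indicators, with coefficients bounded by $E'$ and mean $E_0$. A Bernstein-type Chernoff bound then yields $\Prob[|M-\Exp[M]|>t] \leq 2\exp(-t^2/(CE_0 E'))$ for an absolute constant $C$. Setting $t = 8(E_0 E')^{1/2}\lambda$ absorbs the constant and gives a tail of $2e^{-\lambda}$, consistent with $a_1 = 8$ and $d_1 \leq 2e^2$.

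For the inductive step, assume the theorem for polynomials of degree less than $k$. Fix an arbitrary ordering $u_1, \dots, u_n$ of $V$ and consider the Doob martingale $Z_i = \Exp[M \mid X_{u_1}, \dots, X_{u_i}]$, so $Z_0 = \Exp[M]$ and $Z_n = M$. A direct computation gives the identity
\[
Z_i - Z_{i-1} = (X_{u_i} - p_{u_i}) \cdot N_i,
\]
where $N_i$ is the conditional expectation of the partial derivative $M_{\{u_i\}}$ given $X_{u_1}, \dots, X_{u_{i-1}}$, viewed as a polynomial in the still-unexposed variables $X_{u_{i+1}}, \dots, X_{u_n}$. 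The crucial structural observation is that $N_i$ is itself a nonnegative polynomial of degree at most $k-1$ in independent indicators, and its own $E$- and $E'$-parameters are dominated by those of $M$; this follows from $(M_{\{u_i\}})_Y = M_{Y\cup\{u_i\}}$, which yields $\Exp[(N_i)_Y] \leq E_{Y\cup\{u_i\}}(M) \leq E'$ for every $|Y|\geq 1$. Apply the inductive hypothesis to each $N_i$ and union-bound over $i \in [n]$: with probability at least $1 - n \cdot d_{k-1} e^{-\lambda} n^{k-2} = 1 - d_{k-1} e^{-\lambda} n^{k-1}$, every $|N_i|$ is bounded by $O(a_{k-1} E' \lambda^{k-1})$.

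Conditional on that good event, the martingale increments are bounded almost surely, and their conditional quadratic variation is of order $a_{k-1}^2\, E_0 E' \lambda^{2(k-1)}$, using that $\sum_i p_{u_i}(1-p_{u_i})\Exp[N_i]$ is controlled by $E_0$. Freedman's variance-aware martingale inequality (a refinement of Azuma--Hoeffding that exploits the quadratic variation) then yields
\[
\Prob\big[|Z_n - Z_0| > 8\sqrt{k}\cdot a_{k-1}(E_0 E')^{1/2}\lambda^k\big] \leq 2e^{-\lambda}.
\]
Combining the two failure probabilities gives the claimed tail with $a_k = 8\sqrt{k}\cdot a_{k-1} = 8^k\sqrt{k!}$ while $d_k$ is kept constant at $2e^2$.

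The main obstacle is precisely the bookkeeping of constants through the recursion. Plain Azuma--Hoeffding would introduce an extra $\sqrt{n}$ factor and spoil the bound, so one must use a variance-sensitive concentration inequality at the martingale step; moreover the $\sqrt{k}$ factor must appear exactly once per inductive level so as to telescope cleanly into $\sqrt{k!}$, and the inductive hypothesis must apply to each $N_i$ with the \emph{same} $E$ and $E'$ as for $M$, which is why the identity $(M_{\{u_i\}})_Y = M_{Y\cup\{u_i\}}$ is essential. A secondary subtlety is that the constant $d_k = 2e^2$ must absorb both the $n^{k-1}$ factor from the union bound \emph{and} the residual $2e^{-\lambda}$ from Freedman's inequality without growing with $k$; this works because the union-bound factor $n^{k-1}$ is explicitly exhibited in the statement rather than folded into $d_k$.
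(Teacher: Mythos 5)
First, a point of reference: the paper does not prove this statement at all. Theorem~\ref{thm:concentration} is quoted verbatim from \cite{AlonSpencerBook} (it is the Kim--Vu polynomial concentration inequality) and is used purely as a black box in the proof of Theorem~\ref{thm:avgcase2}, so there is no in-paper proof to compare against. The relevant comparison is with the proof in the cited source, and your outline does follow that proof's architecture: induction on the degree $k$, a Doob martingale obtained by exposing the variables one at a time, the identity $Z_i-Z_{i-1}=(X_{u_i}-p_{u_i})N_i$ with $N_i$ a degree-$(k-1)$ polynomial whose parameters are dominated by $E'$ via $(M_{\{u_i\}})_Y=M_{Y\cup\{u_i\}}$, a union bound over $i\in[n]$ producing the $n^{k-1}$ factor, and a variance-sensitive martingale inequality on the good event.

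As written, though, the sketch has a genuine gap at the martingale step. A Freedman-type inequality needs a pointwise (or on-event) bound on the predictable quadratic variation $\sum_i \Var[Z_i-Z_{i-1}\mid \mathcal{F}_{i-1}]=\sum_i p_{u_i}(1-p_{u_i})N_i^2$, whereas you only control its \emph{expectation}: the quantity $\sum_i p_{u_i}(1-p_{u_i})\Exp[N_i]$ you invoke is indeed $O(kE_0)$, since $\sum_i X_{u_i}M_{\{u_i\}}=\sum_{e}|e|\,w(e)\prod_{u\in e}X_u\le kM$, but that is a statement about means. On the good event you get $N_i\le E'+a_{k-1}E'\lambda^{k-1}$ for every $i$, and then $\sum_i p_{u_i}N_i^2\le (\max_i N_i)\sum_i p_{u_i}N_i$; the remaining factor $\sum_i p_{u_i}N_i$ is a random variable whose fluctuations are \emph{not} controlled by the union bound over the individual $N_i$ --- summing their deviation bounds costs a factor of up to $n$, which would destroy the theorem. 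Handling this is exactly where the real work in the Kim--Vu argument lies, and the sketch does not supply the missing idea. A secondary issue is the constant: the recursion you describe yields a failure probability of $d_{k-1}e^{-\lambda}n^{k-1}+2e^{-\lambda}$, i.e.\ $d_k\ge d_{k-1}+2$, which grows with $k$; keeping $d_k=2e^2$ requires spending some of the exponential slack in $\lambda$ (whence the $e^2$), not merely ``exhibiting the $n^{k-1}$ factor'' as you assert.
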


For a complete proof of Theorem~\ref{thm:avgcase2}, it
is easier to first state the following consequence 
of Theorem~\ref{thm:concentration} (see
also Corollary 4.1.3 in \cite{KimVu}):
 
\begin{cor} \label{cor:kimvu}
For every two reals $\epsilon > 0$ and $d > 0$, 
and every integer $k \geq 1$, there exists $n_0$
such that, in the setting of
Theorem~\ref{thm:concentration}, 
if $n \geq n_0$ and $E_i/E_0 \leq (\log n)^{-4k}$
for every $i \in \{1,\ldots,k\}$, then
$$
\Prob\left[{ |M-\Exp[M]| > \epsilon \Exp[M] }\right]
\leq n^{-d}
$$
\end{cor}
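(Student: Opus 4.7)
The plan is to derive Corollary~\ref{cor:kimvu} directly from Theorem~\ref{thm:concentration} by choosing $\lambda$ appropriately. The idea is that the hypothesis $E_i/E_0 \leq (\log n)^{-4k}$ forces $E'$ to be much smaller than $E_0$, so that $(EE')^{1/2}$ is polylogarithmically smaller than $\Exp[M] = E_0$; this gives enough slack to absorb the factor $\lambda^k$ in the deviation term while still keeping $\lambda$ large enough that $e^{-\lambda} n^{k-1}$ is polynomially small.

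First, I would observe that the hypothesis implies $E' = \max_{1 \le i \le k} E_i \le E_0 (\log n)^{-4k}$. For $n$ large enough that $(\log n)^{-4k} < 1$, we have $E' \le E_0$, hence $E = \max\{E_0, E'\} = E_0$. It follows that
\[
(EE')^{1/2} \le \bigl(E_0 \cdot E_0 (\log n)^{-4k}\bigr)^{1/2} = E_0 \, (\log n)^{-2k}.
\]

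Next, I would choose $\lambda := C \log n$ where $C = C(d,k)$ is a constant fixed large enough that $d_k e^{-\lambda} n^{k-1} \le n^{-d}$ for all sufficiently large $n$; since $e^{-C\log n} = n^{-C/\ln 2}$, taking $C \ge (\ln 2)(d + k)$ comfortably suffices. With this $\lambda$, the deviation threshold in Theorem~\ref{thm:concentration} satisfies
\[
a_k (EE')^{1/2} \lambda^k \le a_k E_0 (\log n)^{-2k} (C \log n)^k = a_k C^k E_0 \,(\log n)^{-k},
\]
which is at most $\epsilon E_0 = \epsilon \Exp[M]$ for all $n$ large enough, since $(\log n)^{-k} \to 0$.

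Plugging these two estimates into Theorem~\ref{thm:concentration} yields
\[
\Prob\bigl[|M - \Exp[M]| > \epsilon \Exp[M]\bigr]
\le \Prob\bigl[|M - \Exp[M]| > a_k (EE')^{1/2} \lambda^k\bigr]
< d_k e^{-\lambda} n^{k-1} \le n^{-d},
\]
which is exactly the claim, provided $n \ge n_0$ for some $n_0 = n_0(\epsilon,d,k)$ chosen large enough for both of the above inequalities to hold. There is no real obstacle here; the only thing to be careful about is to pick $\lambda$ large enough (a constant multiple of $\log n$) for the tail bound to give $n^{-d}$, while simultaneously verifying that the factor $\lambda^k = O((\log n)^k)$ is dominated by the gain of $(\log n)^{-2k}$ coming from the hypothesis, leaving a net $(\log n)^{-k}$ that kills the constant $a_k$ for large $n$.
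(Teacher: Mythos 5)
Your proof is correct and follows essentially the same route as the paper: both arguments bound $(EE')^{1/2}\le E_0(\log n)^{-2k}$ using the hypothesis and then apply Theorem~\ref{thm:concentration} with a suitable $\lambda$. The only difference is the parameter choice --- the paper takes $\lambda=(\epsilon/a_k)^{1/k}(\log n)^2$ so that $\lambda^k$ exactly cancels the $(\log n)^{-2k}$ gain and the tail is superpolynomially small, while you take $\lambda=C\log n$ and spend only half the gain, which works equally well.
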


For a proof, it suffices to choose
$\lambda(n) = (\epsilon/a_k)^{1/k} (\log n)^{2}$
and $n_0$ such that $\lambda > 1$ and
$d_k e^{-\lambda} n^{k-1} < n^{-d}$
for every $n \geq n_0$. Note that
if $E_i/E_0 \leq (\log n)^{-4k}$ then $E = E_0$ and
$E' \leq E_0 (\log n)^{-4k}$, so $a_k(EE')^{1/2}\lambda^k
\leq \epsilon E_0$ and the claim follows.

\begin{proof}[Proof of Theorem~\ref{thm:avgcase2}]
The proof of (2) is the same as in Theorem~\ref{thm:avgcase}.
For (1), suppose that $\maxdensity(N) = \log N -
\omega(\log\log N)$ and fix a large $N$.
For every $R \in \sigma$, let $A_R$ be the
set of attributes of $R$, and for every $t \in [N]^{A_R}$,
let $X(R,t)$ be the indicator random variable
for the event $t \in R(D)$. Note that these are
mutually independent random variables and
$\Exp[X(R,t)] = p_R(N)$ by the
definition of the probability space.
Note also that
$$
M = \sum_{t} \prod_{R \in \sigma} X(R,t_R),
$$
where $t$ ranges over all tuples in $[N]^A$,
and $t_R$ denotes the projection of $t$ to
the attributes of $R$. We aim for an application
of Corollary \ref{cor:kimvu}
with the random variables $X(R,t)$. We define
the hypergraph $H = (V,E)$. The set of vertices $V$
is the set of pairs $(R,t)$ where
$R \in \sigma$ and $t \in [N]^{A_R}$. There
is one hyperedge $e_t$ in $E$ for every $t \in [N]^A$
that consists of all pairs $(R,t_R)$ with $R \in \sigma$.
Thus, the number of vertices is bounded by $mN^r$,
where $m = |\sigma|$ and $r$ is the maximum arity
of the relations in $\sigma$. Furthermore, the maximum size of the edges in
$H$, that is, the $k$ in Corollary~\ref{cor:kimvu}, is $m$.

We have
\begin{equation}
E_0 = N^n \prod_{R \in \sigma} p_R(N). \label{eqn:e0}
\end{equation}
Let us bound $E_i$ for $i > 0$.
Fix a set of vertices of the hypergraph $H$, say
$$
Y \subseteq \{ (R,t) : R \in \sigma, t \in [N]^{r_R} \},
$$
with $|Y|=i > 0$. We want to bound $E_Y/E_0$.
Let $\sigma'$ be the set of relations $R$ that appear in
$Y$. Let $B$ be all the attributes of the relations in
$\sigma'$. Note that $\sigma' \subseteq \sigma_{B}$,
where $\sigma_{B}$ is the schema of $Q[B]$. This will be
of use later.

Let $T$ be the set of all $t \in [N]^A$ such
that $(R,t_R) \in Y$ for every $R \in \sigma'$,
where as before, $t_R$ denotes the projection of $t$
to the attributes of $R$.
If there exist $t_1$ and $t_2$ in $T$ that disagree
on some attribute of $B$, then automatically $E_Y = 0$
because then $Y$ is not included in any hyperedge $e_t$
of $H$ and hence $M_Y=0$. We may assume then that all 
$t \in T$ agree on $B$. This implies $|T| \leq N^{n-|B|}$. Under
these conditions we have
$$
E_Y = \Exp\left[ { \sum_{t \in T} \prod_{R \in
\sigma-\sigma'} X(R,t_R)
} \right] \leq N^{n-|B|} \prod_{R \in \sigma-\sigma'} p_R(N).
$$
Therefore, recalling (\ref{eqn:e0}), we bound
$E_Y/E_0$ by
\begin{equation}
N^{-|B|} \prod_{R \in \sigma'} \frac{1}{p_R(N)} \leq N^{-|B|}
\prod_{R \in \sigma_B} \frac{1}{p_R(N)} \leq N^{-|B|\left({1-\frac{1}{\log N}
|B|^{-1}\sum_{R \in \sigma_B} \log(1/p_R(N))}\right)},
\label{eqn:thing}
\end{equation}
where the first inequality holds because
$\sigma' \subseteq \sigma_{B}$ and each $p_R$ belongs to
$(0,1)$.
Using the hypothesis that $\maxdensity(N) = \log N -
\omega(\log\log N)$, we bound (\ref{eqn:thing})
by $(\log N)^{-(4m+1)}$.
We showed then that
$$
E_Y/E_0 \leq (\log N)^{-(4m+1)},
$$
and since this holds for an arbitrary $Y$ of size $i > 0$,
the bound is also valid for $E_i/E_0$. Recall now
that the number of vertices $|V|$ of the hypergraph $H$
is at most $mN^r$, and we can bound
$$
(\log N)^{-(4m+1)} \leq (\log (mN^r))^{-4m} \leq (\log
|V|)^{-4m}
$$
for large $N$. The result follows
from Corollary~\ref{cor:kimvu}.
\end{proof}

We conclude this section with a max-flow construction to compute
$\maxdensity(Q,w)$. This is probably folklore, but as the proof is
short we include it anyway for the reader's convenience. For every
real number $\delta > 0$, we build a network $N(\delta)$ as
follows. The network has a source $s$, a target $t$, and
$|A_\sigma|+|\sigma|$ intermediate nodes. There is a link of capacity
$\delta$ between $s$ and each $a \in A_\sigma$. Each $a \in A_\sigma$
has a link of infinite capacity to each $R \in \sigma$ with $a \in
A_R$. Finally, each $R \in \sigma$ is linked to $t$ with capacity
$w_R$.  Recall that a cut in the network is a set of links that
disconnects the target from the source. The capacity of the cut is the
sum of the capacities of the links in it. Let $\gamma(Q,w,\delta)$ be
the minimum capacity of all cuts of $N(\delta)$.
\begin{lem} \label{lem:maxflow}
The following are equivalent:
\begin{enumerate}\itemsep=0pt
\item $\gamma(Q,w,\delta) < \sum_{R \in \sigma} w_R$
\item $\maxdensity(Q,w) > \delta$.
\end{enumerate}
\end{lem}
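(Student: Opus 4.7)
The plan is to prove the equivalence by directly analyzing the minimum cuts of $N(\delta)$ and extracting a combinatorial formula for $\gamma(Q,w,\delta)$. The crucial feature of $N(\delta)$ is that the links from attribute nodes to relation nodes have infinite capacity, which forces any finite-capacity cut to have a very rigid structure.

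Concretely, consider any cut $C$ of finite capacity. Let $B \subseteq A_\sigma$ be the set of attributes lying on the $t$-side of $C$ and $\sigma_s \subseteq \sigma$ the relations lying on the $s$-side. The cut contains the edge $s \to a$ (capacity $\delta$) exactly when $a \in B$, the edge $R \to t$ (capacity $w_R$) exactly when $R \in \sigma_s$, and an infinite-capacity edge $a \to R$ whenever $a \notin B$, $R \notin \sigma_s$, and $a \in A_R$. Finiteness therefore forces that every $R \notin \sigma_s$ satisfies $A_R \subseteq B$; equivalently, $\sigma \setminus \sigma_s \subseteq \sigma_B$, where $\sigma_B$ denotes the schema of the induced subquery $Q[B]$. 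For fixed $B$, the cut capacity $\delta|B| + \sum_{R \in \sigma_s} w_R$ is minimized by taking $\sigma_s$ as small as possible, namely $\sigma_s = \sigma \setminus \sigma_B$, with resulting capacity
$$\delta|B| + \sum_{R \in \sigma \setminus \sigma_B} w_R = \sum_{R \in \sigma} w_R - \Bigl(\sum_{R \in \sigma_B} w_R - \delta|B|\Bigr).$$
Minimizing over $B \subseteq A_\sigma$ therefore yields
$$\gamma(Q,w,\delta) = \sum_{R \in \sigma} w_R - \max_{B \subseteq A_\sigma} \Bigl(\sum_{R \in \sigma_B} w_R - \delta|B|\Bigr).$$

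With this formula in hand, the equivalence is immediate. The maximum on the right is at least $0$ (attained by $B = \emptyset$), so $\gamma(Q,w,\delta) < \sum_R w_R$ if and only if there is a \emph{nonempty} $B$ with $\sum_{R \in \sigma_B} w_R - \delta|B| > 0$, equivalently $\density(Q[B],w) = |B|^{-1} \sum_{R \in \sigma_B} w_R > \delta$. By definition of $\maxdensity$, this happens precisely when $\maxdensity(Q,w) > \delta$. The only mildly delicate point, and the main one to be careful about, is the handling of the empty $B$ (which must be excluded in the definition of $\maxdensity$) and the verification that the optimal $\sigma_s$ for a given $B$ is indeed $\sigma \setminus \sigma_B$; both are straightforward from the structural description of finite cuts above.
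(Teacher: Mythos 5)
Your proposal is correct and follows essentially the same route as the paper: both arguments rest on the observation that the infinite-capacity middle links force any finite cut to correspond to a set $B$ of attributes together with all links from $\sigma\setminus\sigma_B$ to the target, with capacity $\delta|B|+\sum_{R\in\sigma\setminus\sigma_B}w_R$. You merely package the two directions into a single closed-form expression for $\gamma(Q,w,\delta)$, whereas the paper constructs the cut for one direction and dissects a minimum cut for the other; the content is identical.
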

\begin{proof}
Assume $\maxdensity(Q,w) > \delta$ and let $B \subseteq A_\sigma$,
$B \not= \emptyset$, be such that $\density(Q[B],w) > \delta$.
Let $\sigma_B$ be the schema of $Q[B]$.
Let $S$ be the cut that consists of all links from the source
to the nodes of $B$ and the links from the nodes in
$\sigma-\sigma_B$ to the target. The capacity of this cut
is 
$$
|B|\delta+\sum_{R \in \sigma-\sigma_B} w_R
< \sum_{R \in \sigma_B} w_R + \sum_{R \in \sigma-\sigma_B}
w_R = \sum_{R \in \sigma} w_R,
$$
where the inequality follows from $\density(Q[B],w) > \delta$.

Suppose now $\gamma(Q,w,\delta) < \sum_R w_R$ and let $S$ be 
a cut of minimum capacity.
Let $B$ be the set of $a \in A_\sigma$ for which the link
from the source to $a$ is in $S$. Let $\sigma_B$
be the schema of $Q[B]$. We claim that $S$ does
not contain any link from an $R \in \sigma_B$ to the
target. For if it did, $S-\{(R,t)\}$ would also be a cut
of smaller capacity. Also $S$ contains all links
from an $R \in \sigma-\sigma_B$ to $t$. For if it
did not, $S$ would not be a cut (we assume
all $R$ have a non-empty set of attributes). Finally, $S$ does not 
contain any link from an $a \in A_\sigma$ to an $R \in \sigma$ because
those have infinite capacity. Therefore, the capacity of $S$
is
$\delta|B|+\sum_{R \in \sigma-\sigma_B} w_R$
and smaller than $\sum_{R \in \sigma} w_R$ by
hypothesis. Hence $B \not= \emptyset$ and $\density(Q[B],w) > \delta$.
\end{proof}
By the max-flow min-cut algorithm, it follows that 
$\maxdensity(Q,w)$ is computable in polynomial time.
 
\subsection{Execution plans}\label{sec:avg-alg}
Theorem \ref{theo:jp} shows that certain queries admit a join-project
plan that cannot be converted into a join plan without causing a
superpolynomial increase in the worst-case running time. The following
result shows that when we are considering average-case running time in
the random database model, projections may be eliminated at a very
small expected cost.


\begin{theo} \label{thm:removeproject}
Let $Q$ be a join query with schema $\sigma$ and $n$
attributes. Let $\phi$ be a join-project plan for $Q$.
For every $R \in \sigma$, let $p_R : \mathbb N \to (0,1)$,
$p(N) = (p_R(N) : R \in \sigma)$ and let $D_N$ be drawn
from ${\cal D}(N,p(N))$. There exists a constant $c_\phi$ 
depending only $\phi$, such that for every large enough $N$
and every $T$, if $\Exp[ |\psi(D_N)| ] \leq T$ for every
subplan $\psi$ of $\phi$, then there is a join plan $\phi^*_N$
for $Q$ such that $E[ |\psi(D_N)| ] \leq c_\phi T$ for
every subplan $\psi$ of $\phi^*_N$.
\end{theo}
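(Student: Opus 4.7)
The plan is to construct $\phi^*_N$ by processing $\phi$ and turning it into a pure join tree for $Q$. The natural first attempt is to take $\bar\phi$, the tree obtained from $\phi$ by deleting every projection node; when each relation name occurs once in $\phi$, this is a valid join plan for $Q$ because joins are associative and commutative up to reordering. It does not work, however, because a subplan of $\bar\phi$ computing $\bowtie_{R \in S} R$ for some subset $S$ corresponding to a subtree of $\phi$ may involve attributes that the erased projections were discarding, so that its expected size $N^{|A_S|}\prod_{R \in S} p_R(N)$ can dwarf the bound $T$ that was guaranteed on the projected version in $\phi$.

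My approach would be to modify the tree shape of $\bar\phi$ by \emph{delaying} joins. At each projection node $\psi = \pi_B(\psi_0)$ in $\phi$, I would partition $S_{\psi_0}$ into a kept-below part $T_{\psi_0}$ with $A_{T_{\psi_0}} \subseteq B$ (so that joining this part introduces no attribute outside $B$), and a delayed part $S_{\psi_0}\setminus T_{\psi_0}$, which is pushed up the tree to be joined higher, where its extra attributes can mix with material from sibling branches. Since in the end all relations of $\sigma$ are joined and the order of joins is irrelevant for the final answer, the resulting plan $\phi^*_N$ still computes $Q$. The partition is chosen so that $\Exp[|\bowtie_{R \in T_{\psi_0}} R(D)|] \leq c_\phi \cdot T$, thereby controlling the size of the subplan living just below the original projection; the new subplans created by absorbing the delayed relations at higher nodes should satisfy the same bound, modulo a constant factor that depends only on $\phi$.

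The technical heart is the existence of an appropriate $T_{\psi_0}$ at each projection node. I would derive it from the hypothesis $\Exp[|\psi_0(D)|] \leq T$ together with the submodularity of the set function $f(S) = \log_2 \Exp[|\bowtie_{R \in S} R(D)|] = |A_S|\log_2 N - \sum_{R \in S} \log_2(1/p_R(N))$, which is submodular because $|A_S|$ is submodular in $S$ and $\sum_{R \in S} \log(1/p_R)$ is modular. Concretely, a min-cut argument in a flow network analogous to the one in Lemma~\ref{lem:maxflow}, applied to the subquery $Q[B]$, should extract a $T_{\psi_0}$ of the desired form.

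The main obstacle I anticipate is global consistency: when several branches of $\phi$ delay relations up to a common ancestor, the accumulated join at that ancestor must still have expected size $O(T)$, and a delay that helps locally must not cascade into a blow-up somewhere else. I would handle this by iterating the selection-and-delay procedure at every internal node of $\phi$, bounding the overhead per projection node by a constant and propagating the bound upward. Since $\phi$ is fixed, the cumulative blow-up depends only on the structure of $\phi$ — in particular, on its number of projection nodes and attributes — giving the required constant $c_\phi$ that is independent of $N$ and $T$.
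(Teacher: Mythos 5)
You have correctly identified several ingredients that the paper also uses: that naively erasing projections blows up intermediate sizes, that the function $S\mapsto |A_S|\log N-\sum_{R\in S}\log(1/p_R)$ is submodular and should govern the choice of what to keep, and that relations removed from a subtree can be rejoined higher up without affecting correctness. However, the central construction in your proposal is inverted, and this is a genuine gap. At a projection node $\pi_B(\psi_0)$ you keep below only a set $T_{\psi_0}$ of relations with $A_{T_{\psi_0}}\subseteq B$. The subplan occupying that position in the new tree then has expected size $N^{|A_{T_{\psi_0}}|}\prod_{R\in T_{\psi_0}}p_R$, and the hypothesis only bounds $\Exp[|\pi_B(\psi_0(D))|]$, which can be \emph{much smaller} than that product because the relations straddling $B$ and its complement filter out most $B$-tuples. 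Concretely, take $B=\{a\}$ and a single binary relation $R(a,c)$ with $p_R=N^{-3/2}$: then $\Exp[|\pi_B(R(D))|]\approx N^{1/2}$, while any join kept below that uses only relations inside $B$ has expected size $N$ (it cannot see $R$ at all), a gap of $N^{1/2}$ that no constant $c_\phi$ absorbs. The paper's resolution is the opposite move: it \emph{enlarges} the attribute set from $B$ to the maximal minimizer $A^*=C_S(B)\supseteq B$ of the submodular function $f_S$, keeps exactly the relations contained in $A^*$, and replaces $\pi_B$ by $\pi_{A^*}$ (thereby removing the projection). The minimality of $f_S(A^*)$ guarantees, via a second-moment argument (Lemma~\ref{lem:extend}), that a tuple of $\bowtie_{R\in S[A^*]}R(D)$ extends to the full join with probability at least $1/2$, so the kept join is within a factor $2$ of the projected original; and the maximality of $|A^*|$ guarantees (Lemma~\ref{lem:projext}) that enlarging the projection from $A$ to $A^*$ multiplies sizes by at most $2^{n(n+2)}$. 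Your scheme, which amounts to choosing $A^*=B$, is precisely the ``$A^*$ too small'' failure mode the paper warns against.

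A second, independent gap is your treatment of the subplans \emph{above} the modified node. You assert that the per-node overhead is a constant and ``propagates upward,'' but the events $\{t\in\psi(D)\}$ for higher subplans are complicated monotone functions of the underlying tuple indicators, and comparing their probabilities before and after the local surgery is the technically hardest part of the argument. The paper needs the FKG inequality (Lemma~\ref{lem:poscor}) together with a bound on the rank of the minterm representation of these indicator functions (Lemmas~\ref{prop:indmonotone} and~\ref{lem:condprob}) to show that conditioning on a minterm of the modified subplan changes probabilities by at most $2^{O(\ell)}$, where $\ell$ is the number of leaves. Without some substitute for this machinery, the claim that the blow-up at ancestors is bounded by a constant depending only on $\phi$ is unsupported.
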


That is, informally speaking, for every join-project plan there is a
join plan that is ``almost as good''. Note that in
Theorem~\ref{thm:removeproject}, the join plan depends on $N$. This is
an unavoidable artifact of our model: the probabilities $p_R(N)$ can
be completely different for different values of $N$, hence it is
unavoidable that different plans could be need for different
$N$.

The join plan $\phi^*$ is obtained by iteratively using a procedure
that is capable of reducing the number of projections by one in such a
way that the expected size of each subplan increases only by a factor
depending only on the query. In each iteration, the procedure selects
a subplan $\pi_A(\phi_0)$ of $\phi$ such that $\phi_0$ contains no
projections, i.e. this projection $\pi_A$ is lowest in the tree
representation of $\phi$. In the first step of the procedure, we
replace $\phi_0$ with a join plan $\phi'_0$ that contains only those
relations appearing in $\phi_0$ whose attributes are completely
contained in $A^*$, where $A^*\supseteq A$ is an appropriate set of
attributes from $A_\sigma$. In the second step, the projection $\pi_A$
is removed (or, in other words, $\pi_A$ is replaced by $\pi_{A^*}$,
making it redundant). The key step of the algorithm is choosing the
right $A^*$. If $A^*$ is too small, then $\phi'_0$ is much less
restrictive than $\phi_0$, hence $|\pi_A(\phi'_0(D))|$ can be much
larger than $|\pi_A(\phi_0(D))|$. On the other hand, if $A^*$ is too
large, then $|\pi_{A^*}(\phi'_0(D))|$ can be much larger than
$|\pi_A(\phi'_0(D))|$. The algorithm carefully balances the size of
$A^*$ between these two opposing constraints; the choice of $A^*$ is
based on the minimization of a submodular function defined below.

Let $S\subseteq \sigma$ be a set of relations over the attributes
$A_\sigma$ and denote by $A_R$ the attributes of a relation $R$. For a
subset $A \subseteq A_\sigma$, let
\[
f_S(A):=|A|(\log N-n-1)-\sum_{R\in S[A]}w_R(N),
\]
where the set $S[A]$ contains those relations $R\in S$ whose
attributes are contained in $A$.  It is easy to see that $f_S(A)$ is
submodular, i.e.,
\[
f_S(A)+f_S(B)\ge f_S(A\cup B)+f_S(A\cap B)
\]
for every $A,B\subseteq A_\sigma$. It follows that $A$ has a unique
minimum-value extension:
\begin{prop}\label{prop:averageclosure}
  For every $A\subseteq A_\sigma$ and $S\subseteq \sigma$, there is a
  unique $C_S(A)\supseteq A$ such that $f_S(C_S(A))$ is minimal and,
  among such sets, $|C_S(A)|$ is maximal.
\end{prop}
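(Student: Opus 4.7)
The plan is to deduce uniqueness of a maximum-cardinality minimizer from the submodularity of $f_S$ via the standard lattice-closure argument. Let $\mathcal{F} := \{B \subseteq A_\sigma : B \supseteq A \text{ and } f_S(B) \text{ is minimum over all supersets of } A\}$. Since $A_\sigma$ is finite, this family is nonempty. The proposition then amounts to showing that $\mathcal{F}$ has a (necessarily unique) maximum element under inclusion, and $C_S(A)$ is defined to be that element.

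The key step is to show that $\mathcal{F}$ is closed under union. Suppose $B_1, B_2 \in \mathcal{F}$ and write $m := f_S(B_1) = f_S(B_2)$ for the minimum value. Since $B_1, B_2 \supseteq A$, both $B_1 \cup B_2$ and $B_1 \cap B_2$ also contain $A$, so by the definition of $m$ as the minimum over supersets of $A$, we have $f_S(B_1 \cup B_2) \ge m$ and $f_S(B_1 \cap B_2) \ge m$. On the other hand, the submodularity inequality stated just before the proposition gives
\[
2m = f_S(B_1) + f_S(B_2) \ge f_S(B_1 \cup B_2) + f_S(B_1 \cap B_2) \ge 2m.
\]
Hence equality holds throughout, and in particular $f_S(B_1 \cup B_2) = m$, so $B_1 \cup B_2 \in \mathcal{F}$.

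Having established closure under union, I would finish by taking $C_S(A) := \bigcup_{B \in \mathcal{F}} B$, which lies in $\mathcal{F}$ by iterating the previous step over the finitely many elements of $\mathcal{F}$, and which is maximal in $\mathcal{F}$ by construction. Any other set $B' \in \mathcal{F}$ with $|B'| = |C_S(A)|$ would have to equal $C_S(A)$, because $B' \subseteq C_S(A)$ and the cardinalities agree; this gives both the existence and uniqueness asserted. No genuine obstacle arises here: the submodularity of $f_S$ is given, and the rest is the standard ``submodular minimizers form a sublattice'' argument restricted to the sublattice of supersets of $A$.
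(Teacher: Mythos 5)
Your proof is correct and rests on the same ingredients as the paper's: the submodularity inequality applied to two minimizers, combined with the observation that unions and intersections of supersets of $A$ are again supersets of $A$, so their $f_S$-values are bounded below by the minimum. You phrase this directly as closure of the minimizer family under union and take the union of all minimizers, whereas the paper runs the identical inequality as a proof by contradiction against a second maximal-cardinality minimizer; the content is essentially the same.
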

\begin{proof}
  Suppose that there are two such sets $B$ and $C$ with this property.
  By the minimality of $f_S(B)=f_S(C)$, we have $f_S(B\cup C)\ge
  f_S(B)$ and $f_S(B\cap C)\ge f_s(C)$. Furthermore, as $|B\cap
  C|<|C|$, the maximality of $|C|$ among the sets minimizing $f_S$
  ensures that $f_S(B\cap C)$ is strictly greater than $f_S(C)$. It
  follows that $f_S(B\cup C)+f_S(B\cap C)>f_S(B)+f_S(C)$, violating
  the submodularity of $f_S$.
\end{proof}





We prove Theorem~\ref{thm:removeproject} by presenting an algorithm
that iteratively removes projections. We describe the algorithm below,
then we prove in Sections~\ref{sec:probtools}-\ref{sec:step2} that the
algorithm transforms the execution plan the required way.
\medskip

\textbf{The algorithm.}  First, we can assume that $\phi$ is of the
form $(((\phi'\bowtie R_1)\bowtie R_2)\bowtie \dots \bowtie
R_{|\sigma|})$, where $R_1$, $\dots$, $R_{|\sigma|}$ is an ordering of
the relations in $\sigma$: if $\phi'$ is a join-project plan for the
query $Q$, then joining any relation $R_i$ with $\phi'$ does not
change $\phi'$. However, this assumption will ensure that if we make
any changes in $\phi'$, then $\phi$ will remain a join-project plan
for the query $Q$. Furthermore, we assume that for every attribute
$a\in A_\sigma$, there is a dummy unary relation $R_a$ with
$p_{R_a}(N)=1$ and hence $w_{R_a}(N)=0$. (We can always remove joins
with these dummy relations from our final join plan without increasing
the size of any intermediate joins.)

The two steps described below reduce the number of projections in
$\phi$ in such a way that the maximum expected size of a subplan is at
most a constant factor larger in the new plan than in $\phi$ (with a
constant depending only on $\phi$). This procedure is repeated as many
times as the number of projections, thus the total increase of the
maximum expected size is only a constant $c_\phi$.

Let $\pi_A(\phi_0)$ be a subplan of $\phi$ such that $\phi_0$
does not contain any projections. Let $S\subseteq \sigma$ be the
relation names appearing in $\phi_0$, which means that
$\phi_0(D)=\bowtie_{R\in S} R(D)$. Let $A^*=C_S(A)$.

\begin{itemize}
\item \textbf{Step 1 (removing joins).}  If a relation $R\in
  S\setminus S[A^*]$ appears in $\phi_0$, then $R$ is removed from
  $\phi_0$.  For the sake of analysis, we implement the removal by
  replacing the relation $R$ in $\phi_0$ with the 0-ary relation (note
  that joining the 0-ary relation with any relation $R'$ gives exactly
  $R'$). This way, it will be clear that there is a correspondence
  between the subplans of the original and modified execution
  plans. We can assume that for every $a\in A^*$, the unary relation
  $R_a$ appears in $\phi_0$ (such relations can be joined with
  $\phi_0$ without increasing the number of tuples).  This ensures
  that $A^*\subseteq A_{\phi'_0}$.  Replacing subplan $\phi_0$ of
  $\phi$ with $\phi'_0$ gives a new join-project plan $\phi'$. By our
  initial assumption on the structure of $\phi$ (that all the
  relations are joined just below the root), $\phi'$ is also a
  join-project plan for $Q$.

\item \textbf{Step 2 (removing projections).} In the second step of
  the procedure, we obtain a join-project plan $\phi''$ from $\phi'$
  by replacing the subplan $\pi_A(\phi'_0)$ with $\pi_{A^*}(\phi'_0)$.
  (This makes the projection redundant and can be eliminated, but it
  is more convenient to analyze the step this way, since this step
  does not change the structure of the join-project plan.) Note that
  this change does not have any effect on the size of $\psi'(D)$ for
  any subplan $\psi'$ of $\phi'_0$.
\end{itemize}
See Figure~\ref{fig:plan} for an example.  It Sections~\ref{sec:step1}
and \ref{sec:step2}, we show that the two steps increase the expected
size of the query only by a constant factor. In
Section~\ref{sec:probtools}, we review and introduce the probabilistic
tools that are required for this analysis.

\begin{figure}
\begin{align*}
R_1:& \{a,c\}\\
R_2:& \{b,c\}\\
R_3:& \{c,e\}\\
R_4:& \{b,d\}\\
R_5:& \{d,f\}\\
R_6:& \{e,f\}\\
\end{align*}
{\automath \Tree 
[.$\bowtie$ 
[.$\bowtie$
[.$\bowtie$ 
[.$\bowtie$ 
[.$\bowtie$ 
[.$\bowtie$ 
[.$\pi_{\{a,c,f\}}$
[.$\bowtie$ $R_2$
[.$\bowtie$ [.$\pi_{\{d,f\}}$ [.$\bowtie$ [ $R_5$ $R_6$ ] ] ]
[.\fbox{$\pi_{\{a,b\}}$} [.$\bowtie$ [.$\bowtie$
[.$\bowtie$ [ $R_1$ \fbox{$R_3$} ] ]
[.$\bowtie$ [ $R_2$ $R_4$ ] ] ] \fbox{$R_5$} ] ] ] ] ]
$R_1$ ] $R_2$ ] $R_3$ ] $R_4$ ] $R_5$ ] $R_6$ ]
}
\caption{A join-project plan over a schema of 6 relations
  $\{R_1,R_2,R_3,R_4,R_5,R_6\}$, represented as a binary
  tree. We demonstrate the removal of a projection from the join-project plan in the proof of Theorem~\ref{thm:removeproject}. Consider the 
  two steps when removing the projection $\pi_{\{a,b\}}$.  Let us assume that $C_S(\{a,b\})=\{a,b,c,d\}$. In Step 1, the two framed
  relations $R_3$ and $R_5$ are removed. In Step 2, $\pi_{\{a,b\}}$ is replaced by
  $\pi_{\{a,b,c,d\}}$.  }

\label{fig:plan}
\end{figure}

\subsubsection{Probabilistic tools}
\label{sec:probtools}
The FKG Inequality is a general tool for determining the correlation
between monotone (antimonotone) events:

\begin{fact}\label{fact:fkg}
  Let $V$ be a finite set, and let $f,g: \{0,1\}^V\to
  \mathbb{R}$ be monotone functions on these variables. Let $\mu: \{0,1\}^V\to
  \mathbb{R}^+$ be a function satisfying $\mu(x)\mu(y)\le \mu(x\vee
  y)\mu(x\wedge y)$ for every $x,y\in \{0,1\}^V$ (where $x\vee y$ and $x\wedge
  y$ denote the coordinate-wise disjunction and conjunction of the two tuples,
  respectively.) Then
\[
\left( \sum_{x\in \{0,1\}^V}f(x)g(x)\mu(x)\right)
\left( \sum_{y\in \{0,1\}^V}\mu(y)\right)\ge
\left( \sum_{x\in \{0,1\}^V}f(x)\mu(x)\right)
\left( \sum_{y\in \{0,1\}^V}g(y)\mu(y)\right).
\]
\end{fact}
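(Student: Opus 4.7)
The plan is to establish the inequality by induction on $n = |V|$, which is the classical FKG-inequality argument. The base case $n = 1$ reduces to a direct computation: writing out and simplifying the difference between the two sides, it collapses to
\[
\mu(0)\mu(1)\bigl(f(1)-f(0)\bigr)\bigl(g(1)-g(0)\bigr),
\]
which is non-negative because $f$ and $g$ are monotone in the same direction and $\mu\ge 0$. Note that for $n = 1$ the log-supermodularity hypothesis on $\mu$ is vacuous.

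For the inductive step with $n \geq 2$, fix a coordinate $v \in V$, write $V' := V \setminus \{v\}$, and for $i \in \{0, 1\}$ define $\mu_i(x') := \mu(x', i)$, $f_i(x') := f(x', i)$, and $g_i(x') := g(x', i)$ on $\{0,1\}^{V'}$. A direct check shows that each $\mu_i$ is log-supermodular on $\{0, 1\}^{V'}$ (it is just $\mu$ restricted to tuples agreeing in coordinate $v$), and each $f_i, g_i$ inherits the monotonicity direction of $f, g$. Set $Z_i := \sum_{x'} \mu_i$, $A_i := \sum_{x'} f_i \mu_i$, $B_i := \sum_{x'} g_i \mu_i$, and $C_i := \sum_{x'} f_i g_i \mu_i$. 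The inductive hypothesis applied to $(\mu_i, f_i, g_i)$ for $i = 0, 1$ yields $C_i Z_i \geq A_i B_i$. The goal then rewrites as $(C_0 + C_1)(Z_0 + Z_1) \geq (A_0 + A_1)(B_0 + B_1)$; expanding and using the two inductive bounds reduces this to
\[
C_0 Z_1 + C_1 Z_0 \geq A_0 B_1 + A_1 B_0.
\]
Lower-bounding $C_i$ by $A_i B_i / Z_i$ (with the degenerate $Z_i = 0$ case handled trivially from the inductive hypothesis) and writing $a_i := A_i / Z_i$, $b_i := B_i / Z_i$, this further reduces to $(a_1 - a_0)(b_1 - b_0) \geq 0$.

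The main obstacle, and the only non-routine step, will be showing that the conditional means $a_i$ and $b_i$ are themselves monotone in $i$ in the common direction, i.e., $a_1 \geq a_0$ and $b_1 \geq b_0$ when $f, g$ are non-decreasing (with the reverse when both are non-increasing). Equivalently, one has to prove $A_1 Z_0 \geq A_0 Z_1$, which expands to a double sum over $(x', y') \in \{0,1\}^{V'} \times \{0,1\}^{V'}$ of terms of the form $f(x', 1)\mu(x', 1)\mu(y', 0) - f(x', 0)\mu(x', 0)\mu(y', 1)$. The plan is to symmetrize this sum by pairing each $(x', y')$ with its join/meet partner $(x' \vee y', x' \wedge y')$ and applying the log-supermodularity of $\mu$ together with the monotonicity of $f$ block by block. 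This is essentially the $n = 1$ Ahlswede--Daykin ``four-function'' inequality in disguise, whose proof is a short but careful case analysis using all four pointwise hypotheses; once it is in hand, the induction closes.
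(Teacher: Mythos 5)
The paper does not actually prove this Fact; it only cites \cite{AlonSpencerBook}, so the only question is whether your argument stands on its own. Your scaffolding is the standard one and is correct as far as it goes: the base case computation, the reduction of the inductive step to $C_0Z_1+C_1Z_0\ge A_0B_1+A_1B_0$, and the further reduction to $(a_1-a_0)(b_1-b_0)\ge 0$ are all fine (and the degenerate $Z_i=0$ case is handled correctly). You also correctly identify the monotonicity of the conditional means as the only non-routine step. A small omission: since $f,g$ are merely real-valued, you should first normalise to $f,g\ge 0$ by adding constants (this shifts both sides of the inequality by the same amount), because every pointwise product inequality you intend to use in the key step needs nonnegativity.

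The genuine gap is in that key step. The inequality $A_1Z_0\ge A_0Z_1$ is \emph{not} ``the $n=1$ Ahlswede--Daykin inequality in disguise'': with $\alpha=f_0\mu_0$, $\beta=\mu_1$, $\gamma=f_1\mu_1$, $\delta=\mu_0$ it is the full four-functions (equivalently Holley) inequality on $\{0,1\}^{V'}$ with $|V'|=n-1$, a statement comparing two different measures $\mu_0$ and $\mu_1$. The symmetrization you propose --- pairing each $(x',y')$ with $(x'\vee y',x'\wedge y')$ and summing the pointwise bounds --- fails as soon as $|V'|\ge 2$, because the join/meet map is not an involution on pairs: for instance in $\{0,1\}^2$ the pairs built from $(1,0)$ and $(0,1)$ all collapse onto $\big((1,1),(0,0)\big)$, so the right-hand sides $\gamma(x'\vee y')\delta(x'\wedge y')$ overcount some terms of $\sum\gamma\sum\delta$ and miss others, and the term-by-term inequalities cannot simply be added. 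Nor does your inductive hypothesis help: FKG in dimension $n-1$ is a correlation statement about a single log-supermodular measure and does not imply the stochastic domination $\mathrm{E}_{\mu_1}[f_1]\ge \mathrm{E}_{\mu_0}[f_0]$. So the induction does not close as written. The standard repair is to run the induction on the four-functions theorem itself (base case $n=1$ by case analysis; inductive step by collapsing one coordinate, which reduces the hypothesis check to the $n=1$ case applied to the collapsed functions) and then deduce FKG as a corollary --- which is exactly the route taken in the reference the paper cites.
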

A proof can be found in \cite{AlonSpencerBook}.

In general, if $f$, $g$, $h$ are three monotone 0-1 functions of a set
of independent 0-1 random variables, then (somewhat
counterintuitively) it is not necessarily true that $\Prob[ f=1 \;|\;
g=1 ] \le \Prob[ f=1 \;|\; gh=1 ]$. That is, the condition that a more
restrictive monotone function ($gh$ instead of $g$) is 1 does not
necessarily increase the probability that the monotone function is 1.
As an example, suppose that $x_1$, $x_2$, $x_3$ are independent 0-1
random variables, each having probability $1/2$ of being 1. Let
\begin{align*}
f&=(x_1\wedge x_2)\vee (x_2\wedge x_3)\vee (x_1\wedge x_3)\\
g&=(x_1\wedge x_2)\vee x_3\\
h&=x_3
\end{align*}
Now $\Prob[ f=1\;|\;g=1 ] =4/5$, but $\Prob[ f=1\;|\;gh=1 ] = \Prob[
f=1\;|\;h=1 ] =3/4$.  However, the statement is true in the special
case when the $g$ and $h$ are products of the random variables:
\begin{lem}\label{lem:poscor}
  Let $V$ be a set of independent 0-1 random variables, let $M'\subseteq M
  \subseteq V$ be two
  subsets of these variables, and let $f:\{0,1\}^V\to \{0,1\}$ be a
  monotone   function. Then 
  $$
  \Prob[ f=1 \;|\; \textstyle{\prod} M=1 ] \ge \Prob[ f=1 \;|\; 
  \textstyle{\prod} M'=1 ].
  $$
\end{lem}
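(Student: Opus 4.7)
\textbf{Proof proposal for Lemma~\ref{lem:poscor}.} The plan is to exploit independence to rewrite each conditional probability as an expectation under a product distribution in which the variables in $M$ (resp.\ $M'$) are frozen to $1$, and then use monotonicity of $f$ to compare the two expectations pointwise on the free variables.

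First I would note that since the variables in $V$ are mutually independent and $\prod M = 1$ is the event that every variable in $M$ takes the value $1$, conditioning on $\prod M = 1$ simply fixes the $M$-coordinates to $1$ and leaves the coordinates in $V \setminus M$ distributed independently as before; similarly for $M'$. Writing $D := M \setminus M'$, for any $y \in \{0,1\}^{V \setminus M}$ and $z \in \{0,1\}^{D}$ let $(\mathbf 1_{M'}, z, y)$ denote the element of $\{0,1\}^V$ obtained by setting the $M'$-coordinates to $1$, the $D$-coordinates according to $z$, and the $(V\setminus M)$-coordinates according to $y$. Then
\[
\Prob\bigl[f=1 \bigm| \textstyle\prod M' = 1\bigr] = \Exp_{y,z}\bigl[f(\mathbf 1_{M'}, z, y)\bigr], \qquad \Prob\bigl[f=1 \bigm| \textstyle\prod M = 1\bigr] = \Exp_{y}\bigl[f(\mathbf 1_{M'}, \mathbf 1_D, y)\bigr],
\]
where $y$ ranges over $\{0,1\}^{V\setminus M}$ and $z$ ranges over $\{0,1\}^{D}$, both under their original product distributions (which survive the conditioning by independence).

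Next I would fix $y$ and $z$ arbitrarily and observe that, coordinate-wise, $(\mathbf 1_{M'}, z, y) \le (\mathbf 1_{M'}, \mathbf 1_D, y)$ in $\{0,1\}^V$, because $z \le \mathbf 1_D$. Monotonicity of $f$ then gives $f(\mathbf 1_{M'}, z, y) \le f(\mathbf 1_{M'}, \mathbf 1_D, y)$ for every $y$ and $z$. Taking the expectation over $z$ yields $\Exp_z[f(\mathbf 1_{M'}, z, y)] \le f(\mathbf 1_{M'}, \mathbf 1_D, y)$ for every fixed $y$, and then taking the expectation over $y$ gives the desired inequality.

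There is essentially no obstacle here: the key point is that conditioning on the product of indicators equalling $1$ is the same as freezing those coordinates to $1$, which is only true because the variables are independent (this is exactly what fails in the counterexample just before the lemma, where $g$ and $h$ are not products of independent indicator variables). Once this reduction is made, monotonicity of $f$ does the rest in one line, with no need to invoke FKG.
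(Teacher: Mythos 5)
Your proof is correct, but it takes a genuinely different and more elementary route than the paper's. The paper derives the lemma from the FKG inequality (Fact~\ref{fact:fkg}): it defines a measure $\mu$ supported on the assignments with all of $M'$ equal to $1$, applies FKG to $f$ and $g=\prod(M\setminus M')$, and rewrites the resulting correlation inequality as the desired comparison of conditional probabilities. You instead observe that, by independence, conditioning on $\prod M=1$ is literally the same as freezing the $M$-coordinates to $1$ while the remaining coordinates keep their original product distribution, and then compare the two resulting expectations pointwise via monotonicity of $f$; the inequality $f(\mathbf 1_{M'},z,y)\le f(\mathbf 1_{M'},\mathbf 1_D,y)$ holds deterministically for every outcome, so no correlation inequality is needed. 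Both arguments implicitly assume $\Prob[\prod M=1]>0$ so that the conditional probabilities are defined, which is guaranteed in the paper's setting since every $p_R(N)\in(0,1)$. What your approach buys is self-containedness and transparency --- it isolates exactly why the hypothesis that the conditioning events are \emph{products of variables} matters (it is what allows the ``freezing'' step, and it is what fails in the counterexample preceding the lemma); what the paper's approach buys is uniformity with the FKG machinery it has already stated, at the cost of invoking a heavier tool than the statement requires.
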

\begin{proof}
  For every $x\in \{0,1\}^V$, if every variable of $M'$ is 1 in $x$,
  then let $\mu(x)$ be the probability of tuple $x$, otherwise let
  $\mu(x)=0$. It is easy to verify that $\mu(x)\mu(y)\le \mu(x\vee
  y)\mu(x\wedge y)$ for every $x,y\in \{0,1\}^V$. With $f$ and
  $g=\prod (M\setminus M')$, Theorem~\ref{fact:fkg} implies
\begin{align*}
\Prob[ f=1 \wedge \textstyle{\prod}  M=1 ]\cdot \Prob[ 
\textstyle{\prod} M'=1 ] &\ge 
\Prob[ f=1 \wedge \textstyle{\prod} M'=1 ] \cdot 
\Prob[ \textstyle{\prod} M=1 ]
\end{align*}
which, rewritten, is
\begin{align*}
\Prob[ f=1 \;|\; \textstyle{\prod} M=1 ]  &\ge 
\Prob[ f=1 \;|\; \textstyle{\prod} M'=1 ],
\end{align*}
what we had to show.
\end{proof}

The choice of the random database $D$ can be thought of as a set of
independent 0-1 random variables ($N^r$ variables describe an $r$-ary random
relation). For a join-project plan $\phi$ and a tuple $t$, let
$I_{\phi(D),t}$ be the indicator random variable that is 1 if and only
if $t\in \phi(D)$; clearly $I_{\phi(D),t}$ is a monotone function of
the random variables describing the database. Since this function is
monotone, it can be expressed as the disjunction of minterms, i.e.,
$I_{\phi(D),t}= \bigvee_{i=1}^M I^{(i)}_{\phi(D),t}$, where each
minterm $I^{(i)}_{\phi(D),t}$ is the product of a subset of the 0-1
random variables. We say that the {\em rank} of a monotone function is
the maximum size of a minterm of the function. The following two
lemmas will be useful for determining conditional probabilities
between these monotone functions.
\begin{lem}\label{prop:indmonotone}
Let $\phi$ be a join-project plan  whose tree has $\ell$ leaves and let $t$ be a
tuple in $\tup(\phi)$.
\begin{enumerate}
\item The rank of 
  $I_{\phi(D),t}$ is at most $\ell$.
\item If $\phi_0$ is a subplan of $\phi$, then $I_{\phi(D),t}$ can be
  written as $I_{\phi(D),t}=\bigvee_{t'\in
    \tup(\phi_0)}(I_{\phi_0(D),t'} \wedge J_{t'})$, where each
  $J_{t'}$ is a monotone function of the random variables. Moreover,
  if $\phi'$ is obtained from $\phi$ by replacing $\phi_0$ with some
  other subplan $\phi'_0$ satisfying $A_{\phi_0}=A_{\phi'_0}$, then
  $I_{\phi'(D),t}=\bigvee_{t'\in \tup(\phi'_0)}(I_{\phi'_0(D),t'}
  \wedge J_{t'})$ with the same  functions $J_{t'}$.
\end{enumerate}
\end{lem}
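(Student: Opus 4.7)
The plan is to prove both parts by structural induction on the join-project plan $\phi$.

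For part (1), I would induct on the structure of $\phi$, with the base case being a leaf $\phi=R$: then $I_{\phi(D),t}$ is a single variable (the indicator that $t\in R(D)$), which has rank $1=\ell$. For the join case $\phi=\phi_1\bowtie\phi_2$, the identity $I_{\phi(D),t}=I_{\phi_1(D),\pi_{A_{\phi_1}}(t)}\wedge I_{\phi_2(D),\pi_{A_{\phi_2}}(t)}$ shows that each minterm of $I_{\phi(D),t}$ is the union of a minterm of $I_{\phi_1(D),\pi_{A_{\phi_1}}(t)}$ and one of $I_{\phi_2(D),\pi_{A_{\phi_2}}(t)}$, so by induction of size at most $\ell_1+\ell_2=\ell$. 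For the projection case $\phi=\pi_B(\phi_1)$, the identity $I_{\phi(D),t}=\bigvee_{s:\pi_B(s)=t}I_{\phi_1(D),s}$ shows that every minterm of $I_{\phi(D),t}$ is already a minterm of some $I_{\phi_1(D),s}$, so has size at most $\ell_1=\ell$.

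For part (2), I would induct on the depth of $\phi_0$ inside $\phi$. The base case is $\phi=\phi_0$: take $J_t\equiv 1$ and $J_{t'}\equiv 0$ for $t'\neq t$, both of which are (trivially) monotone. For the inductive step where $\phi=\phi_1\bowtie\phi_2$ and $\phi_0$ lies inside $\phi_1$, the induction hypothesis gives $I_{\phi_1(D),s}=\bigvee_{t'\in\tup(\phi_0)}(I_{\phi_0(D),t'}\wedge J_{t',s})$ for each $s$; applying it with $s=\pi_{A_{\phi_1}}(t)$ and conjoining with $I_{\phi_2(D),\pi_{A_{\phi_2}}(t)}$ gives the desired decomposition with $J'_{t'}:=J_{t',\pi_{A_{\phi_1}}(t)}\wedge I_{\phi_2(D),\pi_{A_{\phi_2}}(t)}$, which is monotone. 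The projection case $\phi=\pi_B(\phi_1)$ is analogous: distribute the outer disjunction over $s$ with $\pi_B(s)=t$ through the inner decomposition and set $J'_{t'}:=\bigvee_{s:\pi_B(s)=t}J_{t',s}$, again monotone.

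The ``moreover'' clause is essentially free from how we constructed $J_{t'}$: at every step the functions $J_{t'}$ were built only from subplans of $\phi$ \emph{outside} $\phi_0$ and from the identity of the attribute set $A_{\phi_0}$ (which determines $\tup(\phi_0)$). Replacing $\phi_0$ by any $\phi'_0$ with $A_{\phi'_0}=A_{\phi_0}$ leaves the index set $\tup(\phi_0)=\tup(\phi'_0)$ and every $J_{t'}$ literally unchanged; only the factor $I_{\phi_0(D),t'}$ is replaced by $I_{\phi'_0(D),t'}$.

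The only mildly subtle point, and the one I would state carefully, is this invariance of the $J_{t'}$ under replacement of $\phi_0$ by $\phi'_0$. It is immediate from the recursion but needs to be tracked through the induction; the join and projection steps both preserve it because neither introduces any dependence on the internals of $\phi_0$. Everything else is a routine bookkeeping argument about monotone Boolean functions and the distributivity of $\wedge$ over $\vee$.
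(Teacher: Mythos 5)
Your proof is correct and follows essentially the same route as the paper's: part (1) is the same structural induction using that disjunction preserves the maximum rank and conjunction adds ranks, and part (2) is the same recursive construction of $I_{\phi(D),t}$ in which the factors $I_{\phi_0(D),t'}$ are left undecomposed, so the $J_{t'}$ depend only on the plan outside $\phi_0$ and on $A_{\phi_0}$. You merely make the induction on the position of $\phi_0$ and the explicit form of the $J_{t'}$ at each step more detailed than the paper does, which is a presentational difference, not a mathematical one.
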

\begin{proof}
  Statement 1 can be proved by a simple induction on the size of the tree
  of $\phi$. First, observe that the rank of disjunction of
  functions is at most the maximum of the ranks of the functions,
  while the rank of the conjunction of functions is at most the sum of
  the ranks of the functions. If the tree consists of a single leaf
  (i.e., $\phi$ consists of a single relation symbol), then
  $I_{\phi(D),t}$ is equal to one of the random variables, i.e., its
  rank is 1. If $\phi=\pi_X(\phi^*)$, then $I_{\phi(D),t}=
  \bigvee_{t'\in \tup(\phi^*), \pi_X(t')=\pi_X(t)} I_{\phi^*(D),t'}$. By
  induction, the rank of each $I_{\phi^*(D),t'}$ is at most $\ell$,
  hence the rank of this disjunction is also at most
  $\ell$. Finally, if $\phi=\phi_1 \bowtie \phi_2$, then
  $I_{\phi(D),t}=I_{\phi_1(D),\pi_{A_{\phi_1}}(t)} \wedge
  I_{\phi_2(D),\pi_{A_{\phi_2}}(t)}$. The number $\ell$ of leaves of
  $\phi$ is the sum of the number of leaves of $\phi_1$ and $\phi_2$,
  hence the rank of this conjunction is at most the number
  of leaves of $\phi$.

  To prove Statement 2, we build the function $I_{\phi(D),t}$ using
  disjunctions and conjunctions as in the previous paragraph, but the
  functions $I_{\phi_0(D),t'}$ are not decomposed any further. This
  way, $I_{\phi(D),t}$ is expressed as a monotone function of the
  random variables and of the functions $I_{\phi_0(D),t'}$ ($t'\in
  \tup(\phi_0)$). Thus $I_{\phi(D),t}$ can be written in the required
  form and it is clear that $J_{t'}$ does not depend on the structure
  of the subplan $\phi_0$.
\end{proof}

Let $f$ and $g$ be two functions on a set of independent 0-1
variables. Suppose that for each minterm of $f$, the probability that
$g=1$ on condition that the minterm is 1 is at least $p$. Somewhat
counterintuitively, this does not necessarily mean that the probability
of $g=1$ on condition that $f=1$ is at least $p$. For example,
consider two independent variables $x_1$ and $x_2$ with probability
$1/2$ of being 1, and let $g=x_1\wedge x_2$ and $f=x_1\vee x_2$. Now
$\Prob[ g=1\;|\;f=1 ]=1/3$, but $\Prob[ g=1\;|\;x_1=1
]=\Prob[ g=1\;|\;x_2=1 ]=1/2$, i.e., the probability is larger
conditioned on either minterm of $f$ than on $f$ itself. The following
lemma shows that if the functions have bounded rank, then we can bound
the ratio of these conditional probabilities.
\begin{lem}\label{lem:condprob}
Let $f,g$ be two monotone 0-1 functions of rank at most $\ell$ on a set
$V$ of independent 0-1
  random variables.
\begin{enumerate}
\item If the probability of 1 for each random variable
  is decreased by at most a factor $c>1$, then the expected value of
  $f$ is decreased by at most a factor $c^{\ell}$.
\item  If $\Prob[ g=1 \;|\; f^{(i)}=1 ]\ge p$ for every minterm
  $f^{(i)}$ of $f$, then $\Prob[ g=1 \;|\; f=1 ] \ge p\cdot 2^{-2\ell}$.
\end{enumerate}
\end{lem}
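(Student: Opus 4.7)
For part (1), my plan is a coupling argument. I can reduce to the case where each new probability equals exactly $p_v/c$, since if the new probability is larger then inserting extra randomness and using monotonicity of $f$ only increases $\Exp[f]$. Writing $f = \bigvee_i f^{(i)}$ as a monotone DNF with each minterm of size at most $\ell$, I sample $X$ under the original law and, independently, draw $\mathrm{Bernoulli}(1/c)$ variables $R_v$; setting $Y_v := X_v \cdot R_v$ gives $Y$ the new distribution. Whenever $f(X) = 1$, some minterm $f^{(i_0)}$ is satisfied by $X$; the event $f^{(i_0)}(Y) = 1$ is equivalent to $R_v = 1$ for every $v$ in that minterm, which occurs with probability $(1/c)^{|f^{(i_0)}|} \ge c^{-\ell}$, independently of $X$. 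Hence $\Prob[f(Y) = 1 \mid X] \ge c^{-\ell} \cdot \mathbf{1}[f(X) = 1]$, and taking expectations yields $\Exp[f(Y)] \ge c^{-\ell} \Exp[f(X)]$.

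For part (2), I would start from the inequality
\[
\Prob[g=1 \mid f=1] \;\ge\; \frac{\Prob[g=1,\, f^{(i_0)}=1]}{\Prob[f=1]} \;\ge\; p \cdot \frac{\Prob[f^{(i_0)}=1]}{\Prob[f=1]},
\]
valid for any single minterm $i_0$ of $f$. The task reduces to finding a minterm whose marginal probability is within a factor $2^{-2\ell}$ of $\Prob[f=1]$. A rank-$\ell$ function can have arbitrarily many minterms, so picking the most likely single one does not suffice on its own; I therefore plan to combine this observation with the rank bound on $g$ and with part (1). The product $fg$ is monotone of rank at most $2\ell$, so part (1) controls how $\Exp[fg]$ and $\Exp[f]$ change under a uniform rescaling of the probabilities. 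The strategy is to pass to a favorable distribution --- for instance, one obtained by halving every $p_v$ --- in which the minterm hypothesis on $g$ still gives a lower bound of $p/2^\ell$ (by part (1) applied to the function obtained from $g$ after conditioning the variables of minterm $i$ to $1$), and then transfer the conclusion back at the cost of another $2^\ell$ factor.

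The main obstacle is combining these estimates into a single clean bound of the advertised form. The event $\{f=1\}$ is a disjunction rather than a product, so Lemma~\ref{lem:poscor} does not apply to it directly; however, it does apply to each product event $\{f^{(i)}=1\}$, and positive correlation (FKG, Fact~\ref{fact:fkg}) should control the remaining slack between $\Prob[g=1, f=1]$ and the weighted average of $\Prob[g=1, f^{(i)}=1]$ across minterms. Pinning down the right coupling of these inequalities so that the two factors of $2^\ell$ compound to exactly $2^{2\ell}$ --- rather than to something strictly larger --- is the delicate step, and the place where the proof most clearly exploits that both $f$ and $g$ have rank at most $\ell$.
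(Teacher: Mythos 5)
Your proof of part (1) is correct and is essentially the paper's argument: both couple the original sample with an independent Bernoulli$(1/c)$ thinning and observe that, conditioned on $f$ being satisfied, the thinning preserves some satisfied minterm of $f$ with probability at least $c^{-\ell}$.

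Part (2), however, has a genuine gap. You correctly diagnose that bounding $\Prob[ g=1 \mid f=1 ]$ via a single minterm cannot work (the ratio $\Prob[f^{(i_0)}=1]/\Prob[f=1]$ can be arbitrarily small when $f$ has many minterms), but the repair you sketch does not close the hole. Rescaling all the probabilities --- whether halving or doubling --- does nothing by itself to address the many-minterms problem, and FKG cannot ``control the remaining slack'' between $\Prob[g=1,\,f=1]$ and the minterm-wise quantities: the paper's own counterexample preceding the lemma ($f=x_1\vee x_2$, $g=x_1\wedge x_2$) shows that $\Prob[g=1\mid f=1]$ can be strictly smaller than every $\Prob[g=1\mid f^{(i)}=1]$, so a loss factor is genuinely necessary and positive correlation alone will not produce it. The missing idea is a two-sample decoupling. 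Draw two independent copies $x_1,x_2$ of the assignment. Partition the event $\{f(x_1)=1\}$ into the disjoint events $X_i$ that the $i$-th minterm is the \emph{first} minterm of $f$ satisfied by $x_1$, and for each $i$ let $Z_i$ be the event $g(x_2\vee m_i)=1$, where $m_i$ forces the variables of the $i$-th minterm to $1$ and everything else to $0$. Then $Z_i$ depends only on the coordinates of $x_2$ outside the minterm, so $\Prob[Z_i]=\Prob[g=1\mid f^{(i)}=1]\ge p$, and $X_i$, $Z_i$ are independent; moreover $X_i\cap Z_i$ implies $f(x_1\vee x_2)=1$ and $g(x_1\vee x_2)=1$. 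Summing over $i$ gives $\Prob[f(x_1\vee x_2)=1\wedge g(x_1\vee x_2)=1]\ge p\cdot\Prob[f(x_1)=1]$, and since $x_1\vee x_2$ at most doubles each variable's probability of being $1$, part (1) applied to the rank-$2\ell$ function $f\wedge g$ transfers this back to the original distribution at a cost of $2^{2\ell}$. In particular, the two factors of $2^{\ell}$ you anticipated arise from a \emph{single} application of part (1) to $f\wedge g$ with $c=2$, not from separate applications to $f$ and to $g$.
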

\begin{proof}
  To prove Statement 1, let $x_1:V\to \{0,1\}$ be a random assignment
  of the variables chosen according to the original probabilities, and
  let $x_2$ be a the independent random assignment, where $x_2(v)=1$
  with probability $1/c$ uniformly and independently for each $v$.
  Let $m$ be the number of minterms of $f$ and for $1\le i \le m$, let
  $f^{(i)}$ be the $i$-th minterm of $f$. Let $X_i$ be the event that
  $f^{(i)}(x_1)=1$ and $f^{(j)}(x_1)=0$ for every $j < i$, i.e., the
  $i$-th minterm is the first satisfied minterm of $f$. The events
  $X_i$ are disjoint, thus $\Prob[ f(x_1)=1 ]=\sum_{i=1}^m\Prob[ X_i
  ]$. Let $Y_i$ be the event that $f^{(i)}(x_2)=1$. Clearly, $X_i$ and
  $Y_i$ are independent and $\Prob[ Y_i ] \geq c^{-\ell}$.  Let
  $x_1\wedge x_2$ be the conjunction of $x_1$ and $x_2$; observe that
  in $x_1 \wedge x_2$, the probability of a variable being 1 is
  exactly $1/c$ times the original probability.  Each event $X_i\wedge
  Y_i$ implies $f(x_1\wedge x_2)=1$ (as it implies $f^{(i)}(x_1\wedge
  x_2)=1$) and these events are are disjoint (since the events $X_i$
  are already disjoint).  Therefore,
\[
\Prob[ f(x_1\wedge x_2)=1 ] \ge \sum_{i=1}^m\Prob[ X_i\wedge Y_i ]=
\sum_{i=1}^m\Prob[ X_i ] \Prob[ Y_i ] \geq \Prob[ f(x_1)=1 ] 
\cdot c^{-\ell},
\]
what we had to show.

For Statement 2, let $x_1$, $x_2$ be two independent random
assignments, chosen according to the original probabilities of the
random variables. Let $X_i$ be the event as above and let $Z_i$ be the
event that $g(x_2\vee m_i)=1$, where $m_i$ is the assignment that sets
the $i$-th minterm of $f$ to 1 and every other random variable to
0. Clearly, $X_i$ and $Z_i$ are independent. Observe that $\Prob[
Z_i ]=\Prob[ g(x_2)=1\;|\; f^{(i)}(x_2)=1 ]\ge p$, since $Z_i$
depends only on what $x_2$ assigns to the variables not in the $i$-th
minterm.

In order to bound the conditional probability $\Prob[ g(x_1)=1 \;|\;
f(x_1)=1 ]$, we need to bound the probability $\Prob[ g(x_1)=1 \wedge
f(x_1)=1 ]$. To bound this probability, we calculate $\Prob[ g(x_1
\vee x_2)=1 \wedge f(x_1 \vee x_2)=1 ]$ (where $x_1\vee x_2$ is the
assignment defined as the disjunction of $x_1$ and $x_2$). Observe
that for each variable $v$, $\Prob[ (x_1\vee x_2)(v)=1 ] \le 2\Prob[
x_1(v)=1 ]$.  Thus, by Statement 1, $\Prob[ g(x_1 \vee x_2)=1 \wedge
f(x_1\vee x_2)=1 ] \le 2^{2\ell} \Prob[ g(x_1)=1 \wedge f(x_1)=1 ]$
(here we used that the rank of the conjunction of two rank $\ell$
functions is at most $2\ell$). We can bound the conditional
probability now as follows:
\begin{align*}
\Prob[ g(x_1)=1 \;|\;
f(x_1)=1 ] &=\frac{\Prob[ f(x_1)=1 \wedge
g(x_1)=1 ]}{\Prob[ f(x_1)=1 ]}\\
&\ge
\frac{2^{-2\ell}\Prob[ f(x_1\vee x_2)=1 \wedge
g(x_1 \vee x_2)=1 ]}{\Prob[ f(x_1)=1 ]} & \text{(by Statement 1)}\\
&\ge
\frac{2^{-2\ell}\Prob[ f(x_1)=1 \wedge
g(x_1 \vee x_2)=1 ]}{\Prob[ f(x_1)=1 ]}&\text{(more restricted event)}\\
&=
\frac{2^{-2\ell}\sum_{i=1}^m\Prob[ X_i\wedge g(x_1\vee x_2)=1 ]}{\Prob[ f(x_1)=1 ]}
&\text{(by $\Prob(f(x_1)=1)=\sum_{i=1}^m \Prob[X_i]$)}\\
&\ge
\frac{2^{-2\ell}\sum_{i=1}^m\Prob[ X_i\wedge Z_i ]}{\Prob[ f(x_1)=1 ]}
&\text{($X_i\wedge Z_i$ implies $g(x_1\vee x_2)=1$)}\\
&=
\frac{2^{-2\ell}\sum_{i=1}^m\Prob[ X_i ] \Prob[ Z_i ]}{\Prob[ f(x_1)=1 ]}
&\text{($X_i$ and $Z_i$ are independent)}\\
&\ge
\frac{2^{-2\ell}\cdot \Prob[ f(x_1)=1 ] \cdot p}{\Prob[ f(x_1)=1 ]}
&\text{(by $\Prob(f(x_1)=1)=\sum_{i=1}^m \Prob[X_i]$)}\\
&=2^{-2\ell}p.
\end{align*}

\end{proof}

\subsubsection{Analysis of Step 1}
\label{sec:step1}
The analysis of Step 1 relies on the following lemma, which shows that
we do not get many additional tuples if we take the join of only those
relations whose attributes are in $C_S(A)$.

\begin{lem}\label{lem:extend}
  Let $S\subseteq \sigma$ be a set of relation names and $A\subseteq
  A_\sigma$ a set of attributes. Let $A^*=C_S(A)$ and let
  $S^*=S[A^*]$. For every $A^*$-tuple $t^*$,
\[
\Prob[ t^*\in \pi_{A^*}(\bowtie_{R\in S}R(D)) \;|\; t^*\in \bowtie
_{R\in S^*}R(D) ] \ge 1/2.
\] 
\end{lem}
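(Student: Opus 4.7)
The plan is to prove the lemma by the second moment method applied to the number of extensions of $t^*$. Let $B := A_S \setminus A^*$ where $A_S := \bigcup_{R \in S} A_R$, and let $Y$ count the $(A^* \cup B)$--tuples $s$ with $s|_{A^*} = t^*$ such that $s|_{A_R} \in R(D)$ for every $R \in S$. For $R \in S^*$ this condition is already forced by the conditioning, so $Y$ is a function of only the entries of relations in $S \setminus S^*$. Since $S^*$ and $S \setminus S^*$ are disjoint, and the entries of distinct relations in the model $\mathcal D(N,p(N))$ are mutually independent, $Y$ is independent of the conditioning event. Thus the lemma reduces to showing $\Prob[Y \ge 1] \ge 1/2$, for which I will use Paley--Zygmund in the form $\Prob[Y \ge 1] \ge \Exp[Y]^2/\Exp[Y^2]$.

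The first moment is immediate: $\Exp[Y] = N^{|B|}\prod_{R \in S \setminus S^*} p_R(N)$. For the second moment, I would group pairs $(s,s')$ of extensions by their agreement set $B_0 \subseteq B$. Set $A_0 := A^* \cup B_0$ and $S_0 := S[A_0]$. A relation $R \in S \setminus S^*$ satisfies $s|_{A_R} = s'|_{A_R}$ precisely when $A_R \subseteq A_0$, i.e.\ when $R \in S_0 \setminus S^*$; for such $R$ the two membership events coincide and contribute a single factor $p_R(N)$, while for $R \in S \setminus S_0$ the two projections are distinct and contribute $p_R(N)^2$ by independence. Counting pairs whose agreement contains $B_0$ by $N^{2|B|-|B_0|}$ and collecting terms, one obtains
\[
\frac{\Exp[Y^2]}{\Exp[Y]^2} \;\le\; \sum_{B_0 \subseteq B} 2^{-|B_0|\log N \,+\, \sum_{R \in S_0 \setminus S^*} w_R(N)}.
\]

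The decisive step is to recognise the exponent as the difference $f_S(A^*) - f_S(A_0) - |B_0|(n+1)$: indeed, by the definition of $f_S$,
\[
f_S(A_0) - f_S(A^*) \;=\; |B_0|(\log N - n - 1) \;-\; \sum_{R \in S_0 \setminus S^*} w_R(N),
\]
and by the minimality of $f_S(A^*) = f_S(C_S(A))$ among supersets of $A$ (Proposition~\ref{prop:averageclosure}), this quantity is $\ge 0$. Rearranging, each summand is at most $2^{-|B_0|(n+1)}$, so
\[
\frac{\Exp[Y^2]}{\Exp[Y]^2} \;\le\; \sum_{k=0}^{|B|}\binom{|B|}{k} 2^{-k(n+1)} \;=\; (1 + 2^{-(n+1)})^{|B|} \;\le\; (1 + 2^{-(n+1)})^n \;<\; 2,
\]
where the last inequality holds for every $n \ge 1$ because $n\cdot 2^{-(n+1)} \le 1/4$ and hence $(1+2^{-(n+1)})^n \le e^{1/4} < 2$. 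The degenerate case $B = \emptyset$ (which includes $S = S^*$) is trivial since then $Y$ is deterministically $1$ on the conditioning event.

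The main obstacle is the second-moment calculation itself, which requires carefully accounting for how same- versus distinct-projection pairs contribute; once that is organised, the essential content of the proof is that the constant $n+1$ in the definition of $f_S$ is engineered so that the geometric factor $2^{-|B_0|(n+1)}$ dominates the combinatorial factor $\binom{|B|}{|B_0|}$ uniformly in $n$, keeping the second-moment ratio strictly below $2$.
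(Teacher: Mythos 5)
Your proof is correct and follows essentially the same route as the paper's: a second-moment argument on the number of extensions of $t^*$, with the minimality of $f_S(A^*)=f_S(C_S(A))$ supplying the key bound $\sum_{R\in S_0\setminus S^*}w_R(N)\le |B_0|(\log N-n-1)$ on each agreement class. The paper merely packages the same second-moment computation as an application of Proposition~\ref{prop:variance} to an auxiliary query on the attributes $A_\sigma\setminus A^*$ and concludes via Chebyshev's inequality rather than Paley--Zygmund, so the substance is identical.
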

\begin{proof}
  By definition, $t^*\in \pi_{A^*}(\bowtie_{R\in S}R(D))$ if and only
  if there is a $t\in \bowtie_{R\in S}R(D)$ with
  $\pi_{A^*}(t)=t^*$. Note that if $t^*\in \bowtie_{R\in S^*}R(D)$ and
  $\pi_{A^*}(t)=t^*$, then $t$ satisfies all the relations in $S^*$,
  hence the probability that such a $t$ is in $\bowtie_{R\in S}R(D)$
  (assuming $t^*\in \bowtie_{ R\in S^*}R(D)$) depends only on the
  relations in $S\setminus S^*$.  We claim that this conditional
  probability is equal to the probability that a certain query $Q'$
  with schema $\sigma'$ has at least one solution.  The query $Q'$ is
  over the attributes $A_\sigma \setminus A^*$. The schema $\sigma'$
  contains a relational symbol $R'$ for each $R\in S\setminus S^*$;
  the set of attributes of $R'$ is $A_R\setminus X^*$. We define the
  probability of placing a tuple into $R'$ as $p_{R'}(N)=p_{R}(N)$ for
  every $R'\in \sigma'$. It is not difficult to see that $\Prob[
  t^*\in \pi_{A^*}(\bowtie_{R\in S}R(D)) \;|\; t^*\in \bowtie_{R\in
    S^*}R(D) ]$ is equal to the probability that $Q'$ has at least one
  solution.

  Observe that if $A'$ is a subset of the attributes in $Q'$, then the
  relations in $\sigma'[A']$ were obtained from the relations in
  $S[A^*\cup A']\setminus S[A^*]$, which means that the weight of
  these relations is counted in $f_S(A^*\cup A')$ but not in
  $f_S(A^*)$. If the weight of the relations in $\sigma'[A']$ is
  greater than $|A'|(\log N-n-1)$, then $f_S(A^*\cup A')<f_S(A^*)$
  would follow, contradicting the minimality of
  $f_S(A^*)=f_S(C_S(A))$.  This means that the maximum density
  $\maxdensity$ of $Q'$ is at most $\log N-n-1$.  Writing $X :=
  |Q'(D)|$, by Proposition~\ref{prop:variance} the variance of $X$ is
\[
\Var[X]\le \Exp[X]^2\cdot (2^{n'}-1)2^{\maxdensity-\log
  N}<\Exp[X]^2\cdot (2^{n'}-1)2^{-(n+1)}\le \Exp[X]^2/2.
\]
Therefore, by Chebyshev's Inequality, the probability that there is no
solution can be bounded as
\[
\Prob[X=0]\le \Prob\left[ | X-\Exp[X]|\ge \Exp[X]\right] \le
\Var[X]/\Exp[X]^2\le 1/2.
\]
\end{proof}

Let $\phi'$ be the new join-project plan obtained from $\phi$ by an
application of Step 1.  For every subplan $\psi'$ of $\phi'$, there is
a corresponding subplan $\psi$ of $\phi$.  We claim that
\[
\Exp[ |\psi'(D)| ]\le 2^{2\ell+1} \Exp[ |\psi(D)| ],
\] 
where $\ell$ is the number of leaves in $\psi$.  If subplan $\psi$ is
disjoint from $\phi_0$, then $\psi'$ and $\psi$ are the same, and we
are done (e.g., if $\psi$ is the subplan rooted at $\pi_{\{d,f\}}$ in
Figure~\ref{fig:plan}). Thus we have to consider only two cases:
$\psi$ is either completely contained in $\phi_0$, or $\psi$ contains
$\phi_0$.

\medskip\noindent \textit{Case 1: } $\psi$ is contained in $\phi_0$
(e.g., $\psi$ is one of the join nodes below $\pi_{\{a,b\}}$ in
Figure~\ref{fig:plan}). Let $B$ be the set of all attributes of the
relations appearing in $\psi$. If $B\subseteq A^*$, then the
attributes of each relation appearing in $\psi$ are fully contained in
$A^*$ and we are done: $\psi'=\psi$. Otherwise, let $w$ (resp., $w'$)
be the total weight of the relations appearing in $\psi$ (resp.,
$\psi'$). Observe that $w-w'\le|B\setminus A^*|(\log N-n-1)$:
otherwise we would have $f_S(A^*\cup B)<f_S(A^*)$, contradicting the
minimality of $A^*$. Clearly, $A_{\psi'}\subseteq B\cap A^*$. Thus the
expected size of $|\psi'(D)|$ is
\[
\Exp[ |\psi'(D)| ]= 2^{|A_{\psi'}|\log N - w'}\le 2^{|B\cap A^*|\log N
  - w' + |B\setminus A^*|(\log N-n-1)-(w-w')} \le 2^{|B|\log N -
  w}=\Exp[ |\psi(D)| ].
\]

\medskip\noindent \textit{Case 2: } $\psi$ is not contained in
$\phi_0$, which implies that $\psi$ contains $\pi_A(\phi_0)$ as
subplan (including the possibility that $\psi'=\pi_A(\phi_0)$. As an
example, consider the projection $\pi_{\{a,c,f\}}$ in
Figure~\ref{fig:plan}. Note that $A_{\psi}=A_{\psi'}$ (because we are
above the projection $\pi_A$) and $\psi(D)\subseteq \psi'(D)$ follows
from $\pi_A(\phi_0(D))\subseteq \pi_A(\phi'_0(D))$. Let $\ell$ be the
number of leaves of $\psi$. We claim that for every tuple $t\in
\tup(\psi')$ we have $\Prob[ t\in \psi'(D) ] \le 2^{2\ell+1} \Prob[
t\in \psi(D) ]$, which implies $\Exp[ |\psi'(D)| ]\le 2^{2\ell+1}
\Exp[ |\psi(D)| ]$.  To prove this, we show that for every minterm
$I^{(i)}_{\psi'(D),t}$ of $I_{\psi'(D),t}$, we have 
\begin{equation}
\Prob[I_{\psi(D),t}=1\;|\;I^{(i)}_{\psi'(D),t}=1 ]\ge 1/2.
\end{equation} 
Thus by Lemma~\ref{lem:condprob}(2), we get $\Prob[
I_{\psi(D),t}=1\;|\;I_{\psi'(D),t}=1 ]\ge 1/2^{2\ell+1}$, what we
need.

By Lemma~\ref{prop:indmonotone}(2), $I_{\psi'(D),t}$ can be written as
\begin{equation}
  I_{\psi'(D),t}=\bigvee_{t'\in \pi_A(\tup(\phi'_0))} (I_{\pi_A(\phi'_0(D)),t'}\wedge
  J_{t'}).\label{eq:disj}
\end{equation}
Consider a particular minterm $I^{(i)}_{\psi'(D),t}$ for some $i$,
which is the product of a subset $V_i$ of the random variables.  If
$\prod V_i=1$, then $I_{\psi'(D),t}=1$, implying that
$I_{\pi_A(\phi'_0(D)),t'}\wedge J_{t'} =1$ for some tuple $t'\in
\tup(\pi_A(\phi'_0))$, which further implies that there is a tuple
$t''\in \tup(\phi'_0)$ such that $\pi_A(t'')=t'$ and
$I_{\phi'_0(D),t''}=1$. Let us fix such a $t'$ and $t''$. Since
$\phi'_0$ does not contain projections, $I_{\phi'_0(D),t''}$ is the
product of a set $V_i'$ of random variables. As
$I^{(i)}_{\psi'(D),t}=1$ implies $I_{\phi'_0(D),t''}=1$, we have
$V_i'\subseteq V_i$.  By a consequence of the FKG Inequality (see
Lemma~\ref{lem:poscor}),
\begin{equation}
  \Prob[ I_{\pi_A(\phi_0(D)),t'}=1 \;|\;\prod V_i=1 ] \ge 
  \Prob[ I_{\pi_A(\phi_0(D)),t'}=1\;|\;\prod V'_i=1 ]\label{eq:poscorr}
\end{equation}
(since $V'_i\subseteq V_i$ and $I_{\pi_A(\phi_0(D)),t'}$ is monotone).
Note that it is $\phi_0$ and not $\phi_0'$ that appears in
\eqref{eq:poscorr}.  By Lemma~\ref{prop:indmonotone}(2),
$I_{\psi(D),t}$ can be also written in the form \eqref{eq:disj}, hence
$I_{\pi_A(\phi_0(D)),t'}\wedge J_{t'}=1$ implies
$I_{\psi(D),t}=1$. Thus we have
\begin{align*}
\Prob[ I_{\psi(D),t}=1\;|\; \prod V_i=1 ]&
\ge
\Prob[ I_{\pi_A(\phi_0(D)),t'}=1\wedge J_{t'} =1\;|\; \prod V_i=1 ] \\
&=
\Prob[ I_{\pi_A(\phi_0(D)),t'}=1 \;|\; \prod V_i=1 ] \\
&\ge
\Prob[ I_{\pi_A(\phi_0(D)),t'}=1 \;|\; \prod V'_i=1 ] \\
&\ge
\Prob[ I_{\pi_{A^*}(\phi_0(D)),t''}=1 \;|\; \prod V'_i=1 ] \\
&=
\Prob[ t''\in \pi_{A^*}(\phi_0(D)) \;|\; t''\in \phi'_0(D) ] \\
&\ge 1/2.
\end{align*}
what we had to show. (The first inequality follows from the fact that
$I_{\pi_A(\phi_0(D)),t'}\wedge J_{t'}=1$ implies $I_{\psi(D),t}=1$;
the equality after that from the fact that $\prod V_i=1$ implies
$J_{t'}=1$; the second inequality follows from \eqref{eq:poscorr}; the
third inequality follows from $\pi_A(t'')=t'$; the
last inequality follows from Lemma~\ref{lem:extend}.)

\subsubsection{Analysis of Step 2}
\label{sec:step2}
The analysis of Step 2 relies on the following lemma, which shows that
extending the projection from $A$ to $A^*$ does not increase the
number of tuples too much: a tuple in the projection to $A$ does not
have too many extensions to $A^*$.

\begin{lem}\label{lem:projext}
  Let $S\subseteq \sigma$ be a set of relation names and $A \subseteq
  A_\sigma$ be a set of attributes. Let $A^*=C_S(A)$ and let
  $S^*=S[A^*]$.  For an $A$-tuple $t$, let $L_t$ be the set of those
  $A^*$-tuples $t^*\in \bowtie_{R\in S^*}R(D)$ that have
  $\pi_A(t^*)=t$.
\begin{enumerate} \itemsep=0pt
\item For every $A$-tuple $t$ we have $\Exp[ |L_t| \;|\;
  L_t\neq\emptyset ] \le 2^{n(n+2)}$.
\item For every $A$-tuple $t$ and $A^*$-tuple $t^*$ with
  $\pi_A(t^*)=t$ we have $\Prob[ t^*\in L_t \;|\; L_t\neq \emptyset ]
  \le 2^{n(n+2)}N^{-|A^*\setminus A|}$.
\end{enumerate}
\end{lem}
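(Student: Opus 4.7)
The plan is to reduce both parts of the statement to a single second-moment inequality on an auxiliary sub-query. Define $Q''$ as the query with attribute set $A^*\setminus A$ whose schema contains, for each $R\in S[A^*]\setminus S[A]$, one relation $R''$ with attributes $A_R\setminus A$ and probability $p_{R''}(N)=p_R(N)$; this sub-query fits into the same random-database model. Let $E_t$ be the event that $\pi_{A_R}(t)\in R(D)$ for every $R\in S[A]$; it depends only on relations in $S[A]$ and is therefore independent of $Q''(D)$. A direct check shows $|L_t|=\mathbf{1}_{E_t}\cdot |Q''(D)|$: for every extension $t^*$ of $t$, the conditions $\pi_{A_R}(t^*)\in R(D)$ for $R\in S[A]$ coincide with $E_t$, while for $R\in S[A^*]\setminus S[A]$ they say precisely that the $(A^*\setminus A)$-part of $t^*$ belongs to $Q''(D)$. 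Writing $\mu:=\prod_{R\in S[A^*]\setminus S[A]}p_R(N)$ and $k:=|A^*\setminus A|$, so that $\Exp[|Q''(D)|]=N^k\mu$ and $\Prob[t^*\in L_t]=\Prob[E_t]\cdot\mu$, both parts reduce, after cancelling $\Prob[E_t]$, to a single bound
\[
\frac{\Exp[|Q''(D)|^2]}{\Exp[|Q''(D)|]}\;\leq\;2^{n(n+2)},
\]
from which the Paley--Zygmund inequality $\Prob[|Q''(D)|>0]\geq\Exp[|Q''(D)|]^2/\Exp[|Q''(D)|^2]$ gives $\Exp[|L_t|\mid L_t\neq\emptyset]\leq 2^{n(n+2)}$ and $\Prob[t^*\in L_t\mid L_t\neq\emptyset]\leq 2^{n(n+2)}\mu/\Exp[|Q''(D)|]=2^{n(n+2)}N^{-k}$.

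To bound the ratio I expand the second moment exactly as in the proof of Proposition~\ref{prop:variance}, grouping ordered pairs of extensions by their agreement set on $A^*\setminus A$. After the change of variable $B':=(A^*\setminus A)\setminus B$, this produces
\[
\frac{\Exp[|Q''(D)|^2]}{\Exp[|Q''(D)|]}\;=\;\sum_{B'\subseteq A^*\setminus A}(N-1)^{|B'|}\prod_{R\in S[A^*]\setminus S[A^*\setminus B']}p_R(N).
\]
The main obstacle is that Proposition~\ref{prop:variance} cannot be applied to $Q''$ as a black box: the maximum density of $Q''$ may very well exceed $\log N$, because the minimality of $f_S(A^*)$ among supersets of $A$ does not by itself control the density of sub-queries of $Q''$. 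What rescues the argument is the second defining property of $A^*=C_S(A)$ from Proposition~\ref{prop:averageclosure}: $A^*$ is the \emph{largest} set of minimum $f_S$-value among supersets of $A$, not merely a minimizer.

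Concretely, for every $B'\subseteq A^*\setminus A$ the set $A^*\setminus B'$ still contains $A$, so $f_S(A^*\setminus B')\geq f_S(A^*)$, which after simplifying the definition of $f_S$ rearranges to
\[
\sum_{R\in S[A^*]\setminus S[A^*\setminus B']}w_R(N)\;\geq\;|B'|(\log N-n-1),
\]
equivalently $\prod_{R\in S[A^*]\setminus S[A^*\setminus B']}p_R(N)\leq 2^{|B'|(n+1)}N^{-|B'|}$. Substituting back, each summand is at most $(N-1)^{|B'|}N^{-|B'|}2^{|B'|(n+1)}\leq 2^{|B'|(n+1)}$, and summing over $B'\subseteq A^*\setminus A$ gives $(1+2^{n+1})^k\leq 2^{k(n+2)}\leq 2^{n(n+2)}$, using $k\leq n$. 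This establishes the displayed second-moment estimate and closes the reduction for both parts of the statement.
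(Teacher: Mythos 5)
Your proof is correct and follows essentially the same route as the paper's: a second-moment computation over pairs of extensions grouped by their agreement set, the key estimate $\sum_{R\in S[A^*]\setminus S[A^*\setminus B']}w_R(N)\ge|B'|(\log N-n-1)$ obtained from the minimality of $f_S$ at $C_S(A)$, and the Cauchy--Schwarz/Paley--Zygmund step (which is exactly the paper's Lemma~\ref{lem:nonzeroprob}), with your factorisation $|L_t|=\mathbf{1}_{E_t}\cdot|Q''(D)|$ being a clean but inessential repackaging of the paper's direct treatment of $|L_t|$. One small correction: the inequality $f_S(A^*\setminus B')\ge f_S(A^*)$ you invoke follows already from $A^*$ being a minimizer of $f_S$ over supersets of $A$ (since $A\subseteq A^*\setminus B'$), so the maximality of $|C_S(A)|$ among minimizers is not what rescues the argument, contrary to your remark.
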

For the proof of Lemma~\ref{lem:projext}, we need the following probability bound:
\begin{lem}\label{lem:nonzeroprob}
For every real-valued random variable $X$,
\[
\Exp[ X \;|\; X\neq 0] \Exp[X] \le \Exp[X^2].
\]
\end{lem}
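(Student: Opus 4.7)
The plan is to recognize this as a direct application of the Cauchy--Schwarz inequality to the random variables $X$ and the indicator $\mathbf{1}_{X \neq 0}$. The only subtlety is handling the conditional expectation, so I would first rewrite it in unconditional form.

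First, I would let $p := \Prob[X \neq 0]$ and observe the identity $X = X \cdot \mathbf{1}_{X \neq 0}$ (which holds pointwise: when $X(\omega) = 0$, both sides are $0$, and otherwise $\mathbf{1}_{X \neq 0}(\omega) = 1$). This identity gives
\[
\Exp[X \mid X \neq 0] \;=\; \frac{\Exp[X \cdot \mathbf{1}_{X \neq 0}]}{p} \;=\; \frac{\Exp[X]}{p},
\]
assuming $p > 0$ (the case $p = 0$ is trivial, since then $X = 0$ almost surely and both sides of the inequality are $0$). Multiplying through, the claim becomes $\Exp[X]^2 \le p \cdot \Exp[X^2]$.

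Next, I would apply the Cauchy--Schwarz inequality to the pair $X$ and $\mathbf{1}_{X \neq 0}$:
\[
\Exp[X]^2 \;=\; \Exp\bigl[X \cdot \mathbf{1}_{X \neq 0}\bigr]^2 \;\le\; \Exp[X^2] \cdot \Exp\bigl[\mathbf{1}_{X \neq 0}^2\bigr] \;=\; \Exp[X^2] \cdot p,
\]
where the first equality uses the pointwise identity above and the last equality uses $\mathbf{1}_{X \neq 0}^2 = \mathbf{1}_{X \neq 0}$. Dividing by $p$ (when positive) yields $\Exp[X \mid X \neq 0] \cdot \Exp[X] = \Exp[X]^2/p \le \Exp[X^2]$, as desired.

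There is no genuine obstacle here; the only thing to be careful about is the edge case $p = 0$, which must be excluded or handled separately since $\Exp[X \mid X \neq 0]$ is then not defined by the usual formula. Everything else is just two lines of Cauchy--Schwarz.
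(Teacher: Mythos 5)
Your proof is correct, and it takes a slightly different route from the paper's. The paper reduces the claim to the same inequality $\Exp[X]^2 \le \Prob[X\neq 0]\cdot\Exp[X^2]$, but proves it by constructing an auxiliary random variable $Y$ distributed as $X$ conditioned on $X\neq 0$ (explicitly, $\Prob[Y=a]=\Prob[X=a]/\lambda$ for $a\neq 0$) and then applying Jensen's inequality $(\Exp[Y])^2\le\Exp[Y^2]$ to the convex function $x\mapsto x^2$. You instead use the pointwise identity $X=X\cdot\mathbf{1}_{X\neq 0}$ and apply Cauchy--Schwarz to the pair $(X,\mathbf{1}_{X\neq 0})$. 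The two arguments are essentially equivalent in content --- Cauchy--Schwarz against an indicator is the same fact as nonnegativity of the variance of the conditional distribution --- but your version is arguably cleaner: it avoids introducing the auxiliary variable, it does not implicitly assume $X$ is discrete (the paper's definition of $Y$ via pointwise probabilities $\Prob[X=a]$ only makes literal sense for discrete $X$, which suffices for the application to $|L_t|$ but is not how the lemma is stated), and you handle the degenerate case $\Prob[X\neq 0]=0$ explicitly, just as the paper does. No gaps.
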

\begin{proof}
  Let $\lambda=\Prob[X\neq 0]$. If $\lambda = 0$ then $\Exp[X] =
  \Exp[X^2] = 0$ and we are done. Assume then $\lambda \not= 0$. Let
  $Y$ be the random variable defined by
  $\Prob[Y=a]=\Prob[X=a]/\lambda$ for every $a\neq 0$ and
  $\Prob[Y=0]=0$. Observe that $\Exp[Y]=\Exp[X]/\lambda$ and
  $\Exp[Y^2]=\Exp[X^2]/\lambda$.  By Jensen's Inequality (using that
  $f(x)=x^2$ is convex), we have $(\Exp[Y])^2\le
  \Exp[Y^2]$. Therefore,
\[
\Exp[ X \;|\; X\neq 0] \Exp[X] = \Exp[X]^2/\lambda =\lambda
(\Exp[Y])^2 \le \lambda \Exp[Y^2] = \Exp[X^2],
\]
what we had to show.
\end{proof}
Lemma~\ref{lem:nonzeroprob} shows that we can prove Lemma~\ref{lem:projext}(1) by bounding $\Exp[|L_t|^2]$.
\begin{proof}[Proof (of Lemma~\ref{lem:projext})]
  We show that
\[
\Exp\left[|L_t|^2
\right]\le \Exp[|L_t|]\cdot 2^{n(n+2)}
\] by an argument similar to the proof of Proposition~\ref{prop:variance}.
Statement (1) follows from Lemma~\ref{lem:nonzeroprob} with $X=|L_t|$.

Let $T$ be the set of all $A^*$-tuples whose projection to $A$ is $t$.
Let $X(t^*)$ be the indicator random variable corresponding to the
event $t^*\in L_t$. We need to bound
\[
\Exp\left[ |L_t|^2 \right] =\Exp\left[ \sum_{t^*_1,t^*_2\in
    T}X(t^*_1)X(t^*_2) \right]= \sum_{t^*_1,t^*_2\in
  T}\Exp\left[X(t^*_1)X(t^*_2)\right].
\]
For every $A\subseteq B\subseteq A^*$, let $F_B$ be the set of all
pairs $(t^*_1,t^*_2)\in T^2$ where $t^*_1$ and $t^*_2$ agree exactly
on $B$. We have that $|F_B|=N^{|B\setminus A|}\cdot N^{|A^*\setminus
  B|}\cdot (N-1)^{|A^*\setminus B|}\le N^{|B\setminus
  A|+2|A^*\setminus B|}=N^{|A^*\setminus A|+|A^*\setminus B|}$. Let
$w$ be the total weight of the relations in $S^*$ and let $w_B$ be the
total weight of the relations in $S^*[B]$. Observe that
$\Exp[|L_t|]=N^{|A^*\setminus A|}\cdot 2^{-w}$. Furthermore, we have
$w-w_B\ge |A^*\setminus B|(\log N-n-1)$: otherwise $f_S(B)$ would be
strictly less than $f_S(A^*)$, contradicting the minimality of
$A^*=C_S(A)$. For every $(t^*_1,t^*_2)\in F_B$, the event
$X(t^*_1)X(t^*_2)=1$ implies that for every relation $R\in S^*$, the
projections of $t^*_1$ and $t^*_2$ to the attributes of $R$ is in
$R(D)$. For different relations, these events are clearly independent.
For every $R\in S^*[B]$, the projections of $t^*_1$ and $t^*_2$ to the
attributes of $R$ are the same. Thus the probability that the
projections are in the relation for every $R\in S^*[B]$ is exactly
$2^{-w_B}$.  Consider now a relation in $R\in S^*\setminus
S^*[B]$. For such an $R$, the projections of $t^*_1$ and $t^*_2$ are
different, thus their appearance in $R(D)$ are independent events. As
the total weight of the relations in $S^*\setminus S^*[B]$ is $w-w_B$,
we get the following bound:
\[
\Exp\left[X(t^*_1)X(t^*_2) \right]\le 2^{-w_B-2(w-w_B)}.
\]
It follows that 
\begin{align*}
  \Exp\left[|L_t|^2 \right] &= \sum_{t^*_1,t^*_2\in
    T}\Exp\left[X(t^*_1)X(t^*_2)\right]= \sum_{B\subseteq
    A^*}\sum_{(t^*_1,t^*_2)\in F_B}\Exp\left[X(t^*_1)X(t^*_2)\right]
  \\
  & \le 2^{|A^*|}\cdot N^{|A^*\setminus A|+|A^*\setminus B|}\cdot 2^{-w_B-2(w-w_B)}=(N^{|A^*\setminus A|}\cdot 2^{-w})\cdot 2^{|A^*|}\cdot 2^{|A^*\setminus B|\log N-(w-w_B)}\\
  & \le \Exp[|L_t|]\cdot 2^{|A^*|}\cdot 2^{|A^*\setminus B|(n+1)}\le
  \Exp[|L_t|]\cdot 2^n\cdot 2^{n(n+1)} \le \Exp[|L_t|]\cdot
  2^{n(n+2)},
\end{align*}
what we had to show.

To prove the second statement, observe first that if we fix an
$A$-tuple $t$, then by symmetry, $\Prob[ t^*\in L_t \;|\;
L_t\neq\emptyset ]$ has the same value for every $A^*$-tuple $t^*$
with $\pi_A(t^*)=t$.  There are $N^{|A^*\setminus A|}$ such tuples
$t^*$ and the size of $L_t$ is the sum of the indicator variables
corresponding to these tuples. It follows that $\Prob[ t^*\in L_t
\;|\; L_t\neq\emptyset ] =\Exp[ |L_t| \;|\; L_t\neq\emptyset]
N^{-|A^*\setminus A|} \le 2^{n(n+2)}N^{-|A^*\setminus A|} $.
\end{proof}

Let $\phi''$ be the plan obtained from $\phi'$ after Step 2 and let
$\ell$ be the number of leaves of the plan.  We show that
\[\Exp[ |\psi''(D)| ] \le 2^{n(n+2)+2\ell} \Exp[ |\psi'(D)| ] \]
for every subplan $\psi''$ of $\phi''$ and corresponding subplan
$\psi'$ of $\phi'$.
We consider three cases.

{\em Case 1.}
If $\psi'$ is a subplan of $\psi'_0$, then
$\psi''=\psi'$.

{\em Case 2.} Suppose that $\psi'$ is a subplan strictly containing
$\pi_A(\phi'_0)$ such that the root of $\psi'$ is a projection and let
$\psi''$ be the corresponding subplan after applying Step 2 (e.g.,
consider the node $\pi_{\{a,c,f\}}$ in Figure~\ref{fig:plan}). Observe
that $A_{\psi'}=A_{\psi''}$: extra attributes can be introduced only
by the change from $\pi_A(\phi'_0)$ to $\pi_{A^*}(\phi'_0)$ and any
such extra attribute is either already present in $A_\psi'$ or
projected out by the projection in the root of $\psi'$. For example,
in Figure~\ref{fig:plan}, replacing $\pi_{\{a,b\}}$ with
$\pi_{\{a,b,c,d\}}$ has no effect on the
attributes of the nodes above the projection $\pi_{\{a,c,f\}}$ as
attribute $c$ already appears there and attribute $d$ is projected
out.

We show that $\psi''(D)\subseteq \psi'(D)$. Indeed, if $t\in
\psi''(D)$, then $\pi_{A^*}(t)\in \pi_{A^*}(\varphi'_0(D))$, which
implies $\pi_A(t)\in \pi_A(\varphi'_0(D))$. This implies $t\in
\psi'(D)$, since the other subplans of $\psi'$ did not change. Thus
Step 2 does not increase the size of $\psi'(D)$ if $\psi'$ is a
subplan containing $\pi_A(\phi'_0)$ and the root of $\psi'$ is a
projection. Furthermore, it also follows that if $\psi'$ is a subplan
containing $\pi_A(\phi'_0)$ such that there is a projection node above
$\pi_A(\phi'_0)$, then the size of $\psi'(D)$ does not increase.

{\em Case 3.}  The only remaining situation that we have to verify is
that if $\psi'$ is a subplan of $\phi'$ containing $\pi_A(\phi'_0)$
and having no projection node above $\pi_A(\phi'_0)$. For example,
consider any of the two join nodes between $\pi_{\{a,b\}}$ and
$\pi_{\{a,c,f\}}$ in Figure~\ref{fig:plan}.  We show that in this case
$\Exp[ |\psi''(D)| ] \le 2^{n(n+2)+2\ell} \Exp[ |\psi'(D)| ]$, where
$\psi''$ is the subplan of $\phi''$ corresponding to $\psi'$.

Note that $A^*\subseteq A_{\psi''}$, since $\psi''$ contains no
projections above $\pi_{A^*}$.  Let $C$ be $A_{\psi'}\setminus
(A^*\setminus A)=A_{\psi''}\setminus (A^*\setminus A)$.  If a tuple
$t$ is in $\psi''(D)$, then clearly $t_1:=\pi_{A^*}(t)$ is in
$\phi'_0(D)$ and $t_2:=\pi_{C}(t)$ is in $\pi_C(\psi'(D))$. Thus
\begin{align}
  \notag
  &\Prob[ t\in \psi''(D) ]\\
  &\le \Prob[ t_1 \in \phi'_0(D) \wedge t_2\in \pi_C(\psi'(D)) ]
  \notag\\
  &= \Prob[ \pi_A(t_1)\in \pi_A(\phi'_0(D)) ] \cdot \Prob[ t_1 \in
  \phi'_0(D) \;|\; \pi_A(t_1)\in \pi_A(\phi'_0(D)) ]\cdot \Prob[ t_2
  \in \pi_C(\psi'(D))\;|\;t_1 \in \phi'_0(D) ]. \label{eq:factors}
\end{align}
By Lemma~\ref{lem:projext}(2) with $t=t^*=t_1$, the second factor in
\eqref{eq:factors} is at most $2^{n(n+2)}\cdot N^{-|A^*\setminus A|}$.
To bound the third factor, observe that
\[
I_{\pi_A(\phi'_0(D)),\pi_A(t_1)}=\bigvee_{\substack{t'_1\in
    \tup(\phi'_0)\\\pi_A(t'_1)=\pi_A(t_1)}}I_{\phi'_0(D),t'_1}
\]  
by the definition of the projection. As the plan $\phi'_0$ does not
contain any projections, each term $I_{\phi'_0(D),t'_1}$ on the right
hand side is the product of random variables, i.e., they are the
minterms of $I_{\pi_A(\phi'_0(D)),\pi_A(t_1)}$.  The conditional
probability
\[
\Prob[ I_{\pi_C(\psi'(D)),t_2}=1\;|\; I_{\phi'_0(D),t'_1}=1 ]
\]
is the same for every $t'_1\in \tup(\phi'_0)$ with
$\pi_A(t'_1)=\pi_A(t_1)=\pi_A(t_2)$, since the common attributes of
$t'_1$ and $t_2$ are exactly in $A$.  Therefore, we can use
Lemma~\ref{lem:condprob}(2) to show that the probability conditioned
on a minterm of $I_{\pi_A(\phi'_0(D)),\pi_A(t_1)}$ can be bounded by
the probability conditioned on $I_{\pi_A(\phi'_0(D)),\pi_A(t_1)}$
itself. In particular, we bound the probability of the event
$I_{\pi_C(\psi'(D)),t_2}=1$ conditioned on the minterm corresponding
to $t_1$, which is exactly the third factor appearing in
\eqref{eq:factors}:
\[
\Prob[ I_{\pi_C(\psi'(D)),t_2}=1\;|\; I_{\phi'_0(D),t_1}=1 ] \le
2^{2\ell}\Prob[ I_{\pi_C(\psi'(D)),t_2}=1\;|\;
I_{\pi_A(\phi'_0(D)),\pi_A(t_1)}=1 ].
\]

Therefore, continuing \eqref{eq:factors}, we have
\begin{align*}
  \Prob[ t\in \psi''(D) ] &\le \Prob[ \pi_A(t_1)\in \pi_A(\phi'_0(D))
  ] \cdot 2^{n(n+2)}N^{-|A^*\setminus A|} \cdot
  2^{2\ell}\Prob[ t_2\in \pi_C(\psi'(D)) \;|\; \pi_A(t_1)\in \pi_A(\phi'_0(D)) ]\\
  &\le
  2^{n(n+2)+2\ell}N^{-|A^*\setminus A|}\cdot \Prob[ t_2\in \pi_C(\psi'(D)) ]\\
  &= 2^{n(n+2)+2\ell}N^{-|A^*\setminus A|}\cdot \Exp[
  |\pi_C(\psi'(D))| ] N^{-|C|}.
\end{align*}
The last equality follows from the fact that the probability $\Prob[ t_2
\in \pi_C(\pi'(D)) ]$ is the same for every fixed $t_2$ by symmetry.
Thus the expected size of $\psi''(D)$ is
\begin{align*}
\Exp[ |\psi''(D)| ] &=N^{|A_{\psi''}|}\cdot 
2^{n(n+2)+2\ell}N^{-|A^*\setminus A|}\cdot
\Exp[ |\pi_C(\psi'(D))| ] N^{-|C|}\\
&=
2^{n(n+2)+2\ell}  \Exp[ |\pi_C(\psi'(D))| ]\\ 
&\le 
2^{n(n+2)+2\ell}  \Exp[ |\psi'(D)| ].
\end{align*}

\section{Conclusions}
We have conducted a theoretical study of database queries from the
viewpoint of bounding or estimating the size of the answer. In the
worst case model, we showed that the fractional edge cover number, or
more generally, the solutions of certain linear programs can be used
to obtain fairly tight bounds. In the random database model, we
investigated bounds on the expected size and whether the number of
solutions is well concentrated around the expectation. Perhaps the
most interesting message of the paper is that from the viewpoint of
worst-case size, join-project plans can be significantly more efficent
than join plans for the same query, while in the average case model
every join-project plan can be turned into a join plan with only a
bounded loss of performance.

Let us mention two possible directions in which our results could be
further developped. First, one can introduce functional dependencies
into the model and generalize the bounds to take these restrictions on
the relations into account. This has been investigated in
\cite{GottlobLeeValiant2009} and \cite{ValiantValiant2010}, but the
problem has not been fully resolved yet. Another direction to
investigate is to understand concentration bounds for join-project
plans. In particular, Theorem~\ref{thm:removeproject} states a bound
only on the expected value, but it does not give an upper bound on
size that holds with high probability. It remains a challenging
problem to prove a variant of Theorem~\ref{thm:removeproject} saying
that if every subplan of the join-project plan has bounded size with
high probability, then there is also a join plan with this property.

\bibliographystyle{plain}
\bibliography{queries-revised}

\newpage
\appendix

\end{document}